\documentclass[11pt]{article}

\usepackage{microtype}%
\usepackage{comment}

\bibliographystyle{alpha}%
\usepackage{amssymb}
\usepackage{amsmath}
\usepackage{amsthm}
\usepackage{fullpage}
\usepackage{color}  %

\title{The Robustness of LWPP and WPP,\\with an Application to Graph
  Reconstruction}
\author{Edith Hemaspaandra\\
Department of Computer Science\\Rochester Institute of Technology\\Rochester, NY 14623, USA
\and
Lane A. Hemaspaandra\thanks{This work was done in part
    while on a sabbatical stay at ETH Z\"{u}rich's Department of
    Computer Science, generously supported
    by that department.}\\
Department of Computer Science\\University of Rochester\\ Rochester, NY
  14627, USA
\and
Holger Spakowski\thanks{This work was done in part while visiting the University of Rochester.}\\
Department of Mathematics and Applied Mathematics\\
  University of Cape Town\\Rondebosch 7701, South Africa
\and
Osamu Watanabe\\
Dept.\ of Mathematical and Computing Sciences\\Tokyo Institute of Technology\\
Tokyo 152-8552, Japan}

\date{November 3, 2017; revised
July 6, 2018}

\newcommand{\ceqp}{\mbox{$\rm C_{=}P$}}
\newcommand{\coceqp}{\mbox{$\rm coC_{=}P$}}
\newcommand{\indexedlwpp}[1]{\mbox{\rm $#1\mhyphen\lwpp $}}
\newcommand{\indexedceqp}[1]{\mbox{\rm $#1\mhyphen\ceqp $}}
\newcommand{\indexedwpp}[1]{\mbox{\rm $#1\mhyphen\wpp $}}
\newcommand{\sigmastar}{\Sigma^*}
\newcommand{\eqp}{\mbox{\rm EQP}}
\newcommand{\pe}{\mbox{\rm P}}
\newcommand{\MOD}{\mbox{\rm MOD}}

\newcommand{\upleqtwo}{\mbox{\rm UP}_{\leq 2}}

\newcommand{\lwpp}{\mbox{\rm LWPP}}
\newcommand{\wpp}{\mbox{\rm WPP}}
\newcommand{\spp}{\mbox{\rm SPP}}
\newcommand{\ld}{\mbox{\rm Legitimate Deck}}
\newcommand{\twolwppplus}{\mbox{\rm Two-$\lwpp^+$}}
\newcommand{\np}{\mbox{\rm NP}}
\newcommand{\pp}{\mbox{\rm PP}}
\newcommand{\sharpp}{\mbox{\rm \#P}}
\newcommand{\gapp}{\mbox{\rm GapP}}
\newcommand{\fp}{\mbox{\rm FP}}
\newcommand{\range}{\mbox{\rm range}}
\newcommand{\indexedpolylwpp}{\mbox{\rm
    ${\mathit{Poly}}\mhyphen\lwpp$}}
\newcommand{\indexedpolyceqp}{\mbox{\rm
    ${\mathit{Poly}}\mhyphen\ceqp$}}
\newcommand{\indexedloglwpp}{\mbox{\rm ${\mathit{Log}}\mhyphen\lwpp$}}
\newcommand{\indexedpolywpp}{\mbox{\rm ${\mathit{Poly}}\mhyphen\wpp$}}
\newcommand{\indexedexplwpp}{\mbox{\rm
    ${\mathit{Exp}}\mhyphen\lwpp$}}
\newcommand{\indexedexpwpp}{\mbox{\rm
    ${\mathit{Exp}}\mhyphen\wpp$}}

\newcommand{\multipliedlwpp}[1]{\mbox{\rm ${#1}\mhyphen\widehat{\lwpp} $}}

\newcommand{\integers}{\ensuremath{{ \mathbb{Z} }}}
\newcommand{\naturalnumbers}{\ensuremath{{ \mathbb{N} }}}
\newcommand{\positivenumbers}{\ensuremath{{ \mathbb{N}^{+} }}}
\newcommand{\bigoh}{{\mathcal{O}}}
\newcommand{\calo}{{\mathcal{O}}}
\mathchardef\mhyphen="2D
  \newtheorem{theorem}{Theorem}[section]

\newtheorem{claim}[theorem]{Claim}
\newtheorem{proposition}[theorem]{Proposition}
\newtheorem{conjecture}[theorem]{Conjecture}
  \newtheorem{corollary}[theorem]{Corollary}

  \newtheorem{lemma}[theorem]{Lemma}

  \newtheorem{definition}[theorem]{Definition}

\newtheorem{closurep}[theorem]{Closure Property}

\newcommand{\confl}{{{\rm conflicting}}}
\newcommand{\gap}{\mbox{\rm gap}}
\newcommand{\accept}{\mbox{\rm acc}}
\newcommand{\rej}{\mbox{\rm rej}}
\newcommand{\mono}{\mbox{\rm mono}}
\newcommand{\sign}{\mbox{\rm sign}}
\newcommand{\degree}{\mbox{\rm deg}}
\newcommand{\charfun}{\chi}

\newcommand{\pcount}{\mbox{\rm PCount}}
\newcommand{\gi}{\mbox{\rm GI}}

\begin{document}
\sloppy
\maketitle

\begin{abstract}
We show that the counting class 
LWPP~\cite{fen-for-kur:j:gap}
remains unchanged even if one allows a polynomial number of 
gap values rather than one.
On the other hand, we show that
it is impossible to improve
this from polynomially many gap values to a superpolynomial number of gap values
by relativizable proof techniques.

The first of these results implies that the Legitimate Deck Problem
(from the study of graph reconstruction) is in LWPP (and thus
low for PP, i.e., 
$\rm PP^{\mbox{\scriptsize{}Legitimate Deck}} = PP$) if the weakened
version of the Reconstruction Conjecture holds in which the number of
nonisomorphic preimages is assumed merely to be polynomially bounded.
This strengthens the 1992 result of K\"{o}bler, Sch\"{o}ning, and
Tor\'{a}n~\cite{koe-sch-tor:j:gi-pplow} 
that the Legitimate Deck Problem 
is in LWPP if the Reconstruction Conjecture holds, 
and provides strengthened evidence that 
the Legitimate Deck Problem is not NP-hard.

We additionally show on the one hand that our main LWPP robustness
result also holds for WPP, and also holds even when one allows both
the rejection- and acceptance- gap-value targets to simultaneously be
polynomial-sized lists; yet on the other hand, we show that for the
$\sharpp$-based analog of LWPP the behavior 
much differs in that, in some relativized worlds, even two
target values already yield a richer class than one value does.
Despite that nonrobustness result
for a $\sharpp$-based class, 
we show that the $\sharpp$-based ``exact counting''
class $\ceqp$ remains unchanged even if one allows a 
polynomial number of target values for the 
number of accepting paths of the machine.

\end{abstract}

\section{Introduction}\label{s:intro}
Nothing is more natural than wanting to better understand an object by
knowing what it can and cannot do.  Whether wondering how fast a
(rental?)~car can go in reverse or wondering if $\rm \np^{NP}$ can
without loss of generality be assumed to ask at most one question per
nondeterministic path (as it indeed can, as is implicit in the
quantifier
characterization~\cite{mey-sto:c:word-exponential,wra:j:complete} 
of
$\rm \np^{NP}$), we both in life and as theoreticians want to find how
robust things are.

We are often particularly happy when a class
proves to be quite robust under definitional perturbations.  Such 
robustness on one hand suggests that perhaps there is something 
broadly natural about the class, and on the other hand such robustness
often makes it easier to put the class to use.

This paper shows that the counting classes LWPP and WPP,
defined in 1994 in the seminal work of Fenner, Fortnow, and
Kurtz~\cite{fen-for-kur:j:gap} on gap-based counting classes, are
quite robust.  Even though their definitions are in terms of having the
gap function (the difference between the number of accepting and
rejecting paths of a machine) hit a single target value, we prove 
(in Section~\ref{s:many}) that
one can allow a 
list 
of up to polynomially many target values without
altering the descriptive richness of the class, i.e., without changing
the class.

We then apply this to the question of whether the Legitimate Deck
Problem is in LWPP\@.

\emph{The Legitimate Deck Problem} (a formal 
definition will be given in 
Section~\ref{s:app}) is the decision problem of 
determining whether, given a multiset
of (unlabeled) graphs, there exists a graph $G$ such that that
multiset is precisely (give or take isomorphisms) the multiset of
one-node-deleted subgraphs of $G$ (a.k.a.~the \emph{deck} of
$G$)~\cite{hem-kra:j:legitimate-deck}.  \emph{The Reconstruction
Conjecture}~\cite{kelly:thesis:trans,ulam:b:mat-prob}---which in the
wake of the resolution of the Four-Color Conjecture was declared by
the editorial board of the Journal of Graph Theory to be the foremost
open problem in graph
theory~\cite{edi:j:graph-reconstruction}---states that every graph
with three or more nodes is uniquely determined (give or take
isomorphisms) by its multiset of one-node-deleted subgraphs.  The
Legitimate Deck Problem was defined in 1978 by
Nash-Williams~\cite{nas:b:sel-top-graph}, in his paper that framed the
algorithmic/complexity issues related to reconstructing graphs---such
as telling whether a given deck is legitimate (i.e., is the deck of
some graph).

Our application of our LWPP robustness result to the question of 
whether the Legitimate Deck Problem is in LWPP is the following.
The strongest previous
evidence of the simplicity of the Legitimate Deck Problem is the 1992
result of K\"{o}bler, Sch\"{o}ning, and
Tor\'{a}n~\cite{koe-sch-tor:j:gi-pplow} that the Legitimate Deck
Problem is in LWPP (and thus is PP-low,
i.e.,
$\rm PP^{\mbox{\scriptsize{}Legitimate Deck}} = PP$)
if the
Reconstruction Conjecture 
(i.e., that each deck
whose elements all have at least two nodes has at most 
\emph{one} preimage, give or take isomorphisms) 
holds.
Using this paper's main robustness result as a tool, 
Section~\ref{s:app} proves that
the Legitimate Deck Problem is in LWPP (and thus is PP-low)
if a weakened
version of the Reconstruction Conjecture holds, namely, that each
deck has at most \emph{a polynomial number} of nonisomorphic 
preimages.

This weakened version is not known to be equivalent to the
Reconstruction Conjecture itself.  And so our result 
for the first time gives
a path to proving that the Legitimate Deck Problem is PP-low that does
not require one to, on the way, resolve the foremost open problem in
graph theory~\cite{edi:j:graph-reconstruction}.

We started this section by noting that it is natural to want to know both
flexibilities and limitations of classes.  Our main result is about
flexibility: going from one target gap to instead a polynomial number.
But are we leaving money on the table?  Could we extend our 
result to slightly superpolynomial numbers of target gaps, or even
to exponential numbers of target gaps?  In Section~\ref{s:opt-body-version-TR} 
we note that if the robustness of LWPP 
were to 
hold
up to exponentially many
target gaps, then 
$\np$
would be in LWPP and so
would be PP-low (i.e., $\rm PP^{NP} = NP$); 
yet 
NP is widely suspected not to be PP-low.  We also,
by encoding nondeterministic
oracle Turing machines by low-degree multivariate polynomials
so as to capture the gap functions of those machines,
show 
an oracle relative to which robustness fails for all 
superpolynomial numbers of target gaps; thus, no extension beyond 
this paper's polynomial-number-of-target-gaps robustness result 
for LWPP can be proven by a relativizable proof.  
And for the $\sharpp$-based analogue of LWPP, in Section~\ref{s:body-tr-lwppplus}
we show that 
even allowing two target values yields, in some relativized worlds, a richer class
than one target value.
For the important 
$\sharpp$-based counting class 
$\ceqp$, however, 
we prove in Section~\ref{s:ceqp} 
that the class does not change even if 
one allows not just one but instead 
a polynomial number of 
target values for the number of accepting paths of the underlying
$\sharpp$ function.
We also (in the final part of
Section~\ref{s:many}) 
extend our 
main result to 
show that the simultaneous expansion to polynomial-sized 
lists of both the acceptance \emph{and rejection} target-gap 
lists still, for LWPP and WPP, yields the classes LWPP and $\wpp$.

To summarize: 
In this paper, we prove that LWPP and WPP are robust
enough that they remain unchanged when their single target gap is
allowed to be expanded to a polynomial-sized 
list; we apply this new
robustness of LWPP to show that the PP-lowness of the Legitimate Deck
Problem follows from a weaker hypothesis than was previously known;
we show that our polynomial robustness of LWPP is optimal with
respect to relativizable proofs; and we prove a number of related results 
on limitations and extensions.

\section{Preliminaries}
We first present the definitions of many of the counting classes 
that we will
be speaking of, taking the definitions from the seminal
paper of Fenner, Fortnow, and Kurtz~\cite{fen-for-kur:j:gap}.

\begin{definition}[\cite{fen-for-kur:j:gap}]
\begin{enumerate}
\item For each nondeterministic polynomial-time Turing machine $N$, 
the function $\accept_N: \sigmastar \rightarrow \naturalnumbers$ is defined
such that for every $x\in \sigmastar$, $\accept_N(x)$ equals the number
of accepting computation paths of $N$ on input $x$.
\item For each nondeterministic polynomial-time Turing machine $N$, 
the function $\rej_N: \sigmastar \rightarrow \naturalnumbers$ is defined
such that for every $x\in \sigmastar$, $\rej_N(x)$ equals the number
of rejecting computation paths of $N$ on input $x$.
\item For each nondeterministic polynomial-time Turing machine $N$,
the function $\gap_N: \sigmastar \rightarrow \integers$ is defined
such that for every $x\in \sigmastar$, 
\[ 
   \gap_N(x) = \accept_N(x) - \rej_N(x).
\]
\end{enumerate}
\end{definition}

\begin{definition}[\cite{fen-for-kur:j:gap}]
\[
 \gapp = \{ \gap_N ~|~ N \text{ is a polynomial-time nondeterministic
    Turing machine} \}. 
\]
\end{definition}

\begin{definition}[\cite{hem-ogi:j:closure,fen-for-kur:j:gap}]
$\spp$ is the class of all sets $A$ such that there exists a $\gapp$
  function $g$ such that for all $x\in\sigmastar$,
\begin{align*}
  x \in A & \Longrightarrow g(x) = 1\\
  x \notin A & \Longrightarrow g(x) = 0.
\end{align*}
\end{definition} 

The following class, $\wpp$, is potentially larger than
$\spp$.  Instead of the ``target value''~1 for the case $x \in A$, we
allow a target value $f(x)$, where $f$ may be any polynomial-time
computable function whose image does not contain 0.  $\fp$ denotes
the class of polynomial-time computable functions.

\begin{definition}[\cite{fen-for-kur:j:gap}]
$\wpp$ is the class of all sets $A$ such that there exists a $\gapp$
  function $g$ and a function $f \in \fp$ that maps from $\sigmastar$ to
  $\integers - \{ 0\}$  such that for all $x\in\sigmastar$,
\begin{align*}
  x \in A & \Longrightarrow g(x) = f(x)\\
  x \notin A & \Longrightarrow g(x) = 0.
\end{align*}
\end{definition} 

The class $\lwpp$ is the same as $\wpp$ except that the ``target
function'' $f$ may depend on only the \emph{length} of the input.

\begin{definition}[\cite{fen-for-kur:j:gap}] \label{def:lwpp}
$\lwpp$ is the class of all sets $A$ such that there exists a $\gapp$
  function $g$ and a function $f \in \fp$ that maps from $0^*$ to
  $\integers - \{ 0\}$  such that for all $x\in\sigmastar$,
\begin{align*}
  x \in A & \Longrightarrow g(x) = f(0^{|x|})\\
  x \notin A & \Longrightarrow g(x) = 0.
\end{align*}
\end{definition} 
Here, as usual, for any $x \in\sigmastar$, $|x|$ denotes the
length of $x$, and for any $n \in\naturalnumbers$, $0^n$ is the string
consisting of exactly $n$ zeroes.

We now generalize the definition of $\lwpp$ to the case of having the target of 
the $\gapp$ function be, for members of the set, not a single value but 
a collection of values.

One might expect us to formalize this 
by simply having the polynomial-time computable 
``what is the target'' function output 
a \emph{list} of the 
nonzero-integer targets.
That would work fine 
and be equivalent to what we are about to do, as long as we are 
dealing with lists having at most a polynomial number of elements.  
However, to be able to speak of even longer lists---as will be 
important in our negative results offsetting our main result---we 
use an indexing 
approach, as follows.

\begin{definition}  \label{def:lwpp-indexed}
Let $r$ be any function mapping from $\naturalnumbers$ to
$\naturalnumbers$.
Then the class
$\indexedlwpp{r}$ is the class of all sets $A$ such that there exists a $\gapp$
  function $g$ and a function $f \in \fp$ that maps
 to
  $\integers - \{ 0\}$  such that for each $x\in\sigmastar$,
\begin{align*}
  x \in A & \Longrightarrow \text{ there exists } i \in \{ 1, 2,
  \ldots , r(|x|)\} \text{ such that } g(x) = f(\langle 0^{|x|}, i \rangle)
\\
  x \notin A & \Longrightarrow g(x) = 0.
\end{align*}
\end{definition}

The following class, $\indexedpolylwpp$,
will be central to this paper:
Our main 
result is that this class in fact equals $\lwpp$.  The ``$+\,c$''
in Definition~\ref{def:indexedpolylwpp}
may seem strange at first.  But without it we would have 
a boundary-case pathology at $n=0$, namely, 
the class could not contain any set that 
contains the empty string.\footnote{The ``$+\,c$''
in Definition~\ref{def:indexedpolylwpp}
also, on its surface, would seem to make a difference
at length 1, by allowing lists of size greater
than one; however, one could work around that issue. 
In contrast,
the exclusion of the empty string 
would not be avoidable if our class of polynomials were to be
a class---such as $n^c$---such that all of its members evaluate to $0$ 
at $n=0$.  In any case, our use of $n^c + c$ avoids any special 
worries at lengths $0$ and $1$. And since for every polynomial $p$ there 
is a $c$ such that $(\forall n \in\naturalnumbers)[ p(n) \leq n^c + c]$,
using 
polynomials just of the form $n^c+c$ is in fact not a restriction.}

\begin{definition} \label{def:indexedpolylwpp}
\[
   \indexedpolylwpp = \bigcup_{c \in
     \positivenumbers} \indexedlwpp{(n^c+c)}.
\]
\end{definition}

It is easy to see that 
$\indexedlwpp{1} = \lwpp$, and that, of course,
more flexibility as to targets never removes sets from
the class, i.e., speaking loosely for the moment as to notation
(and the log case will not be defined or used again 
in this paper, but it is clear from 
context here what we mean by it; the exponential case 
will be defined only in Section~\ref{s:opt-body-version-TR}),
$
\indexedlwpp{1} \subseteq 
\indexedlwpp{2} \subseteq 
\indexedlwpp{3} \subseteq 
\dots \subseteq 
\indexedloglwpp  \subseteq 
\indexedpolylwpp  \subseteq 
\indexedexplwpp.$  As mentioned above, in this paper we will prove that 
the first five of these ``$\subseteq$''s are in fact all equalities.
We will also prove that 
the sixth ``$\subseteq$'' cannot be an equality unless 
NP is PP-low.

We now show that for every function $r$, $\indexedlwpp{r}$ is contained in the co-class
of the well-known counting class $\ceqp$.

\begin{definition}[\cite{sim:thesis:complexity,wag:j:succinct}]\label{d:ceqp-def}
 $\ceqp$ is the class of all sets $A$ such that there is a
  nondeterministic polynomial-time Turing machine $N$ and a function
  $f \in \fp$ such that for each $x\in\sigmastar$, 
  \[
     x \in A \Longleftrightarrow \accept_N(x) = f(x).
  \]
\end{definition} 

More convenient for us is the following characterization of $\ceqp$
using $\gapp$ functions.

\begin{theorem}[\cite{fen-for-kur:j:gap}]  \label{thm:ceqp-as-gap}
For each $A \subseteq \sigmastar$, $A \in \ceqp$ if and only if there exists a
function $g\in \gapp$ such that for all $x \in \sigmastar$,
\[
  x \in A \Longleftrightarrow g(x) = 0.
\]
\end{theorem}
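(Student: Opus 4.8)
The plan is to prove the two implications separately, using only standard closure and normal-form facts about $\gapp$ from~\cite{fen-for-kur:j:gap}: that $\fp \subseteq \gapp$ and $\sharpp \subseteq \gapp$; that $\gapp$ is closed under subtraction; and that every $g \in \gapp$ can be written as $g = g_1 - g_2$ with $g_1, g_2 \in \sharpp$, where (after padding to a common length) $g_i(x) = \#\{ y \in \{0,1\}^{p(|x|)} : M_i \text{ accepts } \langle x, y\rangle\}$ for a single polynomial $p$ and two polynomial-time predicates $M_1, M_2$. For the ``only if'' direction I would start from an $N$ and $f \in \fp$ witnessing $A \in \ceqp$ as in Definition~\ref{d:ceqp-def}, so that $x \in A \iff \accept_N(x) = f(x)$, and set $g(x) = \accept_N(x) - f(x)$. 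Then $g \in \gapp$ because $\accept_N \in \sharpp \subseteq \gapp$, $f \in \fp \subseteq \gapp$, and $\gapp$ is closed under subtraction, and by construction $g(x) = 0 \iff \accept_N(x) = f(x) \iff x \in A$, which is exactly the condition required.

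For the ``if'' direction I would take $g \in \gapp$ with $x \in A \iff g(x) = 0$ and write $g = g_1 - g_2$ in the $\sharpp$-witness form above over a common length $p$. Since the $M_i$ are total, $2^{p(|x|)} - g_2(x)$ counts the $y$'s for which $M_2$ \emph{rejects} $\langle x, y\rangle$, so the machine $N'$ that on input $x$ guesses a bit $b$ and a string $y \in \{0,1\}^{p(|x|)}$ and accepts iff ($b = 0$ and $M_1$ accepts $\langle x, y\rangle$) or ($b = 1$ and $M_2$ rejects $\langle x, y \rangle$) has $\accept_{N'}(x) = g_1(x) + 2^{p(|x|)} - g_2(x)$. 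Hence $\accept_{N'}(x) = 2^{p(|x|)} \iff g_1(x) = g_2(x) \iff g(x) = 0 \iff x \in A$, so with $f'(x) = 2^{p(|x|)} \in \fp$ the pair $(N', f')$ witnesses $A \in \ceqp$ via Definition~\ref{d:ceqp-def}.

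Since this is a known result of~\cite{fen-for-kur:j:gap}, I do not expect a deep obstacle; the one point needing care is the normalization in the ``if'' direction --- one must first move to the two-sided $\sharpp$-witness form with a \emph{common} witness length (padding the shorter side) and ensure the predicates $M_1, M_2$ are total and deterministic, so that ``$M_2$ rejects'' has exactly $2^{p(|x|)} - g_2(x)$ witnesses. A minor cosmetic issue: in Definition~\ref{d:ceqp-def} one may assume without loss of generality that $f$ maps into $\naturalnumbers$ (if $f(x) < 0$ then $\accept_N(x) = f(x)$ is impossible and $x \notin A$ in any case), and the ``only if'' argument above in fact needs no such assumption, since $\fp \subseteq \gapp$ independently of the sign of $f$.
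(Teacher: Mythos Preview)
Your argument is correct and is essentially the standard proof of this characterization. Note, however, that the paper does not give its own proof of Theorem~\ref{thm:ceqp-as-gap}: the result is simply quoted from~\cite{fen-for-kur:j:gap} as a known characterization of $\ceqp$, so there is no in-paper proof to compare against.
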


We thus certainly have the following, which holds simply
by
taking the $\gapp$ function $g$ required in Theorem~\ref{thm:ceqp-as-gap} 
to be the same as the function $g$ in Definition~\ref{def:lwpp-indexed}.

\begin{theorem}
  For each function $r: \naturalnumbers \rightarrow \naturalnumbers$,
  $\indexedlwpp{r} \subseteq \coceqp$.
\end{theorem}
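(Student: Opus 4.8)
The plan is to observe that membership in $\indexedlwpp{r}$ already forces the underlying $\gapp$ function to witness co-$\ceqp$ membership directly, with no modification needed, so the proof reduces to matching definitions against the $\gapp$-based characterization of $\ceqp$ from Theorem~\ref{thm:ceqp-as-gap}.

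Concretely, suppose $A \in \indexedlwpp{r}$, and fix the $\gapp$ function $g$ and the function $f \in \fp$ (with range contained in $\integers - \{0\}$) guaranteed by Definition~\ref{def:lwpp-indexed}. First I would record the two one-directional implications that the definition supplies: if $x \notin A$ then $g(x) = 0$; and if $x \in A$ then $g(x) = f(\langle 0^{|x|}, i\rangle)$ for some $i \in \{1,\dots,r(|x|)\}$. The only point that requires (a trivial amount of) care is extracting a biconditional from these: since $f$ maps into $\integers - \{0\}$, the second implication yields $g(x) \neq 0$ whenever $x \in A$, and hence, combining the contrapositive with the first implication, we get the clean equivalence $x \notin A \iff g(x) = 0$ for every $x \in \sigmastar$. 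Equivalently, writing $\overline{A}$ for the complement of $A$, we have $x \in \overline{A} \iff g(x) = 0$.

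I would then invoke Theorem~\ref{thm:ceqp-as-gap} with the set $\overline{A}$ and this same $g \in \gapp$: that theorem states precisely that a set $B$ is in $\ceqp$ iff there is a $\gapp$ function whose zero set is $B$, so we conclude $\overline{A} \in \ceqp$, i.e., $A \in \coceqp$. Since $A$ was an arbitrary member of $\indexedlwpp{r}$, this gives $\indexedlwpp{r} \subseteq \coceqp$, as desired. Note that nothing about the bound $r$ is used beyond the fact that the ``yes'' case produces a \emph{nonzero} gap value; the argument is uniform in $r$, which is exactly why the statement holds for every $r: \naturalnumbers \rightarrow \naturalnumbers$.

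There is essentially no obstacle here: the whole content is the (definitional) observation that the list-of-targets relaxation still keeps the gap away from $0$ on ``yes'' instances and exactly at $0$ on ``no'' instances, so the same $\gapp$ function that witnesses $\indexedlwpp{r}$ membership simultaneously witnesses $\coceqp$ membership. The only thing one must not skip is the appeal to $f$'s range excluding $0$, which is what upgrades the implication in the ``yes'' case to the biconditional that Theorem~\ref{thm:ceqp-as-gap} requires.
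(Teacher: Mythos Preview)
Your proposal is correct and takes essentially the same approach as the paper: use the very same $\gapp$ function $g$ from Definition~\ref{def:lwpp-indexed} as the witness in Theorem~\ref{thm:ceqp-as-gap}. The paper states this in one line; your version simply makes explicit the (easy but necessary) point that $f$'s range avoiding $0$ is what converts the two one-sided implications into the biconditional $x \notin A \iff g(x) = 0$.
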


\section{Main Result:  LWPP Stays the Same If for Accepted Inputs We Allow
  Polynomially Many Gap Values Instead of One}\label{s:many}

We now state our main result: LWPP altered to 
allow even a polynomial number of 
target gap values is still LWPP (i.e., 
with just one target gap value).

\begin{theorem} \label{thm:lwpp-robust-poly}
$\indexedpolylwpp = \lwpp$.
\end{theorem}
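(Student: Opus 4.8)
The plan is to establish the nontrivial inclusion $\indexedpolylwpp \subseteq \lwpp$; the reverse inclusion is immediate from the chain of inclusions noted just before Definition~\ref{def:indexedpolylwpp} together with the observation there that $\indexedlwpp{1} = \lwpp$. So fix $A \in \indexedpolylwpp$, and fix $c \in \positivenumbers$ and witnesses $g \in \gapp$, $f \in \fp$ (with $f$ mapping into $\integers - \{0\}$) showing $A \in \indexedlwpp{(n^c+c)}$. For a length $n$ put $m(n) = n^c + c$ and, for $1 \le j \le m(n)$, abbreviate $v_j(n) = f(\langle 0^n, j\rangle)$ (each a nonzero integer whose binary length is polynomial in $n$). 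By hypothesis $g(x) \in \{ v_1(|x|), \dots, v_{m(|x|)}(|x|) \}$ whenever $x \in A$, and $g(x) = 0$ whenever $x \notin A$.

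The key idea is to collapse the list of permitted targets $v_1(n), \dots, v_{m(n)}(n)$ down to the single target $\prod_{j=1}^{m(n)} v_j(n)$ by composing $g$ with a length-indexed integer polynomial. Consider $P_n(y) = \prod_{j=1}^{m(n)} v_j(n) - \prod_{j=1}^{m(n)} \bigl( v_j(n) - y \bigr)$, which has integer coefficients; the point of subtracting the second product from the constant $\prod_{j} v_j(n)$ (rather than writing a Lagrange-style interpolant directly) is precisely that this normalization keeps the coefficients integral. One checks that $P_n(0) = 0$ and that $P_n(v_k(n)) = \prod_{j=1}^{m(n)} v_j(n)$ for every $k$, the latter because the $j=k$ factor of the second product vanishes. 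I would therefore set $g'(x) = P_{|x|}(g(x)) = \prod_{j=1}^{m(|x|)} f(\langle 0^{|x|}, j\rangle) - \prod_{j=1}^{m(|x|)} \bigl( f(\langle 0^{|x|}, j\rangle) - g(x) \bigr)$ and $f'(0^n) = \prod_{j=1}^{m(n)} f(\langle 0^n, j\rangle)$. Then $f' \in \fp$ maps $0^*$ into $\integers - \{0\}$ (a product of nonzero integers is nonzero, and it is computable in time polynomial in $n$ from $0^n$), and: if $x \notin A$ then $g(x) = 0$, so the two products defining $g'(x)$ are equal and $g'(x) = 0$; while if $x \in A$ then $g(x) = v_{j_0}(|x|)$ for some $j_0$, the second product vanishes, and $g'(x) = f'(0^{|x|})$. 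Hence $g'$ and $f'$ witness $A \in \lwpp$, provided $g' \in \gapp$.

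Checking $g' \in \gapp$ is the one step that needs care, and the place I expect the real work to be; it uses only standard closure properties of $\gapp$ from~\cite{fen-for-kur:j:gap}. First, every $\fp$ function regarded as mapping into $\integers$ is a $\gapp$ function (its values have polynomial binary length, so a polynomial-time nondeterministic machine can be built whose gap on $x$ equals that value), so $\langle x, j\rangle \mapsto f(\langle 0^{|x|}, j\rangle)$ lies in $\gapp$; since $g \in \gapp$ and $\gapp$ is closed under subtraction, $\langle x, j\rangle \mapsto f(\langle 0^{|x|}, j\rangle) - g(x)$ lies in $\gapp$. The crucial fact is that $\gapp$ is closed under products of polynomially many of its functions --- a product machine guesses one computation path of each of the (polynomially many) factor machines and accepts exactly when an even number of the guessed paths are rejecting paths --- which applies here because $m(|x|) = |x|^c + c$ is a fixed polynomial; this gives $x \mapsto \prod_{j=1}^{m(|x|)} \bigl( f(\langle 0^{|x|}, j\rangle) - g(x) \bigr) \in \gapp$. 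Likewise $x \mapsto \prod_{j=1}^{m(|x|)} f(\langle 0^{|x|}, j\rangle)$ is a product of polynomially many $\fp$ values, hence an $\fp$ function, hence in $\gapp$. A final use of closure under subtraction yields $g' \in \gapp$. I would also note that the ``$+\,c$'' in $m(n) = n^c + c$ guarantees $m(n) \ge 1$ for all $n$, so the products above are always nonempty and no boundary pathology at small lengths arises, and that the entire argument relativizes.
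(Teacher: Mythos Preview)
Your proof is correct and takes essentially the same approach as the paper: form the product $\prod_j (f(\langle 0^{|x|},j\rangle) - g(x))$, subtract it from (or, in the paper, subtract from it) the constant product $\prod_j f(\langle 0^{|x|},j\rangle)$, and invoke the standard $\gapp$ closure properties (subtraction and polynomially-indexed products) from~\cite{fen-for-kur:j:gap}. The only difference is a global sign---the paper's target function is $-\prod_j f(\langle 0^\ell,j\rangle)$ where yours is $+\prod_j f(\langle 0^\ell,j\rangle)$---which is immaterial.
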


For the proof, we need the following closure properties shown by
Fenner, Fortnow, and Kurtz~\cite{fen-for-kur:j:gap}.\footnote{%
We  mention in passing that, regarding 
Closure Property~\ref{closure4},
if the polynomial $q$ were allowed to have coefficients that 
are uncomputable---or that are extremely expensive to compute prefixes of 
the values of---real numbers, that claimed closure property 
might fail; we here are, as is typical in such settings, tacitly
assuming that the polynomials 
have rational 
coefficients.}
For function classes ${\mathcal F}_1$ and ${\mathcal F}_2$, 
${\mathcal F}_1 \circ {\mathcal F}_2 = 
\{ f_1 \circ f_2 ~|~ f_1\in {\mathcal F}_1 \land f_2 \in {\mathcal F}_2\}$,
where $\circ$ denotes composition.

\begin{closurep}[\cite{fen-for-kur:j:gap}]  \label{closure1}
   $\gapp \circ \fp = \gapp$ and $\fp \subseteq \gapp$.
\end{closurep}

\begin{closurep}[\cite{fen-for-kur:j:gap}]  \label{closure2}
   If $g \in \gapp$ then $-g \in \gapp$.
\end{closurep}

\begin{closurep}[\cite{fen-for-kur:j:gap}]   \label{closure4}
  If $g \in \gapp$ and $q$ is a polynomial, then the function
  \[
     h(x) = \prod_{0 \le i \le q(|x|)} g(\langle x, i \rangle)
   \]
   is in $\gapp$.
\end{closurep}

\begin{closurep}[\cite{fen-for-kur:j:gap}]   \label{coraddsubmult}
$\gapp$ is closed under addition, subtraction, and multiplication.
\end{closurep}

\begin{proof}[Proof of Theorem~\ref{thm:lwpp-robust-poly}]
As mentioned previously, it is easy to see that
$\lwpp \subseteq \indexedpolylwpp$.

To show $\indexedpolylwpp \subseteq \lwpp$, let $A$ be a set in
$\indexedpolylwpp$ defined by $g \in \gapp$, $f \in \fp$, 
and polynomial
$r(n)= n^c+c$ according to Definitions~\ref{def:lwpp-indexed} and~\ref{def:indexedpolylwpp}.

Let $h_1$ be a function such that for all $x\in \sigmastar$ and $i \in\positivenumbers$,
\[
  h_1(\langle x, i \rangle) =  f(\langle 0^{|x|}, i \rangle) -  g(x).
\]
We have $h_1 \in \gapp$ since $f \in\fp \subseteq \gapp$, $g \in\gapp$, and
$\gapp$ is closed under subtraction~\cite{fen-for-kur:j:gap}.
We define $h_2$ such that for all $x\in\sigmastar$,
\[
  h_2(x) = \prod_{1 \leq i \leq r(|x|)} h_1(\langle x, i \rangle).
\]
By Closure Property~\ref{closure4}, $h_2\in \gapp$.
Note that for all $x \in \sigmastar$,
\[
  h_2(x) = \prod_{1 \leq i \leq r(|x|)} \left( f(\langle 0^{|x|}, i \rangle) - g(x) \right).
\]
It follows that for every $x\in \sigmastar$,
\begin{equation}  \label{eq:h2}
  h_2(x) = \begin{cases}
                 0 & \text{if there exists } i \in \{ 1, 2,
  \ldots , r(|x|)\} \text{ such that } g(x) = f(\langle 0^{|x|}, i \rangle)\\
                 \prod_{1 \leq i \leq r(|x|)}  f(\langle 0^{|x|}, i \rangle)  & \text{if } g(x) = 0. 
               \end{cases}
\end{equation}
Now we define the function $\widehat{g}$ such that for all $x \in\sigmastar$,
\[
  \widehat{g}(x) = h_2(x) - \prod_{1 \leq i \leq r(|x|)} f(\langle 0^{|x|}, i \rangle)  .
\]
Using the closure properties, it is easy to see that $\widehat{g} \in \gapp$.

Note that by Eqn.~(\ref{eq:h2}), we have that for all $x \in\sigmastar$,
\begin{equation}   \label{eq:widehatg}
\widehat{g}(x) = \begin{cases}  
               - \prod_{1 \leq i \leq r(|x|)} f(\langle 0^{|x|}, i \rangle)   &  \text{if there exists } i \in \{ 1, 2,
  \ldots , r(|x|)\} \text{ such that } g(x) = f(\langle 0^{|x|}, i \rangle)\\
                0  & \text{if } g(x) = 0. 
               \end{cases}
\end{equation}
Let $\widehat{f}$ be a function such that for all $\ell\in\naturalnumbers$,
\[
  \widehat{f}(0^\ell) =  - \prod_{1 \leq i \leq r(\ell)} f(\langle 0^{\ell}, i \rangle) .
\]
It is easy to see that $\widehat{f} \in \fp$.
Keeping in mind Eqn.~(\ref{eq:widehatg}), it follows from the above
that for every $x\in\sigmastar$,
\[
  \widehat{g}(x) = \begin{cases}
               \widehat{f}(0^{|x|})   & \text{if } x \in A \\
                0  & \text{otherwise}. 
               \end{cases}
\] 
Since $\widehat{g} \in \gapp$ and $\widehat{f} \in \fp$, this implies that $A \in \lwpp$.
\end{proof}

A theorem analogous to Theorem~\ref{thm:lwpp-robust-poly} also holds
for the corresponding class that allows the target values to
depend on the actual input instead on only the length of the input.

\begin{definition}
\[
   \indexedpolywpp = \bigcup_{c \in
     \positivenumbers} \indexedwpp{(n^c+c)}.
\]
\end{definition}

\begin{theorem} \label{thm:wpp-robust-poly}
$\indexedpolywpp = \wpp$.
\end{theorem}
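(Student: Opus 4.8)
The plan is to mirror the proof of Theorem~\ref{thm:lwpp-robust-poly} essentially verbatim, tracking the one structural difference: in $\wpp$ the target function $f$ may depend on the whole input $x$ rather than just on $0^{|x|}$. First I would observe that the easy inclusion $\wpp \subseteq \indexedpolywpp$ holds for the same reason as in the LWPP case (take $r \equiv 1$). For the reverse inclusion, let $A \in \indexedpolywpp$ be witnessed by a $\gapp$ function $g$, a function $f \in \fp$ mapping into $\integers - \{0\}$, and a polynomial $r(n) = n^c + c$, so that for $x \in A$ there is an $i \in \{1,\dots,r(|x|)\}$ with $g(x) = f(\langle x, i\rangle)$, and for $x \notin A$ we have $g(x) = 0$.

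Next I would carry out the same product construction. Define $h_1(\langle x,i\rangle) = f(\langle x,i\rangle) - g(x)$, which lies in $\gapp$ by Closure Property~\ref{closure1} ($\fp \subseteq \gapp$) and Closure Property~\ref{coraddsubmult} (closure under subtraction). Then set $h_2(x) = \prod_{1 \le i \le r(|x|)} h_1(\langle x,i\rangle)$, which is in $\gapp$ by Closure Property~\ref{closure4}. As in the LWPP proof, $h_2(x) = 0$ exactly when some index $i$ hits the target, and $h_2(x) = \prod_{1 \le i \le r(|x|)} f(\langle x,i\rangle)$ when $g(x) = 0$. Define $\widehat{g}(x) = h_2(x) - \prod_{1 \le i \le r(|x|)} f(\langle x,i\rangle)$; this is in $\gapp$ by the closure properties (the product $\prod_i f(\langle x,i\rangle)$ is itself a $\gapp$ function since $f \in \fp \subseteq \gapp$ and $\gapp$ is closed under the indexed product of Closure Property~\ref{closure4}). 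Then $\widehat{g}(x) = -\prod_{1 \le i \le r(|x|)} f(\langle x,i\rangle)$ if $x \in A$ and $\widehat{g}(x) = 0$ if $x \notin A$.

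The crucial point where this departs from the LWPP argument: I would define $\widehat{f}(x) = -\prod_{1 \le i \le r(|x|)} f(\langle x,i\rangle)$ as a function of the full input $x$, and note $\widehat{f} \in \fp$. Because each $f(\langle x,i\rangle)$ is a nonzero integer and there are only finitely many (indeed at most $r(|x|)$) factors, the product is a nonzero integer, so $\widehat{f}$ maps $\sigmastar$ into $\integers - \{0\}$, as required by the definition of $\wpp$. Then for all $x$, $\widehat{g}(x) = \widehat{f}(x)$ if $x \in A$ and $\widehat{g}(x) = 0$ otherwise, so $A \in \wpp$ via $\widehat{g}$ and $\widehat{f}$.

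I do not expect any real obstacle here; the whole point is that allowing $f$ to depend on $x$ rather than $0^{|x|}$ changes nothing in the construction, since the product-target function $\widehat{f}$ is built pointwise from $f$ and inherits polynomial-time computability and nonvanishing. The only thing to double-check is that the $\langle x, i\rangle$ pairing used in the product is the same one used in the $\indexedwpp{r}$ definition, and that the boundary case $n = 0$ (where $r(0) = c \ge 1$) causes no trouble — which it does not, exactly as in the LWPP proof. Hence the statement follows by the same line of reasoning as Theorem~\ref{thm:lwpp-robust-poly}.
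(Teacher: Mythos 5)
Your proposal is correct and is exactly the adaptation the paper intends: the paper's own ``proof'' of this theorem is just the remark that one repeats the argument for Theorem~\ref{thm:lwpp-robust-poly} while letting the target function depend on the whole input, and your construction of $\widehat{g}$ and $\widehat{f}(x) = -\prod_{1 \le i \le r(|x|)} f(\langle x, i\rangle)$ carries that out faithfully, including the needed observation that $\widehat{f}$ stays in $\fp$ and avoids the value $0$.
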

The proof is 
almost exactly the same 
as the proof of
Theorem~\ref{thm:lwpp-robust-poly}, simply taking into 
account the fact that for $\wpp$ the ``gap'' function 
can vary even among inputs of the same length.

We remark that the robustness results stated in Theorems~\ref{thm:lwpp-robust-poly}
and~\ref{thm:wpp-robust-poly} do not seem to in any obvious way follow
as corollaries to the closure of LWPP under polynomial-time Turing 
reductions~\cite{fen-for-kur:j:gap} or of WPP under polynomial-time truth-table reductions~\cite{spa-tha-rah:j:quantum}.

Let us now see whether we can extend Theorems~\ref{thm:lwpp-robust-poly}
and~\ref{thm:wpp-robust-poly}.
Suppose we not only allow the target-gap set for acceptance to 
have polynomially many values, but in addition allow 
the target-gap set for rejection to 
have polynomially many values.  
(In
contrast, both $\lwpp$ and $\indexedlwpp{r}$ allow rejection only when
the gap's value is 0).  
Is $\lwpp$ so 
robust that even \emph{that} class is no larger than $\lwpp$?  
We will now build on our main result to give the answer ``yes'' to 
that question, thus extending our main result to this more symmetric
case for $\lwpp$ and for $\wpp$.

To this end, let us define
$\indexedlwpp{(r_A,r_R)}$ as follows.
(One of course can in the clear, analogous way define a 
similarly loosened
version of $\wpp$, 
$\indexedwpp{(r_A,r_R)}$.)

\begin{definition}  \label{def:rArR-LWPP}
  Let $r_A$ and $r_R$ be any functions mapping from $\naturalnumbers$
  to $\naturalnumbers$.  Then $\indexedlwpp{(r_A,r_R)}$ is
  the class of all sets $B$ such that there exists a $\gapp$ function
  $g$ and functions---each mapping to $\integers$---$f_A \in \fp$ and
  $f_R \in \fp$ such that both of the following hold:
\begin{enumerate}
\item For each $j \in \naturalnumbers$, $A_j \cap R_j = \emptyset$,
  where
  $$A_j = \{ n ~ | ~ (\exists i \in \{1,2,\dots,r_A(j)\})                       
  [f_A(\langle 0^j,i \rangle) = n] \}$$ and
  $$R_j = \{ n ~ | ~ (\exists i \in \{1,2,\dots,r_R(j)\})                       
  [f_R(\langle 0^j,i \rangle) = n] \}.$$
\item
For each $x\in\sigmastar$,
\begin{align*}                                                                  
  x \in B & \Longrightarrow g(x) \in A_{|x|}                                    
\\                                                                              
  x \not\in B & \Longrightarrow g(x) \in R_{|x|}.                               
\end{align*}
\end{enumerate}
\end{definition}
It is not hard to see, via shifting gaps with dummy paths, that the
class
$\indexedlwpp{(r,1)}$ equals $\indexedlwpp{r}$.  However, what can be
shown more generally about the classes $\indexedlwpp{(r_A,r_R)}$?  For
example, for which functions $r_{1,A}$, $r_{1,R}$, $r_{2,A}$, and
$r_{2,A}$ does it hold that
$(r_{1,A},r_{1,R})\mhyphen\lwpp \subseteq
(r_{2,A},r_{2,R})\mhyphen\lwpp$?  (There are some literature notions
that, in different ways, have at least a somewhat similar flavor to
our notion, namely, defining classes by being more flexible regarding
acceptance types.  In particular, the counting classes ${\rm CP}_S$ of
Cai et al.~\cite{cai-gun-har-hem-sew-wag-wec:j:bh2} and the
``C-class'' framework of Bovet, Crescenzi, and
Silvestri~\cite{bov-cre-sil:j:uniform} have such a flavor. But in
contrast with those, our classes here are ones whose definitions are
centered on the notion of gaps.) We do not here undertake 
that general study, but instead resolve what seems the 
most compelling question, namely, we prove that 
$\indexedlwpp{(\mathit{Poly},\mathit{Poly})}  = \lwpp$.

\begin{definition} \label{def:polypolylwpp}
$   \indexedlwpp{(\mathit{Poly}, \mathit{Poly})} = \bigcup_{c \in
     \positivenumbers} \indexedlwpp{(n^c+c, n^c+c)}.$
\end{definition}
\begin{theorem} \label{thm:polypolylwpp-in-polylwpp}
$\indexedlwpp{(\mathit{Poly},\mathit{Poly})} = \indexedpolylwpp$.
\end{theorem}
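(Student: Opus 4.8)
The plan is to prove the two inclusions separately, the easy one being $\indexedpolylwpp \subseteq \indexedlwpp{(\mathit{Poly},\mathit{Poly})}$. Given a set $A$ in $\indexedlwpp{(n^c+c)}$ witnessed by $g\in\gapp$ and $f\in\fp$ (with $f$ mapping to $\integers - \{0\}$), the same $g$, together with $f_A := f$, $f_R$ taken to be identically $0$ on all indices, and $r_A = r_R = n^c+c$, witnesses $A \in \indexedlwpp{(n^c+c, n^c+c)}$: the sets of Definition~\ref{def:rArR-LWPP} become $A_j = \{ f(\langle 0^j, i\rangle) \mid 1 \le i \le j^c+c\}$ and $R_j = \{0\}$, and $A_j \cap R_j = \emptyset$ precisely because the image of $f$ omits $0$.

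For the reverse inclusion, fix $B \in \indexedlwpp{(n^c+c,\,n^c+c)}$ witnessed by $g\in\gapp$, $f_A,f_R\in\fp$, with the sets $A_n, R_n$ as in Definition~\ref{def:rArR-LWPP}. The idea is to ``collapse'' the rejection list down to the single gap value $0$ by multiplying $g$ by the factors that vanish exactly on the rejection targets. Concretely, set
\[
  \widehat{g}(x) \;=\; \prod_{1\le i\le |x|^{c}+c}\bigl( g(x) - f_R(\langle 0^{|x|}, i\rangle)\bigr).
\]
Since $\fp \subseteq \gapp$ (Closure Property~\ref{closure1}), $\gapp$ is closed under subtraction (Closure Property~\ref{coraddsubmult}), and $\gapp$ is closed under products over a polynomial number of terms (Closure Property~\ref{closure4}), we get $\widehat{g} \in \gapp$. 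If $x \notin B$ then $g(x) \in R_{|x|}$, so some factor is $0$ and $\widehat{g}(x) = 0$. If $x \in B$ then $g(x) = f_A(\langle 0^{|x|}, j\rangle)$ for some $j \in \{1,\dots,|x|^c+c\}$; since $f_A(\langle 0^{|x|}, j\rangle) \in A_{|x|}$, each $f_R(\langle 0^{|x|}, i\rangle) \in R_{|x|}$, and $A_{|x|} \cap R_{|x|} = \emptyset$, every factor is a nonzero integer, so $\widehat{g}(x)$ is a nonzero integer whose value is determined by $0^{|x|}$ and $j$ alone.

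This dictates the target function: define $\widehat{f}(\langle 0^n, j\rangle) = \prod_{1\le i\le n^c+c} \bigl( f_A(\langle 0^n,j\rangle) - f_R(\langle 0^n,i\rangle)\bigr)$ for $j \in \{1,\dots,n^c+c\}$, and $\widehat{f}$ identically $1$ on all other inputs. Then $\widehat{f} \in \fp$, its image omits $0$ (again by $A_n \cap R_n = \emptyset$), and for $x \in B$ there is a $j \in \{1,\dots,|x|^c+c\}$ with $\widehat{g}(x) = \widehat{f}(\langle 0^{|x|}, j\rangle)$, while for $x \notin B$ we have $\widehat{g}(x) = 0$. This is exactly a witness that $B \in \indexedlwpp{(n^c+c)} \subseteq \indexedpolylwpp$, completing the proof. (Chaining this with Theorem~\ref{thm:lwpp-robust-poly} then gives the further conclusion $\indexedlwpp{(\mathit{Poly},\mathit{Poly})} = \lwpp$.)

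The argument is short, and I expect no real obstacle; the only points needing care are the two properties of the collapsed target function $\widehat{f}$ — that its image omits $0$, and that for each length $n$ the ``list'' $\{\widehat{f}(\langle 0^n, j\rangle) \mid 1 \le j \le n^c+c\}$ has only polynomially many entries. The former follows from the disjointness hypothesis $A_n \cap R_n = \emptyset$, which forces every factor $f_A(\langle 0^n,j\rangle) - f_R(\langle 0^n,i\rangle)$ to be nonzero; the latter follows from $r_A$ being a polynomial, so that at most $n^c+c$ distinct target values can arise. One should also note that for $n=0$ we still have $r_A(0) = c \ge 1$ indices available, so no boundary pathology arises.
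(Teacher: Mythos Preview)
Your proof is correct and follows essentially the same approach as the paper's own proof: both collapse the rejection list to the single value $0$ by taking $\widehat{g}(x)$ to be the product of $g(x) - f_R(\langle 0^{|x|}, i\rangle)$ over the rejection indices, and both define the new acceptance-target function $\widehat{f}$ by substituting $f_A(\langle 0^n, j\rangle)$ for $g(x)$ in that product. The only cosmetic differences are a sign convention (the paper writes $f_R - g$ rather than $g - f_R$) and your explicit assignment of $\widehat{f} \equiv 1$ on out-of-range inputs to ensure the codomain is $\integers - \{0\}$, which the paper leaves implicit.
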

The proof of Theorem~\ref{thm:polypolylwpp-in-polylwpp} can be
found in the appendix.
Together with Theorem~\ref{thm:lwpp-robust-poly},
we get the following corollary.
\begin{corollary}
$\indexedlwpp{(\mathit{Poly},\mathit{Poly})}  = \lwpp$.
\end{corollary}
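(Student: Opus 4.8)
The plan is to read the corollary off as a one-line consequence of the two equalities already established, Theorem~\ref{thm:polypolylwpp-in-polylwpp} and Theorem~\ref{thm:lwpp-robust-poly}; no fresh gap construction is needed here, since all the genuine work lives in those two theorems. Concretely, I would first invoke Theorem~\ref{thm:polypolylwpp-in-polylwpp} to rewrite $\indexedlwpp{(\mathit{Poly},\mathit{Poly})}$ as $\indexedpolylwpp$, and then invoke Theorem~\ref{thm:lwpp-robust-poly} to rewrite $\indexedpolylwpp$ as $\lwpp$; transitivity of equality then gives $\indexedlwpp{(\mathit{Poly},\mathit{Poly})} = \lwpp$.

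If one instead prefers to see the statement as two inclusions, the plan is as follows. For $\indexedlwpp{(\mathit{Poly},\mathit{Poly})} \subseteq \lwpp$, apply Theorem~\ref{thm:polypolylwpp-in-polylwpp} and then Theorem~\ref{thm:lwpp-robust-poly}, exactly as above. For the reverse inclusion $\lwpp \subseteq \indexedlwpp{(\mathit{Poly},\mathit{Poly})}$, I would use that $\lwpp = \indexedpolylwpp$ by Theorem~\ref{thm:lwpp-robust-poly}, and that $\indexedpolylwpp \subseteq \indexedlwpp{(\mathit{Poly},\mathit{Poly})}$ holds immediately from the definitions: given $A \in \indexedlwpp{(n^c+c)}$ witnessed by $g \in \gapp$ and $f \in \fp$ mapping to $\integers - \{0\}$, one takes $r_A = r_R = n^c+c$, $f_A = f$, and lets $f_R$ be constantly $0$. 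Since $f$ never takes the value $0$, the sets $A_j$ and $R_j$ of Definition~\ref{def:rArR-LWPP} are disjoint, and the acceptance/rejection conditions of that definition reduce, for these witnesses, to ``$\exists i \le n^c+c$ with $g(x) = f(\langle 0^{|x|},i\rangle)$'' and ``$g(x)=0$'' respectively, which are precisely the conditions defining $\indexedlwpp{(n^c+c)}$; hence $A \in \indexedlwpp{(n^c+c, n^c+c)} \subseteq \indexedlwpp{(\mathit{Poly},\mathit{Poly})}$. (Alternatively one may simply quote the remark in the excerpt that $\indexedlwpp{(r,1)} = \indexedlwpp{r}$, together with the trivial monotonicity $\indexedlwpp{(r,1)} \subseteq \indexedlwpp{(\mathit{Poly},\mathit{Poly})}$.)

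There is no real obstacle in the corollary itself. The only place where anything substantive happens is in the two theorems it cites, and in particular in the containment $\indexedlwpp{(\mathit{Poly},\mathit{Poly})} \subseteq \indexedpolylwpp$ of Theorem~\ref{thm:polypolylwpp-in-polylwpp}, where the extra freedom of polynomially many rejection targets must be absorbed by a gap-shifting and product-over-targets argument in the spirit of the proof of Theorem~\ref{thm:lwpp-robust-poly}; but since we are entitled to assume that theorem, the corollary follows in a single line.
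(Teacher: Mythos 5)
Your proposal is correct and matches the paper's own derivation: the corollary is obtained exactly by chaining Theorem~\ref{thm:polypolylwpp-in-polylwpp} with Theorem~\ref{thm:lwpp-robust-poly}. Your optional elaboration of the trivial inclusion $\indexedpolylwpp \subseteq \indexedlwpp{(\mathit{Poly},\mathit{Poly})}$ (taking $f_R \equiv 0$, which is permitted since $f_R$ in Definition~\ref{def:rArR-LWPP} maps to $\integers$ rather than $\integers - \{0\}$) is also sound.
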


An analogous statement can also be shown for the analogously defined class 
$\indexedwpp{(\mathit{Poly},\mathit{Poly})}$:
\begin{theorem}
$\indexedwpp{(\mathit{Poly},\mathit{Poly})}  = \wpp$.
\end{theorem}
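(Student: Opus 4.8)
The plan is to mirror, for $\wpp$, the two-step route that yields $\indexedlwpp{(\mathit{Poly},\mathit{Poly})} = \lwpp$: first establish the $\wpp$ analog of Theorem~\ref{thm:polypolylwpp-in-polylwpp}, namely $\indexedwpp{(\mathit{Poly},\mathit{Poly})} = \indexedpolywpp$, and then invoke Theorem~\ref{thm:wpp-robust-poly} to pass from $\indexedpolywpp$ down to $\wpp$. The containment $\wpp \subseteq \indexedwpp{(\mathit{Poly},\mathit{Poly})}$ is immediate (take $r_A = r_R = 1$ with single rejection target $0$), so the real work is $\indexedwpp{(\mathit{Poly},\mathit{Poly})} \subseteq \indexedpolywpp$, and I would obtain it by the same product construction as in the appendix proof of Theorem~\ref{thm:polypolylwpp-in-polylwpp}, now with every length-indexed target $f(\langle 0^{|x|},\cdot\rangle)$ replaced by the input-indexed target $f(\langle x,\cdot\rangle)$ appropriate to $\wpp$.

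Concretely, given $B \in \indexedwpp{(\mathit{Poly},\mathit{Poly})}$ witnessed by $g \in \gapp$, $f_A, f_R \in \fp$, and (without loss of generality) polynomials $r_A(n) = r_R(n) = n^c + c$ in the $\wpp$-style reading of Definition~\ref{def:rArR-LWPP} (targets evaluated at $\langle x, i\rangle$, disjointness clause $A_x \cap R_x = \emptyset$ for every $x$), I would define
\[
  \widehat{g}(x) \;=\; \prod_{1 \le i \le r_R(|x|)} \bigl( g(x) - f_R(\langle x, i\rangle) \bigr).
\]
Closure Property~\ref{closure1} and Closure Property~\ref{coraddsubmult} put $\langle x, i\rangle \mapsto g(x) - f_R(\langle x, i\rangle)$ in $\gapp$, and Closure Property~\ref{closure4} then puts $\widehat{g}$ in $\gapp$ (padding the index set with a dummy factor $1$ to match that property's range $0 \le i \le q(|x|)$). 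If $x \notin B$ then $g(x) \in R_x$, so some factor is $0$ and $\widehat{g}(x) = 0$; if $x \in B$ then $g(x) = f_A(\langle x, j_0\rangle)$ for some $j_0 \le r_A(|x|)$, and disjointness forces $g(x) \ne f_R(\langle x, i\rangle)$ for all $i$, so $\widehat{g}(x) = \prod_{1 \le i \le r_R(|x|)}(f_A(\langle x, j_0\rangle) - f_R(\langle x, i\rangle)) \ne 0$.

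Thus, although $\widehat{g}$ is written via the non-$\fp$ quantity $g(x)$, on inputs in $B$ its value collapses onto one of only $r_A(|x|)$ values, each an $\fp$-computable combination of $f_A$- and $f_R$-values; so I would take $\widehat{f}(\langle x, j\rangle) = \prod_{1 \le i \le r_R(|x|)}(f_A(\langle x, j\rangle) - f_R(\langle x, i\rangle))$ for $j \in \{1, \dots, r_A(|x|)\}$, which is in $\fp$ (polynomially many $\fp$ evaluations, then a product of polynomially many polynomial-length integers) and whose image avoids $0$ precisely because $A_x \cap R_x = \emptyset$. The two cases then give $x \in B \Rightarrow (\exists j \le r_A(|x|))[\widehat{g}(x) = \widehat{f}(\langle x, j\rangle)]$ and $x \notin B \Rightarrow \widehat{g}(x) = 0$, i.e., $B \in \indexedwpp{(n^c+c)} \subseteq \indexedpolywpp$; chaining with Theorem~\ref{thm:wpp-robust-poly} completes the argument.

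The one point that has to be watched---and the only place the proof could fail---is the twofold use of the disjointness hypothesis $A_x \cap R_x = \emptyset$: once to ensure $\widehat{g}$ does not vanish on $B$ (so $\widehat{g}(x)$ genuinely equals the intended nonzero target there), and once to ensure every output of $\widehat{f}$ is nonzero (so $\widehat{f}$ is a legal $\wpp$-target function into $\integers - \{0\}$). The remaining verifications---$\widehat{g} \in \gapp$, $\widehat{f} \in \fp$, matching index ranges to Closure Property~\ref{closure4}, and normalizing $r_A, r_R$ to the form $n^c + c$---are routine, so, this being just the $\wpp$ transcription of an argument already carried out for $\lwpp$, I expect no genuinely new difficulty.
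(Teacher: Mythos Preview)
Your proposal is correct and follows essentially the same route the paper implicitly intends: the paper states this theorem as the $\wpp$ analogue of the $\lwpp$ result, and your product construction $\widehat{g}(x)=\prod_i(g(x)-f_R(\langle x,i\rangle))$ together with $\widehat{f}(\langle x,j\rangle)=\prod_i(f_A(\langle x,j\rangle)-f_R(\langle x,i\rangle))$ is exactly the input-indexed transcription of the appendix proof of Theorem~\ref{thm:polypolylwpp-in-polylwpp} (up to an immaterial sign convention), after which Theorem~\ref{thm:wpp-robust-poly} finishes. Your explicit flagging of the two uses of the disjointness hypothesis $A_x\cap R_x=\emptyset$ is on point and matches how the paper uses it.
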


\section{Applying the Main Result to Graph Reconstruction}\label{s:app}

\begin{definition}
Let $\langle G_1, G_2, \ldots , G_n\rangle$ be a sequence of graphs and $G = (V,
E)$ a graph with $V = \{ 1, 2, \ldots , n\}$.  Suppose that there is a
permutation $\pi \in S_n$ such that for each $k \in \{1, 2, \ldots ,
n\}$, the graph $G_{\pi(k)}$ is isomorphic to the graph
$(V-\{k\} , E - \{ \{ k, \ell\} : \ell \in V\})$ obtained by deleting
vertex $k$ from $G$.
Then $\langle G_1, G_2, \ldots , G_n\rangle$ is called a {\em deck} of $G$ and $G$
is called a {\em preimage} of the sequence $\langle G_1, G_2, \ldots , G_n\rangle$.

A sequence of graphs $\langle G_1, G_2, \ldots , G_n\rangle$ is called a {\em legitimate
  deck} if there exists a graph $G$ that is a preimage of 
$\langle G_1, G_2, \ldots , G_n \rangle$.
\end{definition}

The {\em Reconstruction Conjecture} 
(see,
e.g., the surveys~\cite{bon-hem:j:reconstruction,man:j:reconstruction-progress-survey,bon:b:graph-recon} and 
the book~\cite{lau-sca:b:graph-reconstruction})
says that each legitimate deck
consisting of graphs with at least two vertices has
exactly one preimage up to isomorphism. This conjecture is a very
prominent conjecture in graph theory---as mentioned in
Section~\ref{s:intro} it is perhaps the most important conjecture in 
that area---and has been
studied for many decades.
 
Nash-Williams~\cite{nas:b:sel-top-graph}, 
Mansfield~\cite{mansfield:j:led}, Kratsch and 
Hemachandra~\cite{hem-kra:j:legitimate-deck},
and Hemaspaandra et al.~\cite{hem-hem-rad-tri:j:reconstruction-2}
 introduced
various decision problems related to the Reconstruction Conjecture,
as part of a stream of work studying the algorithmic and complexity 
issues of reconstruction.
We here are interested mainly in the Legitimate Deck Problem, which is
defined as the following decision problem.

\medskip

\noindent
{\bf Legitimate Deck (a.k.a.~the Legitimate Deck Problem)~\cite{mansfield:j:led}:}  Given a sequence of
graphs $\langle G_1, G_2, \ldots , G_n\rangle$,
is $\langle G_1, G_2, \ldots , G_n\rangle$ a legitimate deck?

\medskip

Mansfield~\cite{mansfield:j:led}
showed that the Graph
Isomorphism Problem, GI (given two graphs $G_1$ and $G_2$, are they
isomorphic?),~is polynomial-time many-one reducible to the Legitimate Deck Problem.
However, to this day it 
remains open whether there is a 
polynomial-time many-one 
reduction from the Legitimate
Deck Problem to GI\@.  

So how hard is the Legitimate Deck Problem?  It is easy to see
that the Legitimate Deck Problem is in $\np$.  In the following, we will see that
there is some evidence that the Legitimate Deck Problem 
is not NP-hard, and we will
improve that evidence.

Let us define the following function problem.

\medskip

\noindent
{\bf Preimage
  Counting~\cite{hem-kra:cOUTDATED-HAS-APPEARED:legitimate-deck,hem-kra:j:legitimate-deck}:}
Given a sequence of graphs $\langle G_1, G_2, \ldots , G_n\rangle$,
 compute the number $\pcount(\langle G_1, G_2, \ldots , G_n \rangle )$
of all nonisomorphic preimages for the sequence $\langle G_1, G_2,
\ldots , G_n \rangle$.

\medskip

K\"{o}bler, Sch\"{o}ning, and Tor\'{a}n~\cite{koe-sch-tor:j:gi-pplow} showed the following theorem.

\begin{theorem}[\cite{koe-sch-tor:j:gi-pplow}] \label{thm:kst}
  There is a function $h: 0^* \rightarrow \naturalnumbers - \{ 0 \}$ in $\fp$
 such that the function that maps
  every sequence $\langle G_1, G_2, \ldots , G_n \rangle$ to $h(0^n)$ times
  the number of nonisomorphic preimages, i.e., 
\[
  \langle G_1, G_2, \ldots , G_n \rangle \mapsto h(0^n) \cdot 
\pcount( \langle G_1, G_2, \ldots , G_n \rangle ) 
\]
is in $\gapp$.
\end{theorem}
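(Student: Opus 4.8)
The plan is to reduce the claim to an orbit-counting (Burnside-type) identity and then to realize a length-only multiple of the resulting quantity as a $\gapp$ function. Write $D=\langle G_1,\dots,G_n\rangle$, and for a labeled graph $H$ on $\{1,\dots,n\}$ and $k\in\{1,\dots,n\}$ let $H-k$ denote the graph obtained from $H$ by deleting vertex $k$. First I would dispose of the degenerate inputs: if the $G_i$ do not all have exactly $n-1$ vertices then $\pcount(D)=0$, so a machine that tests this in polynomial time and rejects on every path when it fails correctly outputs (as its gap) $h(0^n)\cdot\pcount(D)$ there; so assume henceforth that each $G_i$ has $n-1$ vertices.

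Next comes the combinatorial heart. The group $S_n$ acts on the set of labeled preimages of $D$ on $\{1,\dots,n\}$ (this set is closed under the action, since applying $\rho\in S_n$ to a preimage only permutes its deck up to isomorphism); two labeled preimages lie in the same orbit exactly when they are isomorphic, so there are $\pcount(D)$ orbits, and by orbit--stabilizer the orbit of $H$ has size $n!/|\mathrm{Aut}(H)|$. Hence
\[
  \sum_{H\ \text{a labeled preimage of}\ D}|\mathrm{Aut}(H)| \;=\; n!\cdot\pcount(D).
\]
Now $|\mathrm{Aut}(H)|$ is the number of $\tau\in S_n$ fixing $H$, a $\sharpp$ function of $H$; and ``$H$ is a preimage of $D$'' is witnessed by a permutation $\pi\in S_n$ with $G_{\pi(k)}\cong H-k$ for every $k$, together with a list $(\pi_1,\dots,\pi_n)$ where each $\pi_k$ is an isomorphism from $G_{\pi(k)}$ onto $H-k$. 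Guessing $H$, then $\tau$, then $\pi$, then $\pi_1,\dots,\pi_n$, and verifying in polynomial time yields a $\sharpp$ function $N(D)$, and a direct count gives
\[
  N(D) \;=\; \sum_{H\ \text{a labeled preimage of}\ D}|\mathrm{Aut}(H)|\cdot c(D) \;=\; c(D)\cdot n!\cdot\pcount(D),
\]
where $c(D)=|W_D|\cdot\prod_{1\le k\le n}|\mathrm{Aut}(H-k)|$ and $W_D$ is the set of valid matchings $\pi$. The key point is that, since the multiset of vertex-deleted subgraphs of any preimage $H$ is (up to isomorphism) exactly the deck $D$, both factors are the same for every preimage $H$ and depend only on $D$: $\prod_k|\mathrm{Aut}(H-k)|=\prod_i|\mathrm{Aut}(G_i)|$, and $W_D$ is a coset of the stabilizer in $S_n$ of the isomorphism-type partition of the cards, so $|W_D|=\prod_{\text{types }T}(m_T!)$ with $m_T$ the multiplicity of type $T$ in $D$. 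In particular $|W_D|$ divides $n!$ and $\prod_i|\mathrm{Aut}(G_i)|$ divides $((n-1)!)^n$, so $c(D)$ divides the length-only quantity $L(n):=n!\cdot((n-1)!)^n$.

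Finally, one wants to ``remove the junk''. Set $h(0^n):=L(n)\cdot n!$, which is positive and in $\fp$ (it has polynomially many bits). Then $h(0^n)\cdot\pcount(D)=(L(n)/c(D))\cdot N(D)$, the product of the $\sharpp$ function $N$ with the \emph{positive integer} $L(n)/c(D)$; by Closure Properties~\ref{closure1}--\ref{coraddsubmult} this lies in $\gapp$ \emph{provided} $L(n)/c(D)$ is itself producible as a $\gapp$ function of $D$ --- and this is the step I expect to be the main obstacle. One cannot simply divide: $\gapp$ functions can be added, subtracted, multiplied, and taken in polynomially bounded products (Closure Properties~\ref{closure4} and~\ref{coraddsubmult}), but no division operation is available, and $c(D)$ is assembled purely from automorphism-group orders and card-multiplicity multinomials, so recovering $L(n)/c(D)$ is in effect a Graph-Isomorphism-flavored counting task (it asks for the number of labeled copies of each card). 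The technical core of~\cite{koe-sch-tor:j:gi-pplow} is precisely a $\gapp$ computation that carries out such group-order normalizations (leaning, among other things, on $\gapp$'s closure under polynomially bounded products). Granting that, the pieces assemble into the desired $\gapp$ function $\langle G_1,\dots,G_n\rangle\mapsto h(0^n)\cdot\pcount(\langle G_1,\dots,G_n\rangle)$ with $h\in\fp$ positive, as claimed.
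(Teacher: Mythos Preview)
The paper does not give its own proof of this theorem: it is stated with a citation to K\"{o}bler, Sch\"{o}ning, and Tor\'{a}n~\cite{koe-sch-tor:j:gi-pplow} and then used as a black box (see the corollary immediately following it and the proof of Theorem~\ref{thm:qreco-implies-qlwpp}). So there is no in-paper argument to compare against; your write-up is really a sketch of the original KST proof, not of anything the present paper does.

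As a sketch of the KST argument, you are on the right track. The reduction to an orbit-counting identity, the $\sharpp$ computation that overcounts by $c(D)=|W_D|\cdot\prod_i|\mathrm{Aut}(G_i)|$, and the observation that $c(D)$ divides a length-only quantity are all faithful to how KST proceed. You also correctly identify the one nontrivial step you have \emph{not} carried out: producing $L(n)/c(D)$ (equivalently, the factors $(n-1)!/|\mathrm{Aut}(G_i)|$ and $n!/|W_D|$) as a $\gapp$ function. Note that $(n-1)!/|\mathrm{Aut}(G_i)|$ is the number of distinct labeled copies of $G_i$, which is \emph{not} obviously a $\sharpp$ function (the naive witness would require an isomorphism test), so this step genuinely needs the group-theoretic machinery in~\cite{koe-sch-tor:j:gi-pplow} that realizes $h(0^m)/|\mathrm{Aut}(G)|$ in $\gapp$ via a tower of pointwise stabilizers. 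Since you explicitly defer that step to~\cite{koe-sch-tor:j:gi-pplow}, your proposal is best read as an accurate outline rather than a self-contained proof; for the purposes of the present paper, simply citing the theorem as the authors do is appropriate.
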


Theorem~\ref{thm:kst} has the following corollary.

\begin{corollary}[\cite{koe-sch-tor:j:gi-pplow}]  \label{thm:reco-implied-ld-in-lwpp}
 If the Reconstruction Conjecture holds, then the Legitimate Deck Problem is in
 $\lwpp$.
\end{corollary}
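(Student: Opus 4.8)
The plan is to convert the $\gapp$ certificate of Theorem~\ref{thm:kst} into an $\lwpp$ certificate. Write $g_{\mathrm{KST}}\in\gapp$ for the function of that theorem, so that on an encoding $x$ of a sequence $\langle G_1,\dots,G_n\rangle$ we have $g_{\mathrm{KST}}(x)=h(0^n)\cdot\pcount(\langle G_1,\dots,G_n\rangle)$. Since the set of strings that do not encode a \emph{properly formed} deck (a list of $n$ graphs each having exactly $n-1$ vertices) is in $\pe$, by multiplying $g_{\mathrm{KST}}$ by the characteristic function of its complement---which keeps us in $\gapp$ by Closure Properties~\ref{closure1} and~\ref{coraddsubmult}---I would first arrange that $g_{\mathrm{KST}}$ is $0$ on all such strings. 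Now I invoke the hypothesis: if the Reconstruction Conjecture holds, then for every $x$ that is a legitimate deck with $n\ge 3$ we have $\pcount(x)=1$, while for every properly formed but non-legitimate deck $\pcount(x)=0$. Hence, restricted to inputs with $n\ge 3$, the function $g_{\mathrm{KST}}$ already takes only the values $h(0^n)$ (on legitimate decks) and $0$ (otherwise).

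The one genuine obstacle is that the $\lwpp$ definition demands a target function evaluated at $0^{|x|}$, whereas the target $h(0^n)$ here is indexed by the number $n$ of cards, and $n$ is not determined by the input length $|x|$. I would repair this with the same product trick used in the proof of Theorem~\ref{thm:lwpp-robust-poly}: define $f\in\fp$ by $f(0^m)=\prod_{0\le k\le m}h(0^k)$, a positive integer whose binary length is polynomially bounded in $m$ (a product of $m+1$ numbers, each of polynomially bounded length), so $f\in\fp$. Under any reasonable encoding a list of $n$ graphs has length $|x|\ge n$, so $h(0^n)$ is one of the factors of $f(0^{|x|})$, and thus $m_x:=f(0^{|x|})/h(0^n)$ is a positive integer computable from $x$ in polynomial time. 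Set $g(x)=m_x\cdot g_{\mathrm{KST}}(x)$ (with $g(x)=0$ on malformed $x$). Since $\fp\subseteq\gapp$ and $\gapp$ is closed under multiplication (Closure Properties~\ref{closure1} and~\ref{coraddsubmult}), $g\in\gapp$, and for every properly formed input $g(x)=f(0^{|x|})\cdot\pcount(x)$.

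What remains is routine bookkeeping for the finitely many ``short'' decks, namely those with $n\le 2$, where $\pcount$ need not lie in $\{0,1\}$ (for instance $\langle K_1,K_1\rangle$ has two preimages, $K_2$ and its complement). The set $S$ of legitimate decks with $n\le 2$ is finite, hence in $\pe$, and is disjoint from the set of legitimate decks with $n\ge 3$; so I would use $\widehat g(x)=g(x)$ when $x$ lists $n\ge 3$ graphs and $\widehat g(x)=\charfun_S(x)\cdot f(0^{|x|})$ otherwise (including on malformed $x$). This is a case split on a $\pe$ predicate together with sums and products of $\gapp$ functions, so $\widehat g\in\gapp$; and since $f(0^m)>0$ for all $m$, one checks directly that $x\in\ld\Rightarrow\widehat g(x)=f(0^{|x|})$ and $x\notin\ld\Rightarrow\widehat g(x)=0$, so $(\widehat g,f)$ witnesses $\ld\in\lwpp$. (This last patching step could alternatively be absorbed into the closure of $\lwpp$ under polynomial-time Turing reductions~\cite{fen-for-kur:j:gap}, hence under symmetric difference with $\pe$ sets.) The only step with real content is the index-mismatch fix of the second paragraph; everything else is mechanical.
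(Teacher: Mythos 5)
Your proof is correct and matches the paper's approach: the paper states this as an immediate corollary of Theorem~\ref{thm:kst} and spells out exactly your length-versus-$n$ mismatch fix (the product $\prod_{0\le i\le m}h(0^i)$ and the complementary cofactor $h'$) in its proof of the generalization, Theorem~\ref{thm:qreco-implies-qlwpp}. Your explicit patching of the finitely many decks with $n\le 2$ (where the Reconstruction Conjecture as stated is silent and, e.g., $\langle K_1,K_1\rangle$ has two preimages) is a boundary detail the paper glosses over, and it is handled correctly.
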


Since all LWPP sets are PP-low, that immediately gives some
evidence that the Legitimate Deck Problem is not NP-hard.

\begin{corollary}\label{c:low-1}
 If the Reconstruction Conjecture holds, then the Legitimate Deck Problem is 
not $\np$-hard (or even $\np$-Turing-hard) unless $\np$ is $\pp$-low.
\end{corollary}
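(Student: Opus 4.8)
The plan is to derive this as a short corollary of Corollary~\ref{thm:reco-implied-ld-in-lwpp} together with two standard facts about $\lwpp$ from Fenner, Fortnow, and Kurtz~\cite{fen-for-kur:j:gap}: that every set in $\lwpp$ is $\pp$-low, and that $\lwpp$ is closed under polynomial-time Turing reductions. Since many-one $\np$-hardness implies $\np$-Turing-hardness, it suffices to prove the formally stronger statement obtained by replacing ``$\np$-hard'' with ``$\np$-Turing-hard,'' and the single argument then covers both forms of hardness named in the statement. So the plan is: assume the Reconstruction Conjecture holds and that the Legitimate Deck Problem, write it $\ld$, is $\np$-Turing-hard, and show $\pp^{\np} = \pp$.

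First I would invoke Corollary~\ref{thm:reco-implied-ld-in-lwpp} to get $\ld \in \lwpp$, and hence $\pp^{\ld} = \pp$ by the $\pp$-lowness of $\lwpp$. Next, $\np$-Turing-hardness of $\ld$ gives $\np \subseteq \pe^{\ld}$. Absorbing the deterministic polynomial-time reduction into the oracle-query steps of a $\pp$ computation (the elementary identity $\pp^{\pe^{\ld}} = \pp^{\ld}$) then yields $\pp^{\np} \subseteq \pp^{\pe^{\ld}} = \pp^{\ld} = \pp$; since trivially $\pp \subseteq \pp^{\np}$, we conclude $\pp^{\np} = \pp$, i.e., $\np$ is $\pp$-low, which is exactly the ``unless'' clause. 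An equivalent route is to use the closure of $\lwpp$ under polynomial-time Turing reductions to first deduce $\np \subseteq \pe^{\ld} \subseteq \pe^{\lwpp} = \lwpp$, and then apply the $\pp$-lowness of $\lwpp$ to get $\pp^{\np} \subseteq \pp^{\lwpp} = \pp$.

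There is no real obstacle here; the only things to be careful about are citing the ingredients in their correct strengths---in particular the $\pp$-lowness of all of $\lwpp$ (not merely of $\spp$), and the closure of $\lwpp$ under polynomial-time Turing reductions---and stating explicitly that because $\np$-Turing-hardness is implied by (many-one) $\np$-hardness, proving the contrapositive under the Turing-hardness hypothesis suffices for the full claim.
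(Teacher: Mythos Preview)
Your proposal is correct and follows essentially the same approach as the paper, which simply states (without a displayed proof) that the corollary follows immediately from Corollary~\ref{thm:reco-implied-ld-in-lwpp} together with the fact that all $\lwpp$ sets are $\pp$-low. You have merely spelled out the routine details that the paper leaves implicit.
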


Unfortunately, we do not know 
whether the Reconstruction Conjecture holds.  However,
perhaps we can prove the membership of the Legitimate Deck Problem in LWPP under a
\emph{weaker assumption} than the Reconstruction Conjecture, for instance,
what if
the number of nonisomorphic preimages is not always 0 or 1 (as holds
under the Reconstruction Conjecture), but rather is merely relatively small,
e.g., some constant or some polynomial in the number of vertices (as
must hold for each graph class 
having bounded minimum degree)?  We will now use the 
results of the previous section to prove that this indeed is the case.

\begin{conjecture}[$q$-Reconstruction Conjecture]
For each legitimate deck there exist at most $q(n)$ nonisomorphic
preimages,
where $n$ is the number of graphs in the deck.
\end{conjecture}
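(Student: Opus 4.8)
The plan is to be candid that the final statement is posed as a \emph{conjecture}, not a theorem: for $q\equiv 1$ and $n\geq 3$ it is literally the Reconstruction Conjecture, and for any polynomially bounded $q$ it is a weakening of it whose unconditional resolution in full generality is presumably out of reach. Accordingly, what I would actually aim to establish is the $q$-Reconstruction Conjecture restricted to natural graph classes, together with a clean identification of exactly where the general argument breaks down. The concrete target for the provable part is the family of graphs of bounded degree, for which I claim the bound $q(n)=\sum_{k=0}^{d}\binom{n-1}{k}=O(n^d)$ holds outright, and more generally any class admitting a deck-detectable ``anchor'' card.

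Here is the route for the bounded-degree case. Given a legitimate deck $\langle G_1,\dots,G_n\rangle$, Kelly's lemma fixes the common edge count $e(G)=\tfrac{1}{\,n-2\,}\sum_i e(G_i)$ of every preimage $G$, so both $\delta(G)=e(G)-\max_i e(G_i)$ and $\Delta(G)=e(G)-\min_i e(G_i)$ are deck-invariants; call the latter $D_0$, and suppose $D_0\leq d$ (the bounded-minimum-degree case is symmetric, using a maximum-edge card in place of a minimum-edge card). First I would fix any single card $G_j$ with $e(G_j)=\min_i e(G_i)$. For every preimage $G$, the cards of $G$ of minimum edge count are exactly the cards $G-v$ with $\deg_G(v)=\Delta(G)=D_0$; since $G_j$ is a card of $G$ realizing that minimum, it must be one of these, so $G$ is obtained from the \emph{fixed} graph $G_j$ by adding one new vertex joined to some $D_0$-element subset of $V(G_j)$. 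There are at most $\binom{n-1}{D_0}\leq\sum_{k=0}^{d}\binom{n-1}{k}=O(n^d)$ such choices (up to isomorphism, even fewer), so the deck has at most that many nonisomorphic preimages. This proves the $q$-Reconstruction Conjecture, with $q(n)=\sum_{k=0}^{d}\binom{n-1}{k}$, on every class of graphs of minimum or maximum degree at most $d$; the same ``read an anchor card off the deck, then count completions'' scheme handles any class in which some vertex with a bounded-size neighborhood is guaranteed and the corresponding card is detectable from deck-invariants.

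The hard part will be the unrestricted case, and I do not expect to clear it here. Once no degree or similar structural bound is imposed, there is no known way to single out even one card as coming from a low-valence vertex, and the best unconditional bound on the number of nonisomorphic preimages of an $n$-graph legitimate deck remains essentially the trivial $2^{\binom{n}{2}}$; pushing this even to $2^{o(n^2)}$, let alone to a polynomial, would be a substantial advance in reconstruction theory and would come uncomfortably close to the Reconstruction Conjecture itself. Hence the honest ``proof'' content of the final statement is twofold: (i) the bounded-degree (more generally, bounded-anchor) instances proved as above, and (ii) the observation that the case $q\equiv 1$ is exactly the long-open Reconstruction Conjecture, which calibrates precisely how much weaker the polynomial version can be. For the general polynomially bounded $q$, the statement is to be treated as a hypothesis, which is the role it plays in the application that follows.
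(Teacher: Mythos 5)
You have correctly identified that this statement is posed in the paper as a \emph{conjecture} (a hypothesis to be assumed, generalizing the Reconstruction Conjecture from $q\equiv 1$ to polynomial $q$), and the paper offers no proof of it; treating it as an assumption is exactly the role it plays there. Your provable special case --- that decks of bounded minimum (or maximum) degree have only $O(n^d)$ nonisomorphic preimages, obtained by reading the preimage's edge count off the deck via Kelly's lemma, anchoring on an extremal-edge-count card, and counting the ways to reattach one vertex --- is sound and is essentially the same argument the paper invokes by citing Kratsch and Hemaspaandra's Theorem~6.1 to get its Theorem~\ref{thm:graphs-with-bounded-mindegree} on the classes $\mathcal{H}_k$. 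So your treatment matches the paper's: conjecture in general, theorem for bounded-degree classes.
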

For any function $r$, we now define a complexity class $\multipliedlwpp{r}$.
This class may not seem very natural, but we will see that it
is very-well suited to helping us classify the problem $\ld$.  In 
some sense, it is a tool that we will use in our proof, and then 
will discard by noting that it in fact turns out to be a 
disguised version of
$\indexedlwpp{r}$.

\begin{definition} \label{def:rlwpp-with-multiplied} %
Let $r$ be any function mapping from $\naturalnumbers$ to
$\naturalnumbers$.
Then the class
$\multipliedlwpp{r}$ is
the class of all sets $A$ such that there exists a $\gapp$
  function $g$, and a function $f \in \fp$ that maps from $0^*$ to
  $\integers - \{ 0\}$,  
such that for each $x\in\sigmastar$,
\begin{align*}
  x \in A & \Longrightarrow \text{there exists } i \in \{ 1, 2,
  \ldots , r(|x|)\} \text{ such that } g(x) = i \cdot f(0^{|x|})\\
  x \notin A & \Longrightarrow g(x) = 0.
\end{align*}
\end{definition}

\begin{theorem}  \label{thm:qreco-implies-qlwpp}
Let $q$ be any nondecreasing function from $\naturalnumbers$ to
$\naturalnumbers$. Then the following holds.
  
 If the $q$-Reconstruction Conjecture holds, then the Legitimate Deck Problem is in
 $\multipliedlwpp{q}$.
\end{theorem}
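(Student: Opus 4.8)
The plan is to take the gap function underlying the K\"{o}bler--Sch\"{o}ning--Tor\'{a}n result (Theorem~\ref{thm:kst}) and rescale it so that its normalizing constant depends only on the length of the input, making it fit the template of Definition~\ref{def:rlwpp-with-multiplied}. Throughout, let $n(x)$ denote the number of graphs in the sequence encoded by $x$, with the convention that $n(x)=0$ and $\pcount(x)=0$ when $x$ is not a legal encoding of a sequence of graphs; since the legal encodings form a polynomial-time decidable set, the $\gapp$ closure properties (multiply by the characteristic function of that set) let us assume the function $g_{\mathrm{KST}}\colon x\mapsto h(0^{n(x)})\cdot\pcount(x)$ guaranteed by Theorem~\ref{thm:kst} is already $0$ on every illegal encoding. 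Note that $n(x)\le|x|$ for every legal encoding (any polynomial upper bound on $n(x)$ in terms of $|x|$ would serve equally well below).

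First I would define $f\colon 0^*\to\naturalnumbers-\{0\}$ by $f(0^m)=\prod_{k=0}^{m} h(0^k)$, and check $f\in\fp$: this is a product of $m+1$ positive integers, each with at most $\mathrm{poly}(m)$ bits since $h\in\fp$, so $f(0^m)$ itself has at most $\mathrm{poly}(m)$ bits and is computable in polynomial time. Next I would define $\widetilde g(x)=f(0^{|x|})\cdot\pcount(x)$ and check $\widetilde g\in\gapp$: for every $x$ we have $\widetilde g(x)=\Bigl(\prod_{0\le k\le|x|,\,k\ne n(x)} h(0^k)\Bigr)\cdot g_{\mathrm{KST}}(x)$, and the parenthesized factor is a polynomial-time computable integer-valued function of $x$ (first compute $n(x)$, then form the product), so $\widetilde g\in\gapp$ by Closure Properties~\ref{closure1} and~\ref{coraddsubmult}.

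Finally I would verify that $g:=\widetilde g$, $f$, and $r:=q$ witness $\ld\in\multipliedlwpp{q}$ via Definition~\ref{def:rlwpp-with-multiplied}. If $x\notin\ld$, then $x$ encodes no legitimate deck, so $\pcount(x)=0$ and $\widetilde g(x)=0$. If $x\in\ld$, then $x$ encodes a legitimate deck, so $\pcount(x)\ge 1$, and the $q$-Reconstruction Conjecture gives $\pcount(x)\le q(n(x))\le q(|x|)$, using that $q$ is nondecreasing and $n(x)\le|x|$; hence, taking $i:=\pcount(x)\in\{1,2,\dots,q(|x|)\}$, we get $\widetilde g(x)=i\cdot f(0^{|x|})$, exactly as required.

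I expect the only real obstacle to be the mismatch between the number $n$ of graphs in the deck---on which the K\"{o}bler--Sch\"{o}ning--Tor\'{a}n normalizer $h(0^n)$ depends---and the length $|x|$, on which the target base of $\multipliedlwpp{q}$ is permitted to depend; the ``multiply in all the other factors $h(0^k)$'' device above is the fix, and the one thing to watch there is that the product $\prod_{k=0}^{m}h(0^k)$ stays polynomial-size, which it does because its logarithm is a sum of polynomially many polynomially bounded terms. Beyond that, everything is a routine application of the $\gapp$ closure properties, and in fact the same $\widetilde g$ and $f$ work uniformly for all $q$---only the bound $i\le q(|x|)$ uses the hypothesis.
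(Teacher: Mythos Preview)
Your proposal is correct and follows essentially the same approach as the paper: both identify the key obstacle (that the KST normalizer $h(0^n)$ depends on the deck size $n$ rather than on $|x|$), both fix it by passing to the length-only product $\prod_{0\le k\le |x|}h(0^k)$ and multiplying the KST $\gapp$ function by the complementary factor $\prod_{k\ne n(x)}h(0^k)$, and both finish by using monotonicity of $q$ together with $n(x)\le|x|$. Your treatment of illegal encodings and of the bit-size of the product is slightly more explicit than the paper's, but the argument is the same.
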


\begin{proof} %
Suppose that the  $q$-Reconstruction Conjecture holds.
Let $\langle G_1, G_2, \ldots , G_n \rangle$ be an input to the Legitimate Deck
Problem. By our assumption, we have that
\begin{align*}
  \langle G_1, G_2, \ldots , G_n \rangle \in \ld & \Longrightarrow
       \pcount(\langle G_1, G_2, \ldots , G_n\rangle ) \in \{ 1, 2, \ldots , q(n)\}, \text{and} \\
  \langle G_1, G_2, \ldots , G_n \rangle \notin \ld & \Longrightarrow
       \pcount(\langle G_1, G_2, \ldots , G_n\rangle ) = 0. 
\end{align*}
Let $h \in\fp$ be the function discussed in Theorem~\ref{thm:kst}.
Then the function $g$ defined such that for every sequence
$\langle G_1, G_2, \ldots , G_n \rangle$,
\[
  g(\langle G_1, G_2, \ldots , G_n \rangle) =  h(0^n) \cdot \pcount(
  \langle G_1, G_2, \ldots , G_n\rangle )  
\]
is in $\gapp$.
It follows that
\begin{align*}
  \langle G_1, G_2, \ldots , G_n \rangle \in \ld & \Longrightarrow
       g(\langle G_1, G_2, \ldots , G_n\rangle) = h(0^n) \cdot i \text{ for some }\\
           & \quad \quad\quad i\in \{ 1, 2, \ldots , q(n)\}, \text{ and} \\
  \langle G_1, G_2, \ldots , G_n \rangle  \notin \ld & \Longrightarrow
       g(\langle G_1, G_2, \ldots , G_n\rangle ) = 0. 
\end{align*}
This almost directly implies that $\ld \in \multipliedlwpp{q}$.
However, note that to satisfy Definition~\ref{def:rlwpp-with-multiplied},
function $h$ must be a function 
\emph{that depends only on the length of the input
$\langle G_1, G_2, \ldots , G_n\rangle$}.  The problem here is that
(depending on how exactly we decide to encode the graphs $G_1, G_2,
\ldots , G_n$ in the input $\langle G_1, G_2, \ldots , G_n\rangle$)
even the value $n$ (the number of graphs in the deck)
may depend on the actual input $\langle G_1, G_2, \ldots , G_n\rangle$ and not only
on the length of the input $\langle G_1, G_2, \ldots , G_n\rangle$.  

Fortunately, there is a way to get around this problem. 
From the length of the input, we get at least an upper bound for $n$
since we can certainly assume that
\[
  n \le |\langle G_1, G_2, \ldots , G_n\rangle |.
\]
Define a function $\widehat{h}$ such that for each $m \in\naturalnumbers$,
$\widehat{h}(0^m)$ is the product of all ``h-values'' \emph{up to length
  m}.
That is, define function $\widehat{h}$ such that for all $m \in
\naturalnumbers$,
\[
  \widehat{h}(0^m) = \prod_{0 \leq i \leq m} h(0^i).
\]
Define function $h'$ such that for all inputs $\langle G_1, G_2, \ldots , G_n\rangle$,
\[
  h'(\langle G_1, G_2, \ldots , G_n\rangle) = \prod_{0 \leq i \leq
          |\langle G_1, G_2, \ldots , G_n\rangle|  \,\land\, i \neq n}  h(0^i).
\]
Now we can see that for all inputs $\langle G_1, G_2, \ldots , G_n\rangle$,
\begin{multline} \label{eq:big-equation}
  g(\langle G_1, G_2, \ldots , G_n\rangle) \cdot h'(\langle G_1, G_2,
  \ldots , G_n\rangle) \\
   = 
  \begin{cases}
    0 & \parbox{1.5in}{$\text{if } \langle G_1, G_2, \ldots , G_n\rangle  \notin$\\$\ld$}\\[10pt]
    i \cdot \widehat{h}(0^{| \langle G_1, G_2, \ldots , G_n\rangle |}) \text{ for some } i \in \{ 1, 2, \ldots
    , q(n)\} & \parbox{1.5in}{$\text{if } \langle G_1, G_2, \ldots , G_n\rangle  \in$\\$\ld.$}
  \end{cases}
\end{multline}
Note that $\widehat{h}, h' \in \fp$ and $g \in \gapp$.
By Closure Properties~\ref{closure1} and~\ref{coraddsubmult}, the function
\[
  x \mapsto g(x) \cdot h'(x)
\]
is in $\gapp$.
Finally, note that since $q$ is nondecreasing  and
$n \le |\langle G_1, G_2, \ldots , G_n\rangle|$, we have that
$q(n) \le q(|\langle G_1, G_2, \ldots , G_n\rangle|)$ in
Eqn.~(\ref{eq:big-equation}).  Thus by 
Definition~\ref{def:rlwpp-with-multiplied}---with 
the $\gapp$ function $g$ there being our $g(x) \cdot h'(x)$
and the function $f$ there being our $\widehat{h}$---we have that 
$\ld \in \multipliedlwpp{q}$.
\end{proof}

We want to show that if there exists a polynomial $q$ such that the $q$-Reconstruction Conjecture holds,
then $\ld$ is in $\lwpp$.  To this end, we need 
the following 
inclusion.

\begin{theorem}  \label{thm:multiplied-in-indexed}
  For each function $r: \naturalnumbers \rightarrow \naturalnumbers$, $\multipliedlwpp{r} \subseteq \indexedlwpp{r}$.
\end{theorem}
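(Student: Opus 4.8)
I want to show $\multipliedlwpp{r} \subseteq \indexedlwpp{r}$ for every $r: \naturalnumbers \to \naturalnumbers$. The idea is that the two definitions are essentially the same except that in $\multipliedlwpp{r}$ the allowed target set on accepted inputs of length $n$ is $\{1 \cdot f(0^n), 2 \cdot f(0^n), \ldots, r(n) \cdot f(0^n)\}$, i.e., the fixed multiples of a single length-dependent value $f(0^n)$, while in $\indexedlwpp{r}$ it is an essentially arbitrary polynomial-time-computable list $f'(\langle 0^n, 1\rangle), \ldots, f'(\langle 0^n, r(n)\rangle)$ of nonzero integers. So the containment should follow by simply exhibiting, given a $\multipliedlwpp{r}$ representation of a set $A$ via $g \in \gapp$ and $f \in \fp$, an $\indexedlwpp{r}$ representation of $A$.

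\textbf{Key steps.} First I would take $A \in \multipliedlwpp{r}$ witnessed by $g \in \gapp$ and $f \in \fp$ (mapping $0^*$ to $\integers - \{0\}$) as in Definition~\ref{def:rlwpp-with-multiplied}. Second, I would define $f' \in \fp$, mapping to $\integers - \{0\}$, by setting $f'(\langle 0^n, i \rangle) = i \cdot f(0^n)$ for each $n \in \naturalnumbers$ and each $i \in \{1, 2, \ldots, r(n)\}$ (and something harmless, e.g.\ $f'(y) = 1$, on all other inputs, so that $f'$ is total and never $0$). Note $f'$ is polynomial-time computable since $f$ is, and since $i$ is read off the input and multiplication is in $\fp$; and $f'$ never takes the value $0$ because $f$ never does and $i \geq 1$. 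Third, I would keep the \emph{same} $\gapp$ function $g$. Then for $x \notin A$ we still have $g(x) = 0$, and for $x \in A$ we have $g(x) = i \cdot f(0^{|x|}) = f'(\langle 0^{|x|}, i \rangle)$ for some $i \in \{1, \ldots, r(|x|)\}$, which is exactly the defining condition of $\indexedlwpp{r}$ from Definition~\ref{def:lwpp-indexed}. Hence $A \in \indexedlwpp{r}$.

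\textbf{Main obstacle.} Honestly there is no deep obstacle here; this is a bookkeeping/unwrapping argument, and the only thing to be careful about is the domain conventions — making sure $f'$ is defined as a total function to $\integers - \{0\}$ on all of $\sigmastar$ (not just on well-formed pairs $\langle 0^n, i\rangle$ with $i \leq r(n)$), and noting that the definition of $\indexedlwpp{r}$ only ever \emph{consults} $f'$ at inputs of the form $\langle 0^{|x|}, i\rangle$ with $i \leq r(|x|)$, so the behavior elsewhere is irrelevant. One should also double-check that the pairing function $\langle \cdot, \cdot \rangle$ is polynomial-time invertible so that from an input $\langle 0^n, i\rangle$ one can recover $n$ and $i$ in polynomial time, which is the standard assumption. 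With that in place the argument is complete.
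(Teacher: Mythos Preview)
Your proposal is correct and follows essentially the same approach as the paper: keep the same $g\in\gapp$ and define the new $\fp$ target function by $f'(\langle 0^n,i\rangle)=i\cdot f(0^n)$. The paper's proof is just this one-line observation, without the extra (harmless) care you take about totality of $f'$ on ill-formed inputs.
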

\begin{proof}
Let $A$ be a set in $\multipliedlwpp{r}$ via $f_1\in \fp$ and $g\in\gapp$.
Let $f_2 \in \fp$ be the function defined such that for every $n, i \in
\positivenumbers$, $f_2(\langle 0^n, i \rangle) = i\cdot f_1(0^n)$.  Then $A$ is in
$\indexedlwpp{r}$ via $f_2\in \fp$ and $g\in \gapp$.
\end{proof}   

\begin{corollary}
  If the $q$-Reconstruction Conjecture holds for some polynomial $q$, then the Legitimate Deck Problem is in
 $\lwpp$.
\end{corollary}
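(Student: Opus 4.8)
The plan is to chain together the three results that the paper has already set up.

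\medskip

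\noindent\textbf{The approach.} Suppose the $q$-Reconstruction Conjecture holds for some polynomial $q$. Without loss of generality we may take $q$ to be nondecreasing (replacing $q$ by $\widehat{q}(n) = \max_{0\le i\le n} q(i)$, which is still a polynomial and still an upper bound). Moreover, since for every polynomial $q$ there is a constant $c$ with $q(n) \le n^c + c$ for all $n\in\naturalnumbers$, the $(n^c+c)$-Reconstruction Conjecture also holds, and allowing a larger list of target values only makes the relevant class bigger, so it suffices to work with $r(n) = n^c+c$.

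\medskip

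\noindent\textbf{Key steps, in order.} First, apply Theorem~\ref{thm:qreco-implies-qlwpp} with this nondecreasing polynomial $q$ (or directly with $r(n)=n^c+c$): the $q$-Reconstruction Conjecture gives $\ld \in \multipliedlwpp{q} \subseteq \multipliedlwpp{(n^c+c)}$. Second, apply Theorem~\ref{thm:multiplied-in-indexed} with $r(n) = n^c+c$ to conclude $\multipliedlwpp{(n^c+c)} \subseteq \indexedlwpp{(n^c+c)} \subseteq \indexedpolylwpp$. Third, invoke the main result, Theorem~\ref{thm:lwpp-robust-poly}, which says $\indexedpolylwpp = \lwpp$. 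Stringing these together yields $\ld \in \lwpp$, as desired.

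\medskip

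\noindent\textbf{Main obstacle.} There is essentially no hard step remaining: all the real work was done in establishing Theorems~\ref{thm:qreco-implies-qlwpp}, \ref{thm:multiplied-in-indexed}, and \ref{thm:lwpp-robust-poly}. The only points that require a moment's care are the bookkeeping reductions at the front—passing from an arbitrary polynomial $q$ to a nondecreasing one, and absorbing $q$ into the canonical form $n^c+c$ so that $\indexedpolylwpp$ (as defined via $\bigcup_c \indexedlwpp{(n^c+c)}$) is genuinely the class we land in. Both are routine, but it is worth stating them explicitly so the chain of inclusions is airtight. One should also note that Theorem~\ref{thm:qreco-implies-qlwpp} already handles the subtlety about whether ``$n$'' is length-determined, via the padding trick with $\widehat{h}$ and $h'$, so nothing about that issue resurfaces here.
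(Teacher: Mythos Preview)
Your proposal is correct and follows essentially the same route as the paper: pass to a nondecreasing polynomial, then chain Theorems~\ref{thm:qreco-implies-qlwpp}, \ref{thm:multiplied-in-indexed}, and \ref{thm:lwpp-robust-poly}. The only difference is that you spell out the absorption of $q$ into the canonical form $n^c+c$ a bit more explicitly than the paper does.
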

\begin{proof}
Suppose the  $q$-Reconstruction Conjecture holds for nondecreasing polynomial $q$.   
(If $q$ is not nondecreasing but the 
$q$-Reconstruction Conjecture holds, then obviously we can replace 
$q$ with a nondecreasing polynomial $q'$ that on each input $n$
is greater than or equal to $q(n)$ and the 
$q'$-Reconstruction Conjecture will hold. So we may w.l.o.g.~take 
it that $q$ is nondecreasing.)
By Theorems~\ref{thm:qreco-implies-qlwpp} and~\ref{thm:multiplied-in-indexed}, 
$\ld \in \multipliedlwpp{q} \subseteq \indexedlwpp{q}$ and hence $\ld \in \indexedpolylwpp$.
With Theorem~\ref{thm:lwpp-robust-poly}, it follows that $\ld\in\lwpp$.
\end{proof}

This gives us our new, more flexible---though still
conditional---evidence that the Legitimate Deck Problem 
is not NP-hard.

\begin{corollary}\label{c:low-q}
 If the $q$-Reconstruction Conjecture holds for some polynomial $q$, 
then the Legitimate Deck Problem is 
not $\np$-hard (or even $\np$-Turing-hard) unless $\np$ is $\pp$-low.
\end{corollary}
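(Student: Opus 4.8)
The plan is to mirror, essentially verbatim, the proof of Corollary~\ref{c:low-1}, but invoking the stronger conclusion just established rather than Corollary~\ref{thm:reco-implied-ld-in-lwpp}. So suppose the $q$-Reconstruction Conjecture holds for some polynomial $q$. By the immediately preceding corollary (or, chaining directly, by Theorems~\ref{thm:qreco-implies-qlwpp}, \ref{thm:multiplied-in-indexed}, and~\ref{thm:lwpp-robust-poly}), we have $\ld \in \lwpp$. It then suffices to prove the contrapositive of the desired statement in its stronger ($\np$-Turing-hard) form: if $\ld$ is $\np$-Turing-hard, then $\np$ is $\pp$-low; since ordinary $\np$-hardness is a special case of $\np$-Turing-hardness, this also handles the many-one case.

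The argument for that implication is short. Suppose $\ld$ is $\np$-Turing-hard, i.e., every set in $\np$ is polynomial-time Turing-reducible to $\ld$, so $\np \subseteq \pe^{\ld}$. Fenner, Fortnow, and Kurtz~\cite{fen-for-kur:j:gap} showed that $\lwpp$ is closed under polynomial-time Turing reductions; since $\ld \in \lwpp$, this yields $\pe^{\ld} \subseteq \lwpp$, and hence $\np \subseteq \lwpp$. They also showed that every $\lwpp$ set is $\pp$-low; combining these, $\pp^{\np} \subseteq \pp^{\lwpp} = \pp \subseteq \pp^{\np}$, so $\pp^{\np} = \pp$, i.e., $\np$ is $\pp$-low. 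Taking the contrapositive gives: if $\np$ is not $\pp$-low, then $\ld$ is not $\np$-Turing-hard, and a fortiori not $\np$-hard.

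I do not anticipate a real obstacle here: the corollary is just the composition of the preceding corollary with two standard, already-cited facts about $\lwpp$ (downward closure under $\leq^{p}_{T}$ and $\pp$-lowness). The only point requiring a little care is bookkeeping about the direction of the implications, since the statement is naturally phrased contrapositively (``not hard unless low''); organizing the writeup around proving ``$\ld$ $\np$-Turing-hard $\Longrightarrow$ $\np$ is $\pp$-low'' keeps this clean.
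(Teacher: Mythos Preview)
Your proposal is correct and matches the paper's (implicit) argument: the paper does not spell out a proof of Corollary~\ref{c:low-q} but treats it as immediate from the preceding corollary ($\ld \in \lwpp$) together with the $\pp$-lowness of $\lwpp$, exactly as you do. Your use of the Turing-closure of $\lwpp$ to handle the $\np$-Turing-hard case is the right way to make the ``or even $\np$-Turing-hard'' clause explicit.
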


\begin{definition}[\cite{hem-kra:j:legitimate-deck}]
Let $\mathcal{H}$ be any class of graphs.
Then the {\em Legitimate Deck Problem restricted to $\mathcal{H}$}
  consists of all sequences of graphs $\langle G_1, G_2, \ldots , G_n
  \rangle$ such that $\langle G_1, G_2, \ldots , G_n \rangle$
 is a legitimate deck and for each $i\in \{ 1, 2, \ldots , n\}$, $G_i$ is 
in $\mathcal{H}$.
\end{definition}
Note that the above definition is so flexible that it allows even the
case where the preimage(s) may not be in $\mathcal{H}$.

\begin{theorem} \label{thm:restricted-graph-class}
Let $\mathcal{H}$ be any P-recognizable class of graphs such that
decks
consisting only
of graphs in $\mathcal{H}$ have a number of nonisomorphic preimages that is bounded polynomially
in the number of graphs in the deck.  Then the Legitimate Deck Problem restricted to 
$\mathcal{H}$ is in $\lwpp$.
\end{theorem}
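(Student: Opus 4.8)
The plan is to mimic the proof of the corollary following Theorem~\ref{thm:multiplied-in-indexed} (the one that derives $\ld \in \lwpp$ from the $q$-Reconstruction Conjecture for a polynomial $q$), but to additionally filter out those inputs that contain a graph not belonging to $\mathcal{H}$. First I would fix, using the hypothesis, a polynomial $q$ --- which I may without loss of generality take to be nondecreasing --- such that every legitimate deck consisting only of graphs in $\mathcal{H}$ has at most $q(n)$ nonisomorphic preimages, $n$ being the number of graphs in the deck. Let $D$ be the set of all inputs $\langle G_1, \ldots, G_n\rangle$ in which every $G_i$ lies in $\mathcal{H}$. Since $\mathcal{H}$ is P-recognizable, $D \in \mathrm{P}$, so its characteristic function $\charfun_D$ is in $\fp \subseteq \gapp$ by Closure Property~\ref{closure1}.

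Next I would take the $\fp$ function $h$ from Theorem~\ref{thm:kst} and let $g$ be the $\gapp$ function $x \mapsto h(0^n) \cdot \pcount(x)$ where $x = \langle G_1, \ldots, G_n\rangle$, and then set $\widetilde{g}(x) = g(x) \cdot \charfun_D(x)$, which is in $\gapp$ by Closure Property~\ref{coraddsubmult}. The point is that $\widetilde{g}$ already behaves the way we need: if $x$ is in the Legitimate Deck Problem restricted to $\mathcal{H}$, then $x \in D$ and $1 \le \pcount(x) \le q(n)$, so $\widetilde{g}(x) = h(0^n) \cdot i$ for some $i \in \{1, \ldots, q(n)\}$; and if $x$ is not in that problem, then either $x \notin D$, in which case $\widetilde{g}(x) = 0$, or $x \in D$ but $x$ is not a legitimate deck, in which case $\pcount(x) = 0$ and again $\widetilde{g}(x) = 0$. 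This is exactly the intermediate situation reached inside the proof of Theorem~\ref{thm:qreco-implies-qlwpp}.

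From that point on the argument is verbatim that of Theorem~\ref{thm:qreco-implies-qlwpp} together with the corollary following Theorem~\ref{thm:multiplied-in-indexed}: I replace $h$ by the ``product of all $h$-values up to the input length'' function $\widehat{h} \in \fp$ so that the target depends only on $|x|$ and not on $n$, using $n \le |x|$ and the monotonicity of $q$ to bound $q(n) \le q(|x|)$. This places the restricted problem in $\multipliedlwpp{q}$, hence in $\indexedlwpp{q}$ by Theorem~\ref{thm:multiplied-in-indexed}, hence in $\indexedpolylwpp$, and hence in $\lwpp$ by our main result, Theorem~\ref{thm:lwpp-robust-poly}.

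I do not expect a genuine obstacle here; the only two points needing care are (a) absorbing the P-recognizability of $\mathcal{H}$ into a $\gapp$ function by multiplying through by $\charfun_D$, which works precisely because $\gapp$ swallows $\fp$ factors, and (b) the now-familiar length-versus-$n$ bookkeeping inherited from the proof of Theorem~\ref{thm:qreco-implies-qlwpp}. All the substantive work has already been done in Theorem~\ref{thm:lwpp-robust-poly}.
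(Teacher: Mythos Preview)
Your proposal is correct and follows essentially the same approach as the paper: the paper also first checks in polynomial time whether every $G_i$ lies in $\mathcal{H}$ and forces a gap of zero if not (which is exactly your multiplication by $\charfun_D$), then invokes the machinery of Theorem~\ref{thm:qreco-implies-qlwpp} to land in $\multipliedlwpp{q}$, and finally applies Theorems~\ref{thm:multiplied-in-indexed} and~\ref{thm:lwpp-robust-poly} to conclude membership in $\lwpp$. Your explicit use of $\charfun_D$ and the closure properties is just a slightly more formal rendering of the paper's ``if not, produce a gap of zero; otherwise proceed as in the proof of Theorem~\ref{thm:qreco-implies-qlwpp}.''
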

\begin{proof}
Let $\mathcal{H}$ be any P-recognizable class of graphs
and $q$ a nondecreasing polynomial
 such that
decks consisting only
of graphs in $\mathcal{H}$ have a number of nonisomorphic preimages
that is bounded by $q(n)$, where $n$ is the number of graphs in the deck.

First, we show that the  Legitimate Deck Problem restricted to $\mathcal{H}$
 is in $\multipliedlwpp{q}$.  
Let $\langle G_1, G_2, \ldots , G_n
  \rangle$ be an input to the  Legitimate Deck Problem.
Check if for every $i \in  \{ 1, 2, \ldots , n\}$, $G_i$ is 
in $\mathcal{H}$.  If this is not the case then reject in the sense of
$\multipliedlwpp{q}$, i.e., produce a gap of zero.
Otherwise, the deck $\langle G_1, G_2, \ldots , G_n
  \rangle$ has at most $q(n)$ preimages, i.e., $\pcount(\langle G_1, 
G_2, \ldots , G_n \rangle \le q(n)$.  Proceed as in the proof 
of Theorem~\ref{thm:qreco-implies-qlwpp}.

Since by Theorems~\ref{thm:multiplied-in-indexed}
and~\ref{thm:lwpp-robust-poly}, 
$\multipliedlwpp{q} \subseteq \indexedlwpp{q} \subseteq
\lwpp$, it follows that  the  Legitimate Deck
Problem restricted to $\mathcal{H}$ is in $\lwpp$.
\end{proof}

As usual, for each graph $G$, $\delta(G)$ denotes the degree of a
minimum-degree vertex of $G$.

\begin{theorem}  \label{thm:graphs-with-bounded-mindegree}
For each $k \in \positivenumbers$, let 
\[
  \mathcal{H}_k = \{ G \, | \, G\text{ is a graph such that }\delta(G)  \le k \}.
\]
Then the Legitimate Deck Problem restricted to $\mathcal{H}_k$ is in $\lwpp$.
\end{theorem}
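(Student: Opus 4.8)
The plan is to obtain this as an immediate consequence of Theorem~\ref{thm:restricted-graph-class}, so I would only need to verify its two hypotheses for the class $\mathcal{H} = \mathcal{H}_k$. First, $\mathcal{H}_k$ is clearly P-recognizable: given a graph $G$, one computes the degree of every vertex and checks whether the minimum among them is at most $k$, which is polynomial time. The real work is the second hypothesis: that every legitimate deck all of whose cards lie in $\mathcal{H}_k$ has a number of nonisomorphic preimages bounded by a polynomial in the number of graphs in the deck.

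For that, the key observation I would use is that bounded minimum degree of the cards forces bounded minimum degree of any preimage. Let $\langle G_1,\dots,G_n\rangle$ be a legitimate deck with each $G_i \in \mathcal{H}_k$, and let $G$ be any preimage, an $n$-vertex graph whose deck this is. For every vertex $w$ of $G$, the graph $G - w$ is isomorphic to one of the cards, hence $\delta(G - w) \le k$; but deleting a single vertex decreases the minimum degree by at most $1$ (each surviving vertex loses at most one neighbor), so $\delta(G) - 1 \le \delta(G - w) \le k$, i.e.\ $\delta(G) \le k + 1$. Consequently every preimage $G$ contains a vertex $v$ with $d := \deg_G(v) \le k+1$, and $G - v$ is isomorphic to some card $G_i$.

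Now I would count: any such preimage $G$ is obtained, up to isomorphism, by choosing the index $i \in \{1,\dots,n\}$ of the card isomorphic to $G - v$, choosing an integer $d$ with $0 \le d \le k+1$, choosing a $d$-element subset $S$ of the $n-1$ vertices of $G_i$, and adjoining a new vertex adjacent to exactly $S$. The number of graphs (up to isomorphism) produced this way is at most $\sum_{d=0}^{k+1} n\binom{n-1}{d}$, which, since $k$ is a fixed constant, is a polynomial in $n$ (the number of cards). Since every preimage of the deck appears among these graphs up to isomorphism, the number of nonisomorphic preimages is polynomially bounded in $n$. Both hypotheses of Theorem~\ref{thm:restricted-graph-class} then hold for $\mathcal{H}_k$, so the Legitimate Deck Problem restricted to $\mathcal{H}_k$ is in $\lwpp$.

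The crux, then, is the inequality $\delta(G) \le k+1$ for preimages; once that is in hand the rest is a routine counting argument that exploits only that $k$ is fixed, so that $\binom{n-1}{\le k+1}$ is polynomial in $n$. (As usual, the finitely many degenerate short inputs, e.g.\ $n \le 2$, may be handled by table lookup and do not affect membership in $\lwpp$.)
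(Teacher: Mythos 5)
Your proof is correct and follows the same route as the paper: both reduce the theorem to Theorem~\ref{thm:restricted-graph-class} by checking that $\mathcal{H}_k$ is P-recognizable and that decks of cards in $\mathcal{H}_k$ have polynomially many nonisomorphic preimages. The only difference is that the paper cites Kratsch and Hemaspaandra's Theorem 6.1 for the latter fact, whereas you supply the (correct) self-contained argument: $\delta(G)\le k+1$ for any preimage $G$, so every preimage arises by attaching a new vertex of degree at most $k+1$ to some card, giving at most $n\sum_{d=0}^{k+1}\binom{n-1}{d} = O(n^{k+2})$ candidates up to isomorphism.
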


\begin{proof}
In the proof of their Theorem 6.1, Kratsch and Hemaspaandra~\cite{hem-kra:j:legitimate-deck}
showed that for each class of graphs with bounded minimum degree, the
number of nonisomorphic preimages is polynomially bounded.
Now the theorem follows from Theorem~\ref{thm:restricted-graph-class}.
\end{proof}
It is interesting to note that for each of the $\mathcal{H}_k$
classes $\gi_{\mathcal{H}_k} \equiv^p_m \gi$
trivially holds
(for example, via adding to each of the two graphs being 
tested for isomorphism an isolated node),
notwithstanding 
the fact that intersection
with $\mathcal{H}_k$
pulls 
the Legitimate Deck Problem's complexity into $\lwpp$.

In two of this section's
corollaries we used the fact that all LWPP sets are PP-low.
We mention that since all LWPP set are 
also $\ceqp$-low~\cite{koe-sch-tor:j:gi-pplow,fen-for-kur:j:gap}, 
the altered versions of Corollaries~\ref{c:low-1} 
and~\ref{c:low-q} in which 
the conclusion is changed from 
``unless $\np$ is $\pp$-low''
to
``unless $\np$ is $\ceqp$-low''
both hold, respectively due to 
K\"{o}bler, Sch\"{o}ning, and Tor\'{a}n~\cite{koe-sch-tor:j:gi-pplow}
and the present paper.

\section{Optimality of the Main Result}\label{s:opt-body-version-TR}
%
%
%
%
%
%
%
%
%
%
%
%
%
%
%
%
%
%
%
%
%
%
%
%
%
%
%
%
%
%
%
%
%
%
%
%
%
%
%
%
%
%
%
%
%
%
%
%
%
%
%
%
%
%
%
%

It is easy to see that the proof of Theorem~\ref{thm:lwpp-robust-poly}
breaks down if 
$\indexedpolylwpp$ is replaced by the analogous class
where the size of the set of allowed gap values can be larger than polynomial in
the input length.  
This does not necessarily imply that the
corresponding theorem does not hold.  However, in this section, we 
establish that
relativizable proof techniques are not sufficient to improve 
Theorem~\ref{thm:lwpp-robust-poly} from $\indexedpolylwpp$ to 
$\indexedlwpp{r}$ for 
any function $r$ that is not polynomially bounded.
That is, we show that our main result is optimal with respect to 
what can be proven by relativizable proof techniques.

But first, let us briefly 
consider the ``more extreme'' case of the class $\indexedexplwpp$,
where the number of allowed gap values can be an exponential function.
\begin{definition}
\[
  \indexedexplwpp = \bigcup_{c\in \positivenumbers} \indexedlwpp{2^{n^c+c}} .
\] 
\end{definition}
We show that the whole class $\np$ is contained in in $\indexedexplwpp$.  
This gives strong
evidence that $\indexedexplwpp \not\subseteq \lwpp$ because $\lwpp$ is known
to be low for $\pp$~\cite{fen-for-kur:j:gap}, but
$\np$ is widely believed not to be low for $\pp$.
\begin{theorem}  \label{thm:coceqp-contained}
  $\coceqp \subseteq \indexedexplwpp$ (and hence---since 
 $\coceqp \supseteq \indexedexplwpp$ is immediate from the 
definitions---$\coceqp = \indexedexplwpp$).
\end{theorem}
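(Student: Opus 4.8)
The plan is to establish the nontrivial inclusion $\coceqp \subseteq \indexedexplwpp$; the reverse inclusion is, as the statement notes, immediate, since the already-proved fact that $\indexedlwpp{r} \subseteq \coceqp$ for every $r: \naturalnumbers \rightarrow \naturalnumbers$ applies in particular to $r(n) = 2^{n^c+c}$ and hence gives $\indexedexplwpp \subseteq \coceqp$.

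For the forward inclusion, I would start from a set $A \in \coceqp$. Applying Theorem~\ref{thm:ceqp-as-gap} to the complement of $A$ yields a $\gapp$ function $g$ with $x \in A \iff g(x) \neq 0$. Writing $g = \gap_N$ for a polynomial-time nondeterministic machine $N$, fix a polynomial $p$ bounding the number of computation paths of $N$ on inputs of length $n$, so that $|g(x)| \le 2^{p(|x|)}$ for all $x$. The idea is then to use this same $g$ as the $\gapp$ function witnessing membership of $A$ in an exponential-list LWPP-type class, and to let the target list simply enumerate \emph{all} nonzero integers that $g$ could conceivably output on an input of that length. Concretely: using the standard fact (noted in the paper's footnote on the role of $n^c+c$) that every polynomial is bounded above by some function $n^c+c$ on all of $\naturalnumbers$, pick $c \in \positivenumbers$ with $p(n)+1 \le n^c+c$ for every $n$, and define $f$ on $\langle 0^n, i \rangle$ to be the $i$-th entry of the length-$2^{p(n)+1}$ list $(1, -1, 2, -2, \ldots, 2^{p(n)}, -2^{p(n)})$ when $1 \le i \le 2^{p(n)+1}$, and (say) $1$ otherwise. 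Then $f \in \fp$ and maps into $\integers - \{0\}$: the $i$-th entry is $\lceil i/2 \rceil$ with a sign determined by the parity of $i$, and even though $i$ may be exponentially large as an integer it has only polynomially many bits, so $f$ is polynomial-time computable. For $x \notin A$ we have $g(x) = 0$, and for $x \in A$ we have $0 \neq |g(x)| \le 2^{p(|x|)}$, so $g(x)$ occurs in the list, i.e., $g(x) = f(\langle 0^{|x|}, i \rangle)$ for some $i \in \{1, \ldots, 2^{p(|x|)+1}\} \subseteq \{1, \ldots, 2^{|x|^c+c}\}$. Hence $A \in \indexedlwpp{2^{n^c+c}} \subseteq \indexedexplwpp$.

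This is essentially bookkeeping, and I do not expect a genuine obstacle. The only two points requiring care are (i) choosing $c$ large enough that the exponential target list has room to contain every nonzero element of $\{-2^{p(n)}, \ldots, 2^{p(n)}\}$---which is exactly where the domination property of $n^c+c$ enters---and (ii) verifying that $f$ is really in $\fp$ despite being queried at exponentially large indices $i$, which is fine because such $i$ still have polynomial bit-length. Conceptually, the whole content is that an exponential number of allowed gap values is enough slack to record nothing more than the \emph{nonvanishing} of a $\gapp$ function, and that is precisely the information that $\coceqp$ captures, via Theorem~\ref{thm:ceqp-as-gap}.
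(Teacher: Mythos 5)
Your proposal is correct and follows essentially the same route as the paper's proof: apply Theorem~\ref{thm:ceqp-as-gap} to the complement to get a $\gapp$ function $g$ that is nonzero exactly on $A$, bound $|g(x)|$ by $2^{p(|x|)}$, and let the target function enumerate all nonzero integers of absolute value at most $2^{p(n)}$ as a list of length $2^{p(n)+1}$. Your added care about padding $p(n)+1$ up to a function of the form $n^c+c$ and about $f$ being polynomial-time computable on exponentially large (but polynomial-bit-length) indices is exactly the right bookkeeping, which the paper leaves implicit.
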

Since $\np \subseteq \coceqp$ and all $\lwpp$ sets are PP-low, we 
obtain the following corollaries.
\begin{corollary}
   $\np \subseteq \indexedexplwpp$.
\end{corollary}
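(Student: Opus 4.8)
The plan is to obtain this corollary as an immediate consequence of Theorem~\ref{thm:coceqp-contained}, which we may assume, together with the standard inclusion $\np \subseteq \coceqp$. Once both $\np \subseteq \coceqp$ and $\coceqp \subseteq \indexedexplwpp$ are in hand, chaining them gives $\np \subseteq \coceqp \subseteq \indexedexplwpp$, which is exactly the claim. So essentially all of the substantive work has already been done inside Theorem~\ref{thm:coceqp-contained}, and the only thing that remains is to record why $\np \subseteq \coceqp$.

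For that inclusion I would argue as follows. Let $L \in \np$ be accepted by a nondeterministic polynomial-time Turing machine $N$, so that $x \in L$ if and only if $\accept_N(x) \geq 1$. First note that $\accept_N \in \gapp$ (this is just the familiar $\sharpp \subseteq \gapp$): let $N'$ behave exactly like $N$, except that at the end of each rejecting path of $N$ the machine $N'$ makes one further nondeterministic branch into one accepting leaf and one rejecting leaf. Then $\rej_{N'}(x) = \rej_N(x)$ while $\accept_{N'}(x) = \accept_N(x) + \rej_N(x)$, so $\gap_{N'}(x) = \accept_N(x)$, giving $\accept_N \in \gapp$. Since $\accept_N(x) \geq 0$ for all $x$, we have $x \in L \iff \accept_N(x) \neq 0$, and so by the $\gapp$-characterization of $\ceqp$ from Theorem~\ref{thm:ceqp-as-gap} (applied to the complement of $L$, which puts the defining condition in the form $x \notin L \iff g(x) = 0$ with $g = \accept_N$), we conclude $L \in \coceqp$. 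Hence $\np \subseteq \coceqp$, and combining with Theorem~\ref{thm:coceqp-contained} yields $\np \subseteq \indexedexplwpp$.

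I do not expect a genuine obstacle in this corollary: all of the real content lives in Theorem~\ref{thm:coceqp-contained}, namely the encoding of a $\coceqp$ predicate ``$g(x) \neq 0$'' by an exponential-sized list of admissible target gap values, one allocated to each possible nonzero value that a $\gapp$ function can take on inputs of a given length (with the count of such values comfortably bounded by some $2^{n^c+c}$). Granting that theorem, the present corollary is purely a two-line deduction via the well-known $\np \subseteq \coceqp$; the point of stating it is the contrast it draws—since all $\lwpp$ sets are $\pp$-low while $\np$ is widely believed not to be—rather than any difficulty in the proof.
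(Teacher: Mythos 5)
Your proposal is correct and matches the paper's own route exactly: the paper derives this corollary immediately from Theorem~\ref{thm:coceqp-contained} together with the standard fact $\np \subseteq \coceqp$. Your additional spelling-out of why $\np \subseteq \coceqp$ (via $\sharpp \subseteq \gapp$ and Theorem~\ref{thm:ceqp-as-gap}) is accurate, just more detail than the paper bothers to give.
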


\begin{corollary}
   If $\indexedexplwpp = \lwpp$ then $\np \subseteq \lwpp$ and $\rm PP^{NP} = PP$ (and, indeed, $\coceqp \subseteq \lwpp$ and $\rm PP^{\coceqp} = PP$).
\end{corollary}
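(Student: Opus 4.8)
The plan is to derive the corollary by chaining Theorem~\ref{thm:coceqp-contained} with two facts that are already in hand (and explicitly flagged in the text just before the corollary), so the argument is essentially bookkeeping rather than new mathematics. First I would assume the hypothesis $\indexedexplwpp=\lwpp$. Theorem~\ref{thm:coceqp-contained} gives $\coceqp\subseteq\indexedexplwpp$, so under the hypothesis we immediately get $\coceqp\subseteq\lwpp$; this is the set-inclusion half of the parenthetical ``indeed'' claim.

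Next I would invoke the standard inclusion $\np\subseteq\coceqp$ (equivalently $\conp\subseteq\ceqp$): if $L\in\np$ via a nondeterministic polynomial-time Turing machine $N$, then $x\notin L\iff\accept_N(x)=0$, so $\overline{L}\in\ceqp$ by Definition~\ref{d:ceqp-def} taking the $\fp$ function to be identically $0$; hence $L\in\coceqp$. Combining with the previous step yields $\np\subseteq\coceqp\subseteq\lwpp$, which is the first asserted inclusion, and it subsumes the ``$\np\subseteq\lwpp$'' claim.

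For the $\pp$-lowness statements I would appeal to the Fenner--Fortnow--Kurtz result~\cite{fen-for-kur:j:gap} that every $\lwpp$ set is low for $\pp$, i.e., $\pp^A=\pp$ for each $A\in\lwpp$. Since $\coceqp\subseteq\lwpp$, every oracle set drawn from $\coceqp$ is in particular an $\lwpp$ set and hence $\pp$-low, so $\pp^{\coceqp}=\pp$; the same reasoning, with the oracle restricted to the subclass $\np$, gives $\pp^{\np}=\pp$. I expect no genuine obstacle here: the only points meriting a moment's care are getting the direction of the $\np$-versus-$\coceqp$ inclusion right and noting that ``lowness of $\lwpp$'' transfers verbatim to each of its subclasses, here $\coceqp$ and $\np$, because the oracle in $\pp^{(\cdot)}$ ranges over sets.
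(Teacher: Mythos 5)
Your proposal is correct and matches the paper's (implicit) argument exactly: the paper derives this corollary from Theorem~\ref{thm:coceqp-contained}, the standard inclusion $\np \subseteq \coceqp$, and the Fenner--Fortnow--Kurtz fact that every $\lwpp$ set is $\pp$-low. No issues.
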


All of the above, except the PP-lowness claims, analogously 
hold for $\indexedexpwpp$/$\wpp$, since $r$-LWPP is a subset of $r$-$\wpp$.  
So for example we have the following.

\begin{corollary}
   If $\indexedexpwpp = \wpp$ then $\np \subseteq \wpp$
 (and, indeed, $\coceqp \subseteq \wpp$).
\end{corollary}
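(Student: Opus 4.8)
The plan is to derive this as a short corollary of the $\wpp$ analog of Theorem~\ref{thm:coceqp-contained}, namely $\coceqp = \indexedexpwpp$, together with the standard containment $\np \subseteq \coceqp$ noted just above. First I would record that $\indexedexplwpp \subseteq \indexedexpwpp$: this is immediate because $\wpp$ and its loosened variants differ from $\lwpp$ and its loosened variants only in allowing the target function $f$ to depend on the full input rather than on its length alone, so any pair $(g,f)$ witnessing membership of a set in $\indexedlwpp{r}$ also witnesses membership of that set in $\indexedwpp{r}$, for every $r$; in particular this holds for $r = 2^{n^c+c}$. Combining this with Theorem~\ref{thm:coceqp-contained} gives $\coceqp = \indexedexplwpp \subseteq \indexedexpwpp$.

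Next I would verify the reverse inclusion $\indexedexpwpp \subseteq \coceqp$. This is the one spot deserving a moment's care, since the index bound $r(n)=2^{n^c+c}$ is exponential and so one cannot appeal to the polynomial-length product closure (Closure Property~\ref{closure4}). But no product is needed: exactly as in the (unlabeled) theorem following Theorem~\ref{thm:ceqp-as-gap}, which gives $\indexedlwpp{r} \subseteq \coceqp$ for arbitrary $r$, one simply reuses the witnessing $\gapp$ function itself. If $A \in \indexedexpwpp$ via $g$ and $f$, then for $x \in A$ we have $g(x) = f(\langle x, i\rangle)$ for some admissible $i$, and since $f$ maps into $\integers - \{0\}$ this forces $g(x) \neq 0$, while for $x \notin A$ we have $g(x) = 0$; hence $x \in A \iff g(x) \neq 0$, so $A \in \coceqp$ by Theorem~\ref{thm:ceqp-as-gap}. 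Together with the previous paragraph this yields $\coceqp = \indexedexpwpp$ (which is the promised $\wpp$ analog of Theorem~\ref{thm:coceqp-contained}).

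Finally I would invoke the hypothesis: under $\indexedexpwpp = \wpp$, the identity $\coceqp = \indexedexpwpp$ becomes $\coceqp \subseteq \wpp$, and since $\np \subseteq \coceqp$ — as noted above, for $A \in \np$ witnessed by an NPTM $N$ one has $x \notin A \iff \accept_N(x) = 0$, so $\overline{A} \in \ceqp$ via the constantly-$0$ target function — we conclude $\np \subseteq \coceqp \subseteq \wpp$, which is the claimed statement.

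I do not expect a genuinely hard step here; the corollary is essentially bookkeeping on top of Theorem~\ref{thm:coceqp-contained}. The only thing to watch is the point flagged in the second paragraph: one must resist mimicking the $\lwpp$-to-$\lwpp$ ``multiply out the factors'' argument of Theorem~\ref{thm:lwpp-robust-poly}, which breaks for exponential index bounds, and instead use the cheaper observation that a nowhere-zero target function already hands one a $\coceqp$ predicate directly. It is also worth remarking — and this is why the statement is phrased without a $\pp$-lowness clause — that $\wpp$ is not known to be low for $\pp$, so unlike its $\lwpp$ counterpart this corollary carries no ``$\rm PP^{NP} = PP$'' conclusion.
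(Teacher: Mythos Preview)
Your argument is correct and follows essentially the same route as the paper: both use $\indexedlwpp{r} \subseteq \indexedwpp{r}$ to carry Theorem~\ref{thm:coceqp-contained} over to the $\wpp$ setting, obtaining $\coceqp \subseteq \indexedexpwpp$, and then invoke the hypothesis together with $\np \subseteq \coceqp$. Your second paragraph, establishing the reverse containment $\indexedexpwpp \subseteq \coceqp$, is correct but not needed for the corollary---only the forward containment $\coceqp \subseteq \indexedexpwpp$ is used---so you can safely drop it.
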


\begin{proof}[Proof of Theorem~\ref{thm:coceqp-contained}]
Let $A \in \coceqp$. 
By Theorem~\ref{thm:ceqp-as-gap}, there exists a function $g \in \gapp$ such that for every $x\in
\sigmastar$, 
\begin{align*}
   x & \in A \Longrightarrow g(x) \not= 0, \text{ and}\\
   x & \notin A \Longrightarrow g(x) = 0.
\end{align*}
Let $p$ be a polynomial such that for each $x\in\sigmastar$, $-2^{p(|x|)}\le g(x) \le 2^{p(|x|)}$.
Let $f\in \fp$ be the function such that, for every $n
\in \positivenumbers$ and every $i \in \positivenumbers$,
\[
  f(\langle 0^n, i \rangle) = \begin{cases}
           i/2 & \text{if } i \text{ is even}\\
          -(i+1)/2 & \text{otherwise.}
              \end{cases}
\]
We can now see that according to Definition~\ref{def:lwpp-indexed}, $A \in 
\indexedlwpp{2^{p(n)+1}}$ and hence $A \in \indexedexplwpp$.
\end{proof}

%

\begin{theorem} \label{thm:lower-bound}
  Let $r$ be any function from $\naturalnumbers$ to $\naturalnumbers$
such that for every $c \in \naturalnumbers$,
  $r \notin \bigoh(n^c)$.  Then there exists an oracle
$\mathcal{O}$
such that $\indexedlwpp{r}^\mathcal{O} \not\subseteq \lwpp^\mathcal{O}$.
\end{theorem}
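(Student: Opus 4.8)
The plan is to build the oracle $\mathcal{O}$ by a standard diagonalization/finite-extension argument, encoding into $\mathcal{O}$ (in well-separated length windows) the characteristic information of a single ``hard'' language $L^{\mathcal{O}}$ that will sit in $\indexedlwpp{r}^{\mathcal{O}}$ but, by construction, evades every potential $\lwpp^{\mathcal{O}}$ machine. To make $L^{\mathcal{O}}$ naturally an $r$-LWPP set, I would let membership of $0^n$ in $L^{\mathcal{O}}$ be witnessed by a gap that is forced to take one of $r(n)$ many prescribed values: concretely, reserve a block of oracle strings of length $n$ partitioned into $r(n)$ ``bins,'' and arrange a generic polynomial-time NDTM querying this block so that its gap on $0^n$ equals (say) $2^j$ for the index $j$ of the unique nonempty bin when $0^n \in L^{\mathcal{O}}$, and equals $0$ when $0^n \notin L^{\mathcal{O}}$; the target function $f(\langle 0^n, j\rangle) = 2^j$ lies in $\fp$, so $L^{\mathcal{O}} \in \indexedlwpp{r}^{\mathcal{O}}$ for every oracle $\mathcal{O}$ respecting this format. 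The real content is the diagonalization against $\lwpp$.

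For the diagonalization step I would fix an enumeration $(N_k, f_k)_{k\ge 1}$ of all pairs of polynomial-time (oracle) NDTMs together with $\fp$ target functions $f_k: 0^* \to \integers - \{0\}$ (these are the candidate $\lwpp$ presentations); relative to any oracle, a set $A$ is in $\lwpp^{\mathcal{O}}$ iff some such pair witnesses it, so it suffices at stage $k$ to pick a fresh length $n_k$ (large enough that $N_k$'s running time on $0^{n_k}$ is $\ll r(n_k)$, using $r \notin \bigoh(n^c)$) and to extend the oracle so that the behavior of $N_k^{\mathcal{O}}$ on $0^{n_k}$ is \emph{inconsistent} with any legal $\lwpp$ semantics with respect to whatever we decide about $0^{n_k} \in L^{\mathcal{O}}$. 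The key combinatorial fact I would use here is the low-degree-polynomial representation alluded to after the statement (the de~Graaf--Valiant / Furst--Saxe--Sipser technique): the gap function $\gap_{N_k^{\mathcal{O}}}(0^{n_k})$, as a function of the (few) still-undetermined oracle bits in the reserved length-$n_k$ block, is computed by a multilinear polynomial of degree at most the running time of $N_k$, hence of degree $\le n_k^{c_k}$ for some constant $c_k$. Since $r(n_k)$ exceeds $n_k^{c_k}$ for infinitely many $n_k$, there is enough ``room'' in the $r(n_k)$ bins that $N_k$ cannot query all of them; one then argues that either $\gap_{N_k^{\mathcal{O}}}(0^{n_k})$ can be forced to $0$ while we put $0^{n_k} \in L^{\mathcal{O}}$ (violating $x \in A \Rightarrow g(x) = f_k(0^{n_k}) \ne 0$), or it can be forced to a nonzero value while we keep $0^{n_k} \notin L^{\mathcal{O}}$ (violating $x \notin A \Rightarrow g(x) = 0$); by the degree bound, not all settings of an untouched bin's bits can be ``bad,'' so one of these two forcings is always available. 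Crucially, this forcing only touches oracle strings in the length-$n_k$ block, so it does not disturb the $r$-LWPP format at other lengths and the stages do not interfere.

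I would then take $\mathcal{O}$ to be the union of the finite extensions produced across all stages (filling in all remaining strings outside the reserved blocks arbitrarily, e.g.\ with $\emptyset$, while keeping every reserved length-$n$ block in the ``unique nonempty bin'' format). By construction $L^{\mathcal{O}} \in \indexedlwpp{r}^{\mathcal{O}}$, while for every $k$ the pair $(N_k, f_k)$ fails to witness $L^{\mathcal{O}} \in \lwpp^{\mathcal{O}}$ at length $n_k$; hence $L^{\mathcal{O}} \notin \lwpp^{\mathcal{O}}$ and $\indexedlwpp{r}^{\mathcal{O}} \not\subseteq \lwpp^{\mathcal{O}}$. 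The main obstacle I anticipate is the middle step: setting up the generic NDTM and the bin structure so that (i) the $r$-LWPP gap behavior is genuinely enforced for \emph{all} oracle completions of the prescribed format, not just the one we build, yet (ii) the gap is still a low-degree polynomial in the undetermined bits so that the counting/degree argument against $N_k$ goes through, \emph{and} (iii) a $\lwpp$ machine that reads only polynomially many oracle bits is provably unable to pin down which of the super-polynomially many bins is active. Balancing (i)--(iii) — in particular choosing the encoding so that the polynomial-degree bound on $N_k$'s gap really does beat $r(n_k)$ — is where the careful work lies; everything else is routine finite-extension bookkeeping.
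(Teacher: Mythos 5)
Your overall architecture (a test language determined by the oracle's contents at reserved lengths, a stage-wise finite-extension construction against an enumeration of candidate $\lwpp$ pairs, and the multilinear-polynomial encoding of the gap function) matches the paper's. But there is a genuine gap at exactly the point you flag as ``where the careful work lies'': you never say \emph{why} the degree bound yields a contradiction. The assertion that ``by the degree bound, not all settings of an untouched bin's bits can be bad'' is not an argument---a degree-$d$ multilinear polynomial in $N \gg d$ variables can vanish at the origin and be nonzero at enormously many low-weight points, and your information-theoretic phrasing (``$N_k$ cannot query all of the bins'') does not apply to gap functions, whose value can depend on every oracle bit across the exponentially many paths. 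The paper closes this hole with a number-theoretic step: if no diagonalizing extension exists, then the gap polynomial $s$ (degree $\le p_j(n_j)$) is $0$ at the origin and constantly equal to $val=f_j(0^{n_j})$ on \emph{every} point of Hamming weight $1,\dots,r(n_j)$; Lemma~\ref{lemma:prime-divisor} then forces $p \mid val$ for every prime $p$ with $\deg(s) < p \le \min(N/2, r(n_j))$, and the prime number theorem supplies at least $p_j(n_j)^2$ such primes, so $|val| \ge 2^{p_j(n_j)^2}$---contradicting $|val| \le 2^{p_j(n_j)}$, which follows from the running time of the $\fp$ machine $M_j$. Some such slice-constancy argument (the prime-divisibility lemma, or a symmetrization/interpolation argument over Hamming-weight slices) is the indispensable core, and your proposal does not contain it.

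A secondary issue: your ``bins with gap $2^j$'' test language is both unnecessary and counterproductive. The paper simply uses $L_B = \{0^n : \|B^{=n}\| > 0\}$ under the promise $\|B^{=n}\| \le r(n)$; the gap function is the relativized $\sharpp$ count $\|B^{=n}\|$ itself and the target list is $\{1,\dots,r(n)\}$. This choice is what makes legal ``accepting'' configurations out of \emph{all} subsets $C$ of size $1$ through $r(n_j)$, so the hypothetical failure of diagonalization constrains $s$ on entire Hamming-weight slices up to weight $r(n_j)$---precisely the constraint the degree/prime argument consumes. Your unique-nonempty-bin format restricts the legal configurations and would leave you with far fewer constraints to exploit, on top of the format-preservation headaches you yourself identify.
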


To prove Theorem~\ref{thm:lower-bound}, we will encode nondeterministic
oracle Turing machines by low-degree multivariate polynomials.  This
technique is apparently folklore and has been 
used, for example, by
de~Graaf and Valiant~\cite{gra-val:t:eqp} to construct a relativized world 
where the quantum complexity class $\eqp$ 
is not contained in the modularity-based complexity class $\MOD_{p^k}\pe$.
The general 
technique 
of replacing oracle machines by simpler combinatorial
objects such as circuits, decision trees, or polynomials and then using
properties of such combinatorial objects to show the existence of a
desired oracle 
dates to the seminal
work of Furst, Saxe, and 
Sipser~\cite{fur-sax-sip:j:parity}, who made the connection between circuit
lower bounds and the relativization of the polynomial hierarchy---a 
connection that has led to the resolution of many previously long-open 
relativized questions, such as the achievement of an oracle 
making the polynomial hierarchy infinite~\cite{yao:c:separating,has:thesis:small-depth} and of oracles making the polynomial hierarchy 
extend exactly $k$ levels~\cite{ko:j:exact}.

As indicated above, 
in Definition~\ref{def:poly-encoding} and Proposition~\ref{prop:poly-encoding} below, we
will 
encode nondeterministic
oracle Turing machines by low-degree multivariate polynomials,
in order to 
obtain polynomials that compute the gap of polynomial-time
nondeterministic oracle machines.

\begin{definition}[Folklore; see also~\cite{spa-tha-rah:j:quantum}]  
\label{def:poly-encoding}
Let $N^{(\cdot)}(x)$ be a nondeterministic polynomial-time oracle Turing machine
with running time $t(.)$  and input $x\in\sigmastar$.
Let $x_1, x_2, \ldots, x_{m}$ be an enumeration of all strings in
$\sigmastar$ up to length $t(|x|)$.

A {\em polynomial encoding of
  $N^{(\cdot)}(x)$\/} is a multilinear polynomial 
$p \in \integers[y_1, y_2, \ldots, y_{m}]$ defined as follows:
Call a computation path $\rho$ {\em valid\/} if $\rho$
is a computation path of $N^{D}(x)$ for some oracle $D\subseteq\sigmastar$.
Let $x_{i_1}, x_{i_2}, \ldots, x_{i_{\ell}}$ be the distinct queries
along a valid computation path $\rho$. Create a monomial $\mono(\rho)$ that is the product
of terms
$z_{i_k}$, $k \in \{ 1, 2, \ldots,  \ell\}$, where
$z_{i_k} = y_{i_k}$ if $x_{i_k}$ is answered ``yes'' and $z_{i_k} = (1-y_{i_k})$ if $x_{i_k}$ is answered ``no'' along $\rho$. Define
\[
p(y_1, y_2, \ldots, y_m) = \sum_{\rho: \rho \textnormal{ is valid}}\sign(\rho) \cdot \mono(\rho).
\]
Here, $\sign(\rho) = 1$ if $\rho$ is an accepting path and
$\sign(\rho) = -1$ if $\rho$ is a rejecting path.
\end{definition}

The next proposition states that the multilinear polynomial $p$ has low total
degree, and contains all
the necessary information about $N^{(\cdot)}(x)$ to yield the value $\gap_{N^{{B}}}(x)$ for every
oracle $B\subseteq \sigmastar$.

\begin{proposition}[Folklore; see also~\cite{spa-tha-rah:j:quantum}] \label{prop:poly-encoding}
The polynomial $p(y_1, y_2, \ldots, y_m)$ 
defined in Definition~\ref{def:poly-encoding}
has the following properties:
\begin{enumerate}
\item $\degree(p) \leq t(|x|)$, and
\item for each $B \subseteq \sigmastar$, $p(\chi_{B}(x_1), \chi_{B}(x_2), \ldots, \chi_{B}(x_{m}))$ $=$
$\gap_{N^B}(x)$.
\end{enumerate}
\end{proposition}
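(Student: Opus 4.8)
The plan is to prove the two assertions separately, each by unwinding the definition of the polynomial encoding. For the degree bound, the key observation is that since $N^{(\cdot)}(x)$ runs in time $t(|x|)$, every computation path $\rho$ has length at most $t(|x|)$ and therefore issues at most $t(|x|)$ oracle queries; in particular the number $\ell$ of \emph{distinct} queries along a valid $\rho$ satisfies $\ell \le t(|x|)$. Each factor $z_{i_k}$ of $\mono(\rho)$ is either $y_{i_k}$ or $1-y_{i_k}$, hence has degree $1$ in the single variable $y_{i_k}$, and since the strings $x_{i_1},\dots,x_{i_\ell}$ are distinct the variables $y_{i_1},\dots,y_{i_\ell}$ are pairwise distinct. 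Thus $\mono(\rho)$ is multilinear of total degree exactly $\ell\le t(|x|)$. Since a sum of multilinear polynomials is multilinear and total degree never increases under addition, $p$ is multilinear with $\degree(p)\le t(|x|)$. (One should also note in passing that every query issued along such a path has length at most $t(|x|)$ and so appears among $x_1,\dots,x_m$, which is exactly why finitely many variables suffice.)

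For the evaluation identity, fix an oracle $B\subseteq\sigmastar$ and substitute $y_j=\chi_B(x_j)$ for all $j$. The heart of the argument is the claim that for every valid path $\rho$, the evaluated monomial $\mono(\rho)$ equals $1$ if $\rho$ is a computation path of $N^{B}(x)$ and equals $0$ otherwise. To see this, examine a single factor corresponding to a distinct query $x_{i_k}$ along $\rho$: if the answer recorded along $\rho$ is ``yes'' the factor evaluates to $\chi_B(x_{i_k})$, and if it is ``no'' it evaluates to $1-\chi_B(x_{i_k})$; in either case the factor equals $1$ exactly when the recorded answer agrees with membership of $x_{i_k}$ in $B$, and equals $0$ otherwise. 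Hence the product over all distinct queries along $\rho$ is $1$ precisely when every recorded answer is consistent with $B$, and is $0$ as soon as one recorded answer disagrees with $B$.

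It then remains to argue that the valid paths all of whose recorded answers are consistent with $B$ are in a sign-preserving bijection with the computation paths of $N^{B}(x)$. This follows by a short induction along the path: a valid path is determined by its sequence of nondeterministic choices together with a collection of hypothetical query answers that is realizable by some oracle; if we run $N^{B}(x)$ using those same nondeterministic choices, then at each step the next configuration and the next query depend only on the choices made and the answers already received, so inductively the queries asked and the answers that $B$ supplies coincide with those recorded along $\rho$, and the computation produced is exactly $\rho$ (and conversely every computation path of $N^{B}(x)$ arises in this way from its own choice sequence). Combining this bijection with the previous paragraph yields
\[
p(\chi_B(x_1),\dots,\chi_B(x_m)) \;=\; \sum_{\rho\text{ valid}}\sign(\rho)\,\mono(\rho) \;=\; \sum_{\rho\text{ a path of }N^{B}(x)}\sign(\rho) \;=\; \accept_{N^{B}}(x)-\rej_{N^{B}}(x) \;=\; \gap_{N^{B}}(x).
\]
The main obstacle is not any computation but rather making this last bijection fully precise: one must be careful about repeated queries along a path (the deduplication to ``distinct queries'' is harmless because a path that is valid for \emph{some} oracle must record the same answer on every occurrence of a given query string) and about the fact that a valid path's behavior is a function of only the nondeterministic choices and the answers to the queries it actually asks, so that it unfolds identically under $B$ once those answers match. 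Once this bookkeeping is in place, both parts are immediate.
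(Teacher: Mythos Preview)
Your argument is correct and is precisely the standard ``folklore'' proof alluded to in the paper. Note that the paper itself does \emph{not} supply a proof of this proposition; it merely cites it as folklore (with a pointer to~\cite{spa-tha-rah:j:quantum}), so there is nothing in the paper to compare against beyond the statement. Your treatment of the two potential subtleties---that validity forces consistent answers on repeated queries (so the ``distinct queries'' deduplication is harmless) and that a valid path's unfolding depends only on its nondeterministic choices and the recorded answers (so consistency with $B$ yields an actual path of $N^{B}(x)$, giving the claimed bijection)---is exactly what is needed to make the argument airtight.
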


\begin{lemma}[\cite{spa-tha-rah:j:quantum}]
\label{lemma:prime-divisor}
Let $N, p \in \naturalnumbers$ be such that $p$ is a prime and $p \leq N/2$.
Let $s \in \integers[y_1, y_2, \ldots, y_N]$ be a multilinear polynomial with
total degree $\degree(s) < p$. If
for some $val \in \integers$, it holds that
\begin{enumerate}
\item $s(0, 0, \ldots, 0) = 0$, and 
\item $s(y_1, y_2, \ldots, y_{N}) = val$, for every $y_1, y_2, \ldots, y_N \in \{0,1\}$ with $p = \sum_{1 \leq i \leq N} y_i$, 
\end{enumerate}
then $p \mid val$.
\end{lemma}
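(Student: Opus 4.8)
The plan is to boil the statement down to a single binomial identity for $val$ and then annihilate its right-hand side modulo $p$ by a short factorial count; the hypothesis $p\le N/2$ will be used only to make enough variables available.

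First I would discard all but the first $2p-1$ variables, i.e.\ set $y_{2p}=y_{2p+1}=\cdots=y_N=0$, obtaining $s''\in\integers[y_1,\dots,y_{2p-1}]$ (this is legitimate since $N\ge 2p$). Restriction cannot raise the total degree and does not disturb the all-zeros evaluation, so $s''$ is multilinear with $\degree(s'')<p$ and $s''(0,\dots,0)=0$; moreover, for any $T\subseteq\{1,\dots,2p-1\}$ with $|T|=p$, the corresponding $0/1$ assignment still has exactly $p$ ones as a point of $\{0,1\}^N$, so $s''$ evaluates to $val$ there. Writing $s''=\sum_{S}c_S\prod_{i\in S}y_i$ (over $S\subseteq\{1,\dots,2p-1\}$, $c_S\in\integers$), these facts say exactly that $c_\emptyset=0$ and $c_S=0$ whenever $|S|\ge p$.

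Next I would double-count $\Sigma:=\sum_{T}s''(\mathbf{1}_T)$, the sum over all $p$-element $T\subseteq\{1,\dots,2p-1\}$ (where $\mathbf{1}_T\in\{0,1\}^{2p-1}$ is the characteristic vector of $T$). On one hand every summand is $val$, so $\Sigma=\binom{2p-1}{p}\,val$. On the other hand $s''(\mathbf{1}_T)=\sum_{S\subseteq T}c_S$, so interchanging the order of summation and using $c_\emptyset=0$ and the degree bound to restrict the range gives
\[
  \binom{2p-1}{p}\,val \;=\; \sum_{j=1}^{p-1}\binom{2p-1-j}{p-j}\,\Bigl(\textstyle\sum_{|S|=j}c_S\Bigr).
\]
Now a Legendre-type count finishes it. For $1\le j\le p-1$ we have $p\le 2p-1-j\le 2p-2<2p$, so in $\binom{2p-1-j}{p-j}=(2p-1-j)!/\bigl((p-j)!\,(p-1)!\bigr)$ the numerator is divisible by $p$ (it contains the factor $p$) while the denominator is not (all its factors are $<p$ and $p$ is prime); hence $p\mid\binom{2p-1-j}{p-j}$. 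Similarly, in $\binom{2p-1}{p}=(2p-1)!/\bigl(p!\,(p-1)!\bigr)$ the single factor of $p$ in $(2p-1)!$ is exactly cancelled by the one in $p!$, so $p\nmid\binom{2p-1}{p}$. Reducing the displayed identity modulo $p$ therefore kills the entire right-hand side, leaving $\binom{2p-1}{p}\,val\equiv 0\pmod p$, and since $\binom{2p-1}{p}$ is invertible modulo $p$ we conclude $p\mid val$.

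The one step that really needs care — and the one I would double-check first — is the choice of how many variables to retain: keeping $2p$ variables instead would make the relevant coefficients $\binom{2p-j}{p-j}$ all $\equiv 1\pmod p$ rather than $\equiv 0$, and the argument would collapse. What makes $2p-1$ work is that it is the largest length $\ell$ with $\lfloor\ell/p\rfloor=1$, and $p\le N/2$ is exactly what guarantees that at least $2p-1$ variables are on hand.
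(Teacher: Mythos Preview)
Your argument is correct. The restriction to $2p-1$ variables, the double count
\[
\binom{2p-1}{p}\,val \;=\; \sum_{j=1}^{p-1}\binom{2p-1-j}{p-j}\Bigl(\sum_{|S|=j}c_S\Bigr),
\]
and the Kummer/Legendre-type divisibility check (numerator picks up exactly one factor of $p$ while the denominator picks up none for $1\le j\le p-1$, whereas for $j=0$ the single factor of $p$ cancels) are all sound. Your closing remark about why $2p-1$ rather than $2p$ variables is the right cut-off is also accurate: by Lucas' theorem $\binom{2p-j}{p-j}\equiv 1\pmod p$ for $1\le j\le p-1$, so the right-hand side would no longer vanish modulo~$p$.

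As for comparison with the paper: there is nothing to compare against. The paper does not prove Lemma~\ref{lemma:prime-divisor}; it quotes it verbatim from~\cite{spa-tha-rah:j:quantum} and uses it as a black box inside the oracle construction for Theorem~\ref{thm:lower-bound}. Your write-up therefore supplies a short self-contained proof where the paper provides none, and it uses only the hypotheses actually stated (multilinearity, $\deg(s)<p$, the vanishing at the origin, constancy on weight-$p$ inputs, and $p\le N/2$).
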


Lemma~\ref{lemma:prime-number-theorem} states a variant of the prime number
theorem. We will need it in our oracle construction to get a lower
bound for  the number
of primes in a given set of natural numbers (see set $S$ defined below).

\begin{lemma}[\cite{ros-sch:j:prime}]
\label{lemma:prime-number-theorem}
For every $n \geq 17$, the number of primes less than or equal to $n$, $\pi(n)$, satisfies
\[
   \pi(n) > n/\ln n.
\]
\end{lemma}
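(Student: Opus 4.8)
The plan is straightforward: this estimate is exactly (a corollary of) the explicit prime-counting bounds of Rosser and Schoenfeld~\cite{ros-sch:j:prime}, and the cleanest and most honest route for us is to invoke it directly rather than to reprove it. Concretely, I would cite their bound $\pi(x) > x/\ln x$, valid for all real $x \ge 17$, observe that it specializes immediately to integers $n \ge 17$, and be done. We use the lemma only as a black box --- to guarantee that a suitable set of candidate primes is nonempty in the oracle construction behind Theorem~\ref{thm:lower-bound} --- so no finer information is needed.

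For a reader who wants to see where a bound of this strength comes from, I would include a short pointer to the standard reduction rather than a full proof. Writing $\theta(x) = \sum_{p \le x} \ln p$ for Chebyshev's function, Abel (partial) summation applied to $\pi(x) = \sum_{p \le x} (\ln p)\cdot\frac{1}{\ln p}$ gives the identity
\[
  \pi(x) = \frac{\theta(x)}{\ln x} + \int_2^x \frac{\theta(t)}{t\,\ln^2 t}\,dt,
\]
in which the integral is strictly positive because $\theta(t) > 0$. Hence a sufficiently sharp lower bound on $\theta$ --- e.g.\ $\theta(x) > x\left(1 - \frac{1}{\ln x}\right)$ past an explicit point, which is one of the principal outputs of~\cite{ros-sch:j:prime} --- feeds into this identity, and a careful balancing of the (slightly deficient) main term $\theta(x)/\ln x$ against the strictly positive integral term then yields $\pi(x) > x/\ln x$ for all sufficiently large $x$; the finitely many remaining values down to $17$ are settled by direct computation, exactly as Rosser and Schoenfeld do.

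The reason we cite rather than reprove is what I expect to be the genuine obstacle. The elementary bounds extractable from the central binomial coefficient $\binom{2m}{m}$ give only $\pi(x) \gtrsim (\ln 2)\,x/\ln x$, and even Chebyshev's sharper elementary argument improves the constant only to about $0.921$ --- in both cases strictly below the constant $1$ that the stated inequality demands. Pushing the constant all the way up to $1$ (uniformly for $n \ge 17$) requires genuine analytic input, namely a zero-free region for the Riemann zeta function together with explicit numerical verification up to a large height; this is the content of~\cite{ros-sch:j:prime} and lies well outside the scope of the present paper, so we record the estimate as a cited lemma and use it only as a tool.
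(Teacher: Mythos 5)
Your approach matches the paper's exactly: the lemma is stated there purely as a citation to Rosser and Schoenfeld~\cite{ros-sch:j:prime}, with no proof given, and it is used only as a black-box lower bound on the number of primes available in the oracle construction. Your additional sketch of where the bound comes from (and why the constant $1$ is out of reach of elementary methods) is accurate but not something the paper attempts.
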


\begin{proof}[Proof of Theorem~\ref{thm:lower-bound}]
First, we need a test language.

For every set $B \subseteq \sigmastar$, define $L_B$ as
\[
  L_B = \{ 0^n \, | \, \|B^{=n}\| > 0 \},
\]
where $B^{=n}$ denotes $B \cap \Sigma^n$.
If $B$ satisfies the condition 
that for every length $n$ it holds that $\|B^{=n}\| \le r(n)$,
then $L_B \in \indexedlwpp{r}$.  
(To see this, note that the function $d(0^n) = 
\|B^{=n}\|$ is a $\sharpp^B$ function, where $\sharpp$ is 
Valiant's~\cite{val:j:permanent}
class of functions that count the number of \emph{accepting} 
paths of nondeterministic polynomial-time Turing machines.
But 
$\sharpp \subseteq \gapp$~\cite{fen-for-kur:j:gap},
and that fact itself relativizes, so $d(0^n) \in \gapp^B$.
Since our test language should reject when $d(0^n) = 0$
and should accept when $1 \leq d(0^n) \leq r(n)$, and 
by our ``if $B$ satisfies'' those are the only possibilities, we
have $L_B \in \indexedlwpp{r}^B$.)

We will construct an oracle $B$ such that for each $n$, $\|B^{=n}\| \le r(n)$
and $L_B \notin \lwpp^B$.  
Let $( N_j, M_j, p_j)_{j \ge 1}$ be an enumeration of all triples such that
$N_j$ is a nondeterministic polynomial-time oracle Turing machine, 
$M_j$ is a deterministic polynomial-time oracle Turing machine
computing a function, and $p_j$
is a monotonically increasing polynomial such that the running time of both $N_j$ and $M_j$ is
bounded by $p_j$ regardless of the oracle.  We construct the oracle $B$ in
stages.  In stage $j$, we decide the membership in $B$ of strings
of length $n_j$ and extend the initial segment $B_{j-1}$ of $B$ 
to $B_j$.  Initially, we set $B_0 = \emptyset$.

\smallskip

\noindent
{\boldmath\bf Stage $j$, where $j \ge 1$:} 
Let $n_j$ be large enough that:
(a)~$n_j > p_{j-1}(n_{j-1})$ (to ensure that the previous stages 
are not affected),  
(b)~$r(n_j) \geq p_j(n_j)^4$,
(c)~$(2^{n_j} - p_j(n_j))/2 \geq p_j(n_j)^4$, and 
(d)~$p_j(n_j)^3 - p_j(n_j) \geq p_j(n_j)^2$.  
Such an $n_j$ exists because $p_j$ is a monotonically increasing
polynomial and 
for each $c \in \naturalnumbers$,
  $r \notin \bigoh(n^c)$.  
We diagonalize against nondeterministic
polynomial-time oracle Turing machine $N_j$ and deterministic
polynomial-time oracle Turing machine $M_j$. 
That is, we make sure that $L_B$ is not decided according to the 
definition of LWPP by $\gapp$ function $g$
computed by $N_j$ together with $\fp$ function $f$ computed by $M_j$
(see Definition~\ref{def:lwpp}). 
 Let $val$ be the value
computed by $M_j^{B_{j-1}}(0^{n_j})$.  
Because of the condition $0 \notin \range(f)$ in the definition of 
$\lwpp$, we 
can 
assume that $val \not= 0$.
(If $val =0$ then we can right away go to stage $j+1$.)

Let
\[
  T = \{ w \in \Sigma^{n_j} \, | \, M_j^{B_{j-1}}(0^{n_j}) \text{ queries } w \}.
\]
Note that $\|T\| \le p_j(n_j)$ since the computation time of
$M_j^{B_{j-1}}(0^{n_j})$ is bounded by $p_j(n_j)$.
In the following, we will never add any string
from $T$ to the oracle.  This ensures that the value $val$ computed by
$M_j^{B_{j-1}}(0^{n_j})$ is never changed when we replace oracle $B_{j-1}$ by $B_j$. 

$(\ast)$ We choose a set $C \subseteq \Sigma^{n_j} - T$
such that%

\begin{itemize}
   \item $\|C\|  \in \{ 1, 2, \ldots , r(n_j) \}$ and  
     $\gap_{N_j^{B_{j-1}  \cup C}}(0^{n_j}) \not= val$, or 
   \item $\|C\|  = 0$   and $\gap_{N_j^{B_{j-1}  \cup C}}(0^{n_j}) \not= 0$.
\end{itemize}
Let $B_j = B_{j-1} \cup C$.

\smallskip

\noindent
{\boldmath\bf End of Stage $j$.}

\medskip

This construction guarantees that
for each $n$, $\|B^{=n}\| \le r(n)$ and
$L_B \notin \lwpp^B$.  
Thus our proof is complete if we can 
show that it is always possible to find a set $C$ satisfying 
$(\ast )$.  We state and prove that as the following 
claim and its proof.

\begin{claim} \label{claim:can-extend}
  For each $j \ge 1$, there exists a set 
  $C$ satisfying $(\ast)$.
\end{claim}
\begin{proof}[Proof of Claim~\ref{claim:can-extend}]
Suppose that in stage $j$ no set $C$ satisfying $(\ast)$ exists.
Then for every  $C \subseteq \Sigma^{n_j} - T$, the
following holds:
\begin{align}
  & \|C\|  \in \{ 1, 2, \ldots , r(n_j) \} \Longrightarrow  
     \gap_{N_j^{B_{j-1}  \cup C}}(0^{n_j}) = val, \text{ and} \label{eq:gl4}
       \\ 
  & \|C\|  = 0 \Longrightarrow  \gap_{N_j^{B_{j-1}  \cup C}}(0^{n_j})
     = 0. \label{eq:gl3}
\end{align}
Let $s'\in \integers[y_1, y_2, \ldots , y_m]$ be the polynomial encoding of
$N_{j}^{(\cdot)}(0^{n_j})$ as in Definition~\ref{def:poly-encoding}. 
W.l.o.g.~assume that $x_1, x_2, \ldots , x_N$
enumerate the strings in $\Sigma^{n_j} - T$, and
$x_{N+1}, x_{N+2}, \ldots , x_m$ enumerate the remaining strings of
length at most
$p_j(n_j)$.
By Proposition~\ref{prop:poly-encoding}, polynomial
$s'(y_1, y_2, \ldots , y_m)$ satisfies the following:
\begin{enumerate}
  \item For all $C \subseteq \Sigma^{n_j} - T$, it holds that
    \[
        s'(\charfun_{C}(x_1), \charfun_{C}(x_2), \ldots, \charfun_{C}(x_N),
      \charfun_{B_{j-1}}(x_{N+1}), \ldots ,
      \charfun_{B_{j-1}}(x_{m}) )
          = \gap_{N_{j}^{B_{j-1} \cup C }}(0^{n_j}), \text{ and}
    \]
   \item $\deg(s') \le p_j(n_j)$.
\end{enumerate}
Note that the sets $B_{j-1} $ and $\Sigma^{n_j} - T$ are disjoint.
The values $\charfun_{B_{j-1}}(x_{N+1}), \ldots ,
      \charfun_{B_{j-1}}(x_{m})$ do not depend on $C$.
We fix these values and obtain the following polynomial $s(y_1, y_2,
\ldots , y_N)$: 
\[
  s(y_1, y_2, \ldots , y_N) = s'(y_1, y_2, \ldots , y_N, \charfun_{B_{j-1}}(x_{N+1}), \ldots , \charfun_{B_{j-1}}(x_{m}) ).
\]
Hence the polynomial $s(y_1, y_2, \ldots , y_N)$ satisfies the following:
\begin{enumerate}
  \item For all $C \subseteq \Sigma^{n_j} - T$, it holds that
    \begin{equation}
     s(\charfun_{C}(x_1), \charfun_{C}(x_2), \ldots,
     \charfun_{C}(x_N)) = \gap_{N_{j}^{B_{j-1} \cup C }}(0^{n_j}),
     \text{ and}
    \end{equation}
  \item $\deg(s) \le \deg(s') \le p_j(n_j)$.
\end{enumerate}
Statements (\ref{eq:gl3}) and (\ref{eq:gl4}) imply that
\begin{itemize}
  \item $s(0, 0, \ldots, 0)  =  0$, and
  \item for all $z_1, z_2, \ldots , z_{N}\in \{ 0, 1\}$ such that
    $\sum\limits_{1 \leq i \leq N} z_{i} \in \{ 1, 2, \ldots , r(n_j) \} $, we have $s(z_1, z_2, \ldots, z_{N})  =  val$.
\end{itemize}
By Lemma~\ref{lemma:prime-divisor}, for every prime $k$ in 
\[
S = \{ \deg(s) + 1 , \deg(s) + 2,   \ldots , \min(N/2, r(n_j)) \},
\]
it holds that $k \mid val$.

To obtain a lower bound for $val$, we determine a lower bound for the number of primes in $S$.
First, note that 
at the beginning of stage $j$, we have taken $n_j$ large enough
such that
$N/2 \ge (2^{n_j} - p_j(n_j))/2 \ge p_j(n_j)^4$ and $r(n_j) \ge p_j(n_j)^4$, and thus
$\min(N/2, r(n_j)) \ge  p_j(n_j)^4$.
In light of Lemma~\ref{lemma:prime-number-theorem}, we hence obtain 
\[
\pi(  \min(N/2, r(n_j))   ) \ge \frac{p_j(n_j)^4}{\ln(p_j(n_j)^4)} \ge p_j(n_j)^3.
\]
Further, we obviously have
\[
  \pi(\deg(s)) \le \deg(s) \le p_j(n_j).
\]
Hence the number of primes in $S$ is at least $p_j(n_j)^3 - p_j(n_j)$,
which is (by the choice of $n_j$ at the beginning of stage $j$) 
greater than or equal to $p_j(n_j)^2$.
Since each prime in $S$ is greater than or
equal to $2$ and $val \not=0$, we
have that the absolute value of $val$ is at least $2^{p_j(n_j)^2}$.
Yet we also 
must have that 
the absolute value of $val$ is less than or 
equal to $2^{p_j(n_j)}$,
since the running time of $M_j^{(\cdot )}(0^{n_j})$ is bounded
by $p_j(n_j)$ regardless of the oracle. This is a contradiction.

So for each $j \geq 1$, $B_{j-1}$ can always be extended in
stage $j$ as required.
This finishes the proof of Claim~\ref{claim:can-extend}.
\end{proof}
And since 
Claim~\ref{claim:can-extend} was all that remained in our proof of 
Theorem~\ref{thm:lower-bound},
that theorem itself is now proven.
\end{proof}

Is the oracle constructed in the proof of
Theorem~\ref{thm:lower-bound} necessarily recursive?  It might not 
be, since Theorem~\ref{thm:lower-bound} put no complexity
or computability restrictions on the function $r$.  However, 
aside from that our construction is clearly effective; so if $r$ 
is a computable function, then the oracle our construction 
builds is certainly recursive.

\section{\boldmath${{\rm LWPP^+}}$}\label{s:body-tr-lwppplus}

Theorem~\ref{thm:lwpp-robust-poly} of Section~\ref{s:many} established
a robustness property of $\lwpp$, namely, that
$\indexedpolylwpp = \lwpp$.  That is, having one target value for
acceptance and having a list of target values
for acceptance yield the same class of languages, in the content of
LWPP, which, recall, is defined in terms of the values of $\gapp$
functions.  That robustness result is itself robust in the sense 
that it holds both in the real world and, it is easy to see, in 
every relativized world.

On the other hand, this equivalence for LWPP, in terms of descriptive 
richness, between one target value and a polynomial number of values,
may not hold even for quite similar 
counting-class situations.  
In particular, in this section we prove that in some 
relativized worlds, for 
the analog of LWPP defined in terms of $\sharpp$ rather than
$\gapp$ functions, having even two target values for acceptance
yields a richer class of languages than having one value.  So
it is not the case that single targets and lists of targets inherently
function identically as to descriptive richness for counting classes.

The analog of WPP defined using $\sharpp$ functions rather than
$\gapp$ functions already exists in the literature, namely it is the class
known as $\rm F_{=}P$ that was recently 
introduced by Cox and
Pay~\cite{cox-pay:tarxiv:wppplus}.
Similarly, we here define, and denote as $\lwpp^+$, the analog of LWPP
except defined using $\sharpp$ functions rather than $\gapp$
functions.  Clearly, $\lwpp^+ \subseteq \lwpp$, and it would be
natural to guess that the containment is strict, although obviously
proving that strictness is a much stronger result than proving
$\rm P\neq PSPACE$ and so seems beyond current techniques.

We now define $\lwpp^+$ and then we prove 
as Theorem~\ref{thm:lwppplus-not-robust} that, unlike $\lwpp$, there
are relativized worlds where size-two target sets yields languages
that cannot be obtained via any size-one target set.

\begin{definition}  \label{def:lwppplus}
$\lwpp^+$ is the class of all sets $A$ such that there exists a
  function $g\in\sharpp$ and a function $f \in \fp$ that maps from $0^*$ to
  $\positivenumbers$  such that for all $x\in\sigmastar$,
\begin{align*}
  x \in A & \Longrightarrow g(x) = f(0^{|x|})\\
  x \notin A & \Longrightarrow g(x) = 0.
\end{align*}
\end{definition} 

\begin{definition}
$\twolwppplus$ is the class of all sets $A$ such that there exists a function
  $g\in\sharpp$ and 
functions $f_1, f_2 \in \fp$ that map from $0^*$ to
  $\positivenumbers$
  such that for all $x\in\sigmastar$,
\begin{align*}
  x \in A & \Longrightarrow g(x) \in \{ f_1(0^{|x|}), f_2(0^{|x|})\} \\
  x \notin A & \Longrightarrow g(x) = 0.
\end{align*}
\end{definition} 

\begin{theorem} \label{thm:lwppplus-not-robust}
There exists an oracle $\mathcal{O}$
such that 
$({\lwpp^+})^{\mathcal{O}}
\subsetneq 
(\twolwppplus)^{\mathcal{O}}$.
\end{theorem}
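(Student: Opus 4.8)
The plan is to build $\mathcal{O}$ by a stagewise diagonalization whose separating language is as simple as possible: maintain throughout the construction the invariant that $\|\mathcal{O}^{=n}\|\le 2$ for every length $n$ (and that $\mathcal{O}$ has strings only at a sparse set of diagonalization lengths), and let $L_{\mathcal{O}}=\{0^n\mid \|\mathcal{O}^{=n}\|\ge 1\}$. Membership $L_{\mathcal{O}}\in(\twolwppplus)^{\mathcal{O}}$ is immediate: the $\sharpp^{\mathcal{O}}$ function $g$ that on input $x$ first checks deterministically whether $x=0^{|x|}$ (rejecting on all paths if not), and otherwise guesses $y\in\Sigma^{|x|}$ and accepts iff $y\in\mathcal{O}$, has $g(x)=\|\mathcal{O}^{=|x|}\|$ on unary inputs and $g(x)=0$ elsewhere, so $g(x)\in\{1,2\}$ when $x\in L_{\mathcal{O}}$ and $g(x)=0$ when $x\notin L_{\mathcal{O}}$; thus $L_{\mathcal{O}}\in(\twolwppplus)^{\mathcal{O}}$ via $g$ with the constant target functions $f_1\equiv 1$, $f_2\equiv 2$. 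Since $(\lwpp^{+})^{\mathcal{O}}\subseteq(\twolwppplus)^{\mathcal{O}}$ trivially and relativizably, everything comes down to forcing $L_{\mathcal{O}}\notin(\lwpp^{+})^{\mathcal{O}}$.

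For that I would fix an enumeration $(N_j,M_j,p_j)_{j\ge1}$ exactly as in the proof of Theorem~\ref{thm:lower-bound}: $N_j$ a nondeterministic and $M_j$ a deterministic polynomial-time oracle Turing machine computing a function, with $p_j$ a monotone polynomial bounding the running time of both regardless of the oracle. At stage $j$, with $B_{j-1}$ the part of the oracle already committed (living at lengths $n_1<\cdots<n_{j-1}$), I choose $n_j$ large, in particular $n_j>p_{j-1}(n_{j-1})$ (so that, since likewise $n_{j+1}>p_j(n_j)$, the computations of $N_j$ and $M_j$ on $0^{n_j}$ never touch a length committed at a different stage). Let $T\subseteq\Sigma^{n_j}$ be the at most $p_j(n_j)$ length-$n_j$ strings queried by $M_j^{B_{j-1}}(0^{n_j})$, and commit never to put a string of $T$ into $\mathcal{O}$; this freezes $val:=M_j^{B_{j-1}}(0^{n_j})$ regardless of which $C\subseteq\Sigma^{n_j}-T$ I later add at length $n_j$. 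As in the cited proof, if $val=0$ the triple cannot define a $\lwpp^+$ witness and I pass on. Otherwise I must find $C$ with $\|C\|\in\{0,1,2\}$ defeating $(N_j,M_j)$ on $0^{n_j}$: $\accept_{N_j^{B_{j-1}}}(0^{n_j})\ne0$ if $\|C\|=0$, and $\accept_{N_j^{B_{j-1}\cup C}}(0^{n_j})\ne val$ if $\|C\|\in\{1,2\}$.

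The crux is that such a $C$ always exists once $n_j$ is large. Suppose not. Then $N_j^{B_{j-1}}(0^{n_j})$ has no accepting path; for every $w\in\Sigma^{n_j}-T$ the oracle $B_{j-1}\cup\{w\}$ gives exactly $val$ accepting paths; and for all distinct $w_1,w_2\in\Sigma^{n_j}-T$ the oracle $B_{j-1}\cup\{w_1,w_2\}$ again gives exactly $val$. Because $B_{j-1}$ yields no accepting path, every accepting path of $B_{j-1}\cup\{w\}$ must in fact query $w$ (answered ``yes''); let $A_w$ be this set, so $\|A_w\|=val$. Now pass from $B_{j-1}\cup\{w_1\}$ to $B_{j-1}\cup\{w_1,w_2\}$: the members of $A_{w_1}$ whose computation queries $w_2$ ($q(w_1,w_2)$ of them) are killed, the other $val-q(w_1,w_2)$ remain accepting, and the accepting paths of $B_{j-1}\cup\{w_1,w_2\}$ consist precisely of these $val-q(w_1,w_2)$ survivors from $A_{w_1}$, the analogous $val-q(w_2,w_1)$ survivors from $A_{w_2}$, and a nonnegative number of further paths that query both $w_1$ and $w_2$ (a path querying neither would already accept under $B_{j-1}$). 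Equating the total to $val$ gives $q(w_1,w_2)+q(w_2,w_1)\ge val$ for every pair. But each of the $val$ paths in $A_{w_1}$ makes at most $p_j(n_j)$ queries, so $\sum_{w_2}q(w_1,w_2)\le val\cdot p_j(n_j)$; summing the pair inequality over all $\binom{N}{2}$ unordered pairs, where $N=\|\Sigma^{n_j}-T\|\ge 2^{n_j}-p_j(n_j)$, yields $\binom{N}{2}\,val\le N\cdot val\cdot p_j(n_j)$, hence $N\le 2p_j(n_j)+1$, which is false for large $n_j$. So each stage succeeds, and since all later activity occurs at lengths beyond the running time of $N_j$ and $M_j$ on $0^{n_j}$, the diagonalization is permanent; thus $L_{\mathcal{O}}\notin(\lwpp^{+})^{\mathcal{O}}$, and $\mathcal{O}$ is the desired oracle.

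I expect the only delicate point to be the path-bookkeeping in the previous paragraph: verifying rigorously that inserting $w_2$ into $B_{j-1}\cup\{w_1\}$ preserves exactly the accepting paths of $A_{w_1}$ that never query $w_2$, destroys the rest, can create accepting paths only among those querying both $w_1$ and $w_2$, and can neither preserve nor create an accepting path querying neither. This last fact is exactly where the nonnegativity of $\sharpp$ counting—rather than signed $\gapp$ counting—does the real work, which is presumably why the analogous robustness statement genuinely holds for $\lwpp$ (Theorem~\ref{thm:lwpp-robust-poly}). Everything else—the bounds used to select $n_j$, freezing $val$ by avoiding $T$, and the non-interference between stages—is routine and parallels the proof of Theorem~\ref{thm:lower-bound}.
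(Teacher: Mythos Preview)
Your proposal is correct, and its overall framework---the test language $L_{\mathcal O}$, the invariant $\|\mathcal{O}^{=n}\|\le 2$, the enumeration $(N_j,M_j,p_j)$, freezing $val$ by avoiding $T$, and the three-way case split on $\|C\|\in\{0,1,2\}$---matches the paper exactly. The difference lies entirely in how you establish the key claim that a suitable $C$ exists.

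The paper introduces a threshold notion: $\confl(\alpha)$ is the set of $\beta$ for which adding $\beta$ kills more than $\lfloor val/3\rfloor$ paths of $A_\alpha$; a counting argument bounds $\|\confl(\alpha)\|\le 3p_j(n_j)$, whence one finds a mutually non-conflicting pair $\gamma_1,\gamma_2$, and then at least $2(val-\lfloor val/3\rfloor)\ge\frac{4}{3}val$ accepting paths survive under $B_{j-1}\cup\{\gamma_1,\gamma_2\}$, contradicting $val$. Your argument replaces this with a direct double count: from $q(w_1,w_2)+q(w_2,w_1)\ge val$ for \emph{every} pair, summed against $\sum_{w_2}q(w_1,w_2)\le val\cdot p_j(n_j)$, you get $N\le 2p_j(n_j)+1$. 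Your route is shorter and avoids the somewhat artificial $val/3$ threshold; the paper's route, on the other hand, explicitly exhibits a single pair $\{\gamma_1,\gamma_2\}$ that works and gives a concrete lower bound on the surplus of accepting paths. Both hinge on the same decomposition of accepting paths under $B_{j-1}\cup\{w_1,w_2\}$ into survivors from $A_{w_1}$, survivors from $A_{w_2}$, and a nonnegative remainder of paths querying both---and both correctly identify that nonnegativity as the place where $\sharpp$ (rather than $\gapp$) is essential. The paper also spends some care formalizing ``computation path'' to include the yes/no query sets, which makes the disjointness $A_{w_1}\cap A_{w_2}=\emptyset$ and the survivor bookkeeping airtight; you treat this informally, but your stated classification of accepting paths under $B_{j-1}\cup\{w_1,w_2\}$ is correct and suffices.
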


We will soon prove this theorem, but we first 
state and prove a corollary, and will 
give some groundwork
for the theorem's proof, and also pointers to some related work.

\begin{corollary}\label{c:plus-nonplus}
There exists an oracle $\mathcal O$ 
such that $\lwpp^{\mathcal O} \not \subseteq (\lwpp^+)^{\mathcal O}$.
\end{corollary}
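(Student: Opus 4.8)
The plan is to derive Corollary~\ref{c:plus-nonplus} directly from Theorem~\ref{thm:lwppplus-not-robust} together with the (relativized) main robustness result. Let $\mathcal{O}$ be the oracle provided by Theorem~\ref{thm:lwppplus-not-robust}, so that $(\lwpp^+)^{\mathcal{O}} \subsetneq (\twolwppplus)^{\mathcal{O}}$. It then suffices to prove $(\twolwppplus)^{\mathcal{O}} \subseteq \lwpp^{\mathcal{O}}$: granting that, any set $B$ witnessing the strictness in $(\lwpp^+)^{\mathcal{O}} \subsetneq (\twolwppplus)^{\mathcal{O}}$ lies in $\lwpp^{\mathcal{O}}$ but not in $(\lwpp^+)^{\mathcal{O}}$, which is precisely the claim $\lwpp^{\mathcal{O}} \not\subseteq (\lwpp^+)^{\mathcal{O}}$.

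To see $(\twolwppplus)^{\mathcal{O}} \subseteq \lwpp^{\mathcal{O}}$, let $A$ be a set in $(\twolwppplus)^{\mathcal{O}}$ via a $\sharpp^{\mathcal{O}}$ function $g$ and $\fp$ functions $f_1, f_2$ mapping $0^*$ into $\positivenumbers$. Since $\sharpp \subseteq \gapp$ and this inclusion relativizes, $g \in \gapp^{\mathcal{O}}$. Define $f \in \fp$ by $f(\langle 0^n, 1 \rangle) = f_1(0^n)$, $f(\langle 0^n, 2 \rangle) = f_2(0^n)$, and $f$ equal to $1$ on all other arguments; note $f$ maps into $\integers - \{0\}$ because $f_1$ and $f_2$ take only positive values. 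Then, for every $x \in \sigmastar$, we have $x \in A \Rightarrow g(x) \in \{ f(\langle 0^{|x|}, 1 \rangle), f(\langle 0^{|x|}, 2 \rangle) \}$ and $x \notin A \Rightarrow g(x) = 0$, so $g$ and $f$ witness $A \in \indexedlwpp{2}^{\mathcal{O}}$ in the sense of Definition~\ref{def:lwpp-indexed} (taking $r$ to be the constant-$2$ function). Since $\indexedlwpp{2} \subseteq \indexedpolylwpp$ and, as noted above, the collapse $\indexedpolylwpp = \lwpp$ of Theorem~\ref{thm:lwpp-robust-poly} holds in every relativized world, we conclude $A \in \indexedpolylwpp^{\mathcal{O}} = \lwpp^{\mathcal{O}}$.

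I do not expect a genuine obstacle here beyond Theorem~\ref{thm:lwppplus-not-robust} itself, which is proved separately; the only point needing a moment's care is confirming that each ingredient used in the chain relativizes, namely $\sharpp \subseteq \gapp$, the trivial inclusion $\indexedlwpp{2} \subseteq \indexedpolylwpp$, and the equality $\indexedpolylwpp = \lwpp$. Each of these is established purely from relativizing closure properties of $\gapp$ (Closure Properties~\ref{closure1}--\ref{coraddsubmult}), so all carry over to $\mathcal{O}$. Chaining $(\lwpp^+)^{\mathcal{O}} \subsetneq (\twolwppplus)^{\mathcal{O}} \subseteq \lwpp^{\mathcal{O}}$ then finishes the proof.
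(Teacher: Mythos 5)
Your proof is correct and follows essentially the same route as the paper: chain $(\lwpp^+)^{\mathcal O}\subsetneq(\twolwppplus)^{\mathcal O}\subseteq\indexedpolylwpp^{\mathcal O}=\lwpp^{\mathcal O}$, using that Theorem~\ref{thm:lwpp-robust-poly} relativizes. The only difference is that you spell out the easy inclusion $\twolwppplus\subseteq\indexedlwpp{2}$ explicitly, which the paper leaves as ``follows from the definitions.''
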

\begin{proof}%
It 
follows from the definitions that, for each 
oracle $\mathcal Q$,
$(\twolwppplus)^{\mathcal Q}
\subseteq \indexedpolylwpp^{\mathcal Q}$.  
Since, as mentioned above, 
Theorem~\ref{thm:lwpp-robust-poly} clearly relativizes, 
we have that for each oracle $\mathcal Q$,
$(\twolwppplus)^{\mathcal Q} \subseteq 
\lwpp^{\mathcal Q}$.  Thus, for the oracle $\mathcal{O}$
of Theorem~\ref{thm:lwppplus-not-robust}, 
$\lwpp^{\mathcal O}
\not\subseteq (\lwpp^+)^{\mathcal O}$.
\end{proof}

Let us give the definition of the complexity class
that is now 
known as $\rm UP_{\leq 2}$.  This class was 
introduced by 
Beigel~\cite{bei:c:up1} (in which the class 
was denoted $\rm U_2P$).

\begin{definition}
$\upleqtwo$ is the class of all sets $A$ such that there exists a function
  $g\in\sharpp$
  such that for all $x\in\sigmastar$,
\begin{align*}
  x \in A & \Longrightarrow g(x) \in \{1,2\} \\
  x \notin A & \Longrightarrow g(x) =0.
\end{align*}
\end{definition} 

Just as Valiant's class UP~\cite{val:j:checking} (``unambiguous NP'')
is the restriction of NP to allowing at most one accepting path,
Beigel's $\upleqtwo$ is the restriction of NP to allowing at most two
accepting paths.  Analogously to the fact that 1-to-1
complexity-theoretic one-way functions exist if and only if
$\rm P \neq UP$~\cite{gro-sel:j:complexity-measures,ko:j:operators},
it holds that 2-to-1 complexity-theoretic one-way functions exist if
and only if
$\rm P \neq
\upleqtwo$~\cite{hem-zim:tOutByJourExceptUPkStuffOnlyIsHere:balanced}.

Note that $\upleqtwo \subseteq \twolwppplus$. %
To prove Theorem~\ref{thm:lwppplus-not-robust}, we will 
prove the stronger result that there is an oracle relative to which even 
$\upleqtwo$ is not contained in 
contained in $\lwpp^+$.
\begin{theorem} \label{thm:lwppplus-not-robust-special-case}
There exists an oracle $\mathcal{O}$
such that 
$\upleqtwo^{\mathcal{O}} \not\subseteq ({\lwpp^+})^{\mathcal{O}}$. 
\end{theorem}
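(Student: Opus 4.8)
The plan is to construct the oracle $\mathcal{O}$ by a direct diagonalization. First I would fix a simple ``test language'' of the form $L_{\mathcal{O}} = \{0^n \mid \|\mathcal{O}^{=n}\| \in \{1,2\}\}$, where we will maintain the invariant that at every length $n$ either $\mathcal{O}^{=n} = \emptyset$ or $\|\mathcal{O}^{=n}\| \in \{1,2\}$. Under that invariant, a $\sharpp^{\mathcal{O}}$ machine can simply guess a string of length $n$ and count how many lie in $\mathcal{O}$, giving a $\sharpp^{\mathcal{O}}$ function whose value is $0$, $1$, or $2$; hence $L_{\mathcal{O}} \in \upleqtwo^{\mathcal{O}}$ automatically, and the real work is forcing $L_{\mathcal{O}} \notin (\lwpp^+)^{\mathcal{O}}$.

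To do that I would enumerate all pairs $(N_j, M_j)$ where $N_j$ is a nondeterministic poly-time oracle TM (candidate $\sharpp$ machine) and $M_j$ a deterministic poly-time oracle TM (candidate $\fp$ target function mapping $0^*$ to $\positivenumbers$), with a common polynomial time bound $p_j$. At stage $j$ we choose a fresh length $n_j$ large enough that earlier stages are untouched and that $2^{n_j}$ dwarfs $p_j(n_j)$. We compute $val = M_j^{B_{j-1}}(0^{n_j})$; as in the proof of Theorem~\ref{thm:lower-bound} we may exclude from future oracle extensions the (at most $p_j(n_j)$) strings of length $n_j$ that $M_j$ queries, so $val$ is frozen. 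If $val \le 0$ we can ignore this machine. Otherwise we must find a set $C \subseteq \Sigma^{n_j}$ (disjoint from those frozen queries) with $\|C\| \le 2$ such that $\accept_{N_j^{B_{j-1} \cup C}}(0^{n_j}) \neq val$ when $\|C\| \in \{1,2\}$ and $\accept_{N_j^{B_{j-1} \cup C}}(0^{n_j}) \neq 0$ when $\|C\| = 0$; any one such $C$ kills this $(N_j, M_j)$ pair as a candidate $\lwpp^+$ representation of $L_{\mathcal{O}}$.

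The main obstacle is proving such a $C$ always exists. Here I would invoke the polynomial-encoding machinery of Definition~\ref{def:poly-encoding} and Proposition~\ref{prop:poly-encoding}: the acceptance count $\accept_{N_j^{B_{j-1}\cup C}}(0^{n_j})$, as a function of the characteristic bits of $C$ over the $N = 2^{n_j} - O(p_j(n_j))$ free strings, is given by a \emph{multilinear} polynomial of total degree at most $p_j(n_j)$ (this is the $\sum \mono(\rho)$ with all $\sign(\rho) = 1$, so even the coefficients are nonnegative integers). Suppose for contradiction that no good $C$ exists: then this polynomial $s$ satisfies $s(\vec 0) = 0$ and $s(\vec z) = val$ for every $0/1$ vector $\vec z$ of weight $1$ and every such vector of weight $2$. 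The weight-$1$ conditions already force every coefficient of a degree-$1$ monomial to equal $val$; then a weight-$2$ condition forces the corresponding degree-$2$ coefficient to equal $-val$. Iterating, a weight-$k$ condition for $k \le p_j(n_j)+1$ forces, via inclusion–exclusion on the relevant sub-vectors, that the degree-$k$ coefficient is $(-1)^{k+1} val$ whenever it is not already pinned to $0$ by the degree bound. Since $\degree(s) \le p_j(n_j)$, pushing $k$ just past the degree bound — using that $N$ is far larger than $p_j(n_j)$, so such high-weight vectors exist — gives a telescoping identity of the shape $0 = \sum_{k=1}^{\degree(s)+1}\binom{\degree(s)+1}{k}(-1)^{k+1} val$, i.e. $val \cdot (1 - (1-1)^{\degree(s)+1}) = val = 0$ up to sign, contradicting $val \neq 0$. (An equivalent route: observe that the finite-difference operator of order $\degree(s)+1$ annihilates a degree-$\degree(s)$ polynomial, and the weight-$\le \degree(s)+1$ slice of the hypotheses exactly computes that difference, yielding $val = 0$.) Once this counting lemma is in hand, every stage succeeds, the invariant $\|\mathcal{O}^{=n_j}\| \le 2$ is preserved, and we conclude $L_{\mathcal{O}} \in \upleqtwo^{\mathcal{O}} \setminus (\lwpp^+)^{\mathcal{O}}$, proving Theorem~\ref{thm:lwppplus-not-robust-special-case} and hence, via $\upleqtwo \subseteq \twolwppplus$, Theorem~\ref{thm:lwppplus-not-robust}.
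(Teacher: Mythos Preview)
Your setup is fine and matches the paper: the test language, the enumeration, freezing $val$ by excluding $M_j$'s queries, and the goal of finding $C$ with $\|C\|\le 2$ that breaks the $\lwpp^+$ witness. The gap is in your combinatorial core. Your finite-difference argument needs the values of $s$ on \emph{all} weights up to $\degree(s)+1$, but the hypothesis ``no good $C$ exists'' only pins down $s$ at weights $0$, $1$, and $2$: that is the whole point of diagonalizing against $\upleqtwo$. From weight $0,1,2$ you correctly extract that the constant term is $0$, every linear coefficient is $val$, and every quadratic coefficient is $-val$, but you have no information whatsoever about weights $\ge 3$, so there is nothing to ``iterate'' and no order-$(\degree(s)+1)$ difference to compute. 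A degree-$p_j(n_j)$ multilinear polynomial is massively underdetermined by three slices. (Your side remark that the coefficients are nonnegative is also false: expanding the factors $(1-y_{i_k})$ in $\mono(\rho)$ produces negative coefficients even when all $\sign(\rho)=+1$.)

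What actually works, and what the paper does, is not a polynomial-method argument at all but a path-counting argument that exploits the $\sharpp$ (not $\gapp$) structure. From $s(\vec 0)=0$ one sees that every accepting path of $N_j^{B_{j-1}\cup\{\alpha\}}(0^{n_j})$ must query $\alpha$ and get ``yes,'' so the accepting-path sets $A_{\alpha_1}$ and $A_{\alpha_2}$ are \emph{disjoint} for $\alpha_1\neq\alpha_2$. A pigeonhole/averaging argument over the $\le p_j(n_j)$ negative queries per path then shows that for some pair $\gamma_1,\gamma_2$, adding $\gamma_2$ to $B_{j-1}\cup\{\gamma_1\}$ kills at most $\lfloor val/3\rfloor$ paths of $A_{\gamma_1}$, and symmetrically; since $A_{\gamma_1}$ and $A_{\gamma_2}$ are disjoint, $N_j^{B_{j-1}\cup\{\gamma_1,\gamma_2\}}(0^{n_j})$ has at least $2(val-\lfloor val/3\rfloor)\ge \tfrac{4}{3}val>val$ accepting paths, the desired contradiction. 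The crucial inequality ``at least $\tfrac{4}{3}val$'' uses that acceptance counts are nonnegative and that surviving paths from disjoint families stay distinct---information that the bare polynomial encoding does not carry.
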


\begin{proof}
First, we need a test language.  
For every set $B \subseteq \sigmastar$, define $L_B$ as
\[
  L_B = \{ 0^n \, | \, \|B^{=n}\| > 0 \}.
\]
If $B$ satisfies the condition 
that for every length $n$ it holds that $\|B^{=n}\| \le 2$,
then $L_B \in 
\upleqtwo^B$.

In this proof, we are going to centrally use a slightly  unusual notion
of what is a computation path of a nondeterministic polynomial-time
oracle Turing machine on a given oracle 
and a given input.  Our notion of a computation path of such a
machine relative to some oracle on some input will give not just the
nondeterministic choices of the path, but will also contain a set of
strings that ideally should capture exactly the set of queries that
are asked and answered Yes along the path, and a set of strings that
ideally should capture exactly the set of queries that are asked and
answered No along the path.  However, this is made a bit trickier as
to formalizing it by the fact that we'll be dealing with various
hypothetical oracles in our proof, and so sometimes things will fail to 
be a computation path of a machine relative to an oracle due to 
the action of the machine on the given input under the given oracle not 
being in harmony with nondeterministic bits or sets that are specifying
the computation path.  So our basic notion of a path will not enforce
much on the two sets; rather, only when asserting that the path is 
in harmony with a machine under some oracle on a certain input will we
require that the ``natural'' things hold.

\begin{definition}
\begin{enumerate}
\item
A \emph{computation path} is a triple: a binary string, a set of
strings, and second set of strings that is disjoint from the first 
set of strings.

\item 
(This is implicit in the above part of the 
definition, but is here stated explicitly for clarity, as this 
will be tacitly but importantly drawn on in our proof.)
Two computation paths
$\rho_1$ and $\rho_2$ are \emph{equal} if and only if 
their first components are identical, and their second components
are identical, and their third components are identical.

\item\label{pro}
A computation path $\rho$ is said to be 
\emph{a computation path 
of a 
nondeterministic polynomial-time
oracle Turing machine $N$ with oracle $\calo$
on input $x$} if $\rho$ consists of a binary string
$\rho_{choices}$, a set of strings
$Q^+(\rho)$, and a set of strings 
$Q^-(\rho)$, that satisfy all of the following conditions.
\begin{enumerate}
\item 
$Q^+(\rho) \subseteq \calo$ and 
$Q^-(\rho) \subseteq \overline{\calo}$.\footnote{Note that 
this implies that 
$Q^+(\rho)$ and $Q^-(\rho)$ are disjoint.}
\item When one runs machine $N$ with oracle 
$Q^+(\rho)$ on input $x$, there is a path $r$ in that machine's 
computation tree such that all the following hold:
\begin{enumerate}

\item\label{foo} the  sequence of 
nondeterministic choices in $r$ is precisely 
$\rho_{choices}$ (i.e., the path makes precisely 
$|\rho_{choices}|$ (binary) choices and 
$\rho_{choices}$ lists those choices in the order they are made);
and 

\item\label{bar} the set of strings queried along path $r$ is precisely 
$Q^+(\rho) \cup Q^-(\rho)$.

\end{enumerate}
\end{enumerate}

\item If in the above we after  
\ref{foo} and \ref{bar}
add new third condition that
states ``the path $r$ accepts,'' then that is our definition of
computation path $\rho$ being 
\emph{an accepting computation path of nondeterministic
polynomial-time oracle Turing machine $N$ 
with oracle $\calo$ on input $x$}.
\end{enumerate}
\end{definition}

Given an oracle $\mathcal{O}$, a computation
path $\rho$ may or may not be a computation path of a
nondeterministic Turing machine $N$ relative to $\mathcal{O}$ on 
input $x$.  For example, if what the computation
path specifies is inconsistent
with the machine's actions on that input and oracle, then 
it would
not be a computation path of that machine with that oracle and that input.
As other examples, if relative to the machine's action on that oracle
and that input the $\rho_{choices}$ 
of a computation
path has too few or too many 
bits to be precise yield a complete, actual 
path (in the traditional sense of the word, not the more encumbered one
we are using here) through the machine's action,
or if one of the two ``$Q$'' sets contains a 
string that is not queried by the machine when it is run on $x$ with
oracle $\calo$ making nondeterministic choices $\rho_{choices}$, 
then the computation path would
not be a computation path of that machine with that oracle and that input.

Let $( N_j, M_j, p_j)_{j \ge 1}$ be an enumeration of all triples such that
$N_j$ is a nondeterministic polynomial-time oracle Turing machine, 
$M_j$ is a deterministic polynomial-time oracle Turing machine
computing a function, and $p_j$
is a monotonically increasing polynomial such that the running time of both $N_j$ and $M_j$ is
bounded by $p_j$ regardless of the oracle.  We construct the oracle $B$ in
stages.  In stage $j$, we decide the membership in $B$ of strings
of length $n_j$ and extend the initial segment $B_{j-1}$ of $B$ 
to $B_j$.  Initially, we set $B_0 = \emptyset$.

{\boldmath\bf Stage $j$, where $j \ge 1$:} 
Let $n_j$ be large enough that:
(a)~$n_j > p_{j-1}(n_{j-1})$ (to ensure that the previous stages 
are not affected), and  
(b)~$2^{n_j} - p_j(n_j) > 6p_j(n_j) + 1$.
 
We diagonalize against nondeterministic
polynomial-time oracle Turing machine $N_j$ and deterministic
polynomial-time oracle Turing machine $M_j$. 
That is, we make sure that $L_B$ is not decided according to the 
definition of $\lwpp^+$ by $\sharpp$ function $g$
computed by $N_j$ together with $\fp$ function $f$ computed by $M_j$
(see Definition~\ref{def:lwppplus}). 
 Let $val$ be the value
computed by $M_j^{B_{j-1}}(0^{n_j})$.  
Because of the condition $0 \notin \range(f)$ in the definition of 
$\lwpp^+$, we 
can 
assume that $val \not= 0$.
(If $val =0$ then we can right away go to stage $j+1$.)

Let
\[
  T = \{ w \in \Sigma^{n_j} \, | \, M_j^{B_{j-1}}(0^{n_j}) \text{ queries } w \}.
\]
Note that $\|T\| \le p_j(n_j)$ since the computation time of
$M_j^{B_{j-1}}(0^{n_j})$ is bounded by $p_j(n_j)$.
In the following, we will never add any string
from $T$ to the oracle.  This ensures that the value $val$ computed by
$M_j^{B_{j-1}}(0^{n_j})$ is never changed when we replace oracle $B_{j-1}$ by $B_j$. 

$(\ast\ast)$ We (and we will soon prove that such a set must exist) 
choose a set $C \subseteq \Sigma^{n_j} - T$
such that%

\begin{itemize}
   \item $\|C\|  \in \{ 1, 2 \}$ and  
     $\accept_{N_j^{B_{j-1}  \cup C}}(0^{n_j}) \not= val$, or 
   \item $\|C\|  = 0$   and $\accept_{N_j^{B_{j-1}  \cup C}}(0^{n_j}) \not= 0$.
\end{itemize}
Let $B_j = B_{j-1} \cup C$.

\smallskip

\noindent
{\boldmath\bf End of Stage $j$.}

This construction guarantees that
for each $n$, $\|B^{=n}\| \le 2$ and thus
$L_B \notin ({\lwpp^+})^B$.  
Thus our proof is complete if we can 
show that it is always possible to find a set $C$ satisfying 
$(\ast\ast )$.  We now state and prove that as 
Claim~\ref{claim:can-extend-lwppplus}.

\begin{claim} \label{claim:can-extend-lwppplus}
  For each $j \ge 1$, there exists a set 
  $C$ satisfying $(\ast\ast)$.
\end{claim}
\begin{proof}[Proof of Claim~\ref{claim:can-extend-lwppplus}]
Suppose that in stage $j$ no set $C$ satisfying $(\ast\ast)$ exists.
Then for every  $C \subseteq \Sigma^{n_j} - T$, the
following holds:
\begin{align}
   \|C\|  \in \{ 1, 2 \} & \Longrightarrow  
     \accept_{N_j^{B_{j-1}  \cup C}}(0^{n_j}) = val, \text{ and} \label{eq:equals-val}
       \\ 
   \|C\|  = 0 & \Longrightarrow  \accept_{N_j^{B_{j-1}  \cup C}}(0^{n_j})
     = 0. \nonumber
\end{align}
It follows that 
(i) $N_j^{B_{j-1}}(0^{n_j})$ has no accepting computation paths, and (ii)
for each string $\alpha \in \Sigma^{n_j} - T$,
 $N_j^{B_{j-1} \cup \{ \alpha\} }(0^{n_j})$ has 
exactly $val$ distinct accepting 
computation
paths
$\rho_1(\alpha), \rho_2(\alpha), \ldots, \rho_{val}(\alpha)$.
For each $\alpha \in \Sigma^{n_j} - T$, let $A_\alpha$ be the set of
accepting computation
paths of $N_j^{B_{j-1} \cup \{ \alpha\} }(0^{n_j})$.
Then for each $\alpha \in \Sigma^{n_j} - T$, we have that
$\| A_{\alpha} \| = val$.

Since $N_j^{B_{j-1}}(0^{n_j})$ has no accepting computation
paths, we claim that 
for each $\alpha_1$ and $\alpha_2$,
$\alpha_1 \neq \alpha_2$,
in $\Sigma^{n_j} - T$, it holds that 
$A_{\alpha_1} \cap A_{\alpha_2} =
\emptyset$. Why? 
Let $\alpha_1, \alpha_2 \in \Sigma^{n_j} - T$ be distinct and let
$\rho$ be any computation path in $A_{\alpha_1}$.  By definition,
$\rho$ is an accepting computation path of $N_j^{B_{j-1} \cup \{
  \alpha_1\} }(0^{n_j})$. Suppose $\alpha_1 \notin Q^+(\rho)$.  
Note that in light of that supposition we either have that 
(a)~$\alpha_1 \in Q^-(\rho)$ and $\alpha_1$ is queried by 
$N_j^{B_{j-1} \cup \{
  \alpha_1\} }(0^{n_j})$, or 
(b)~$\alpha_1 \not\in Q^+(\rho) \cup 
Q^-(\rho)$ and $\alpha_1$ is not queried by 
$N_j^{B_{j-1} \cup \{
  \alpha_1\} }(0^{n_j})$.
In each of those two cases, however, 
it is clear that 
$\rho$ will be 
an accepting computation path of $N_j^{B_{j-1}}(0^{n_j})$, which is
a contradiction because by our assumption, $N_j^{B_{j-1}}(0^{n_j})$ has
no accepting computation paths.  Hence we have shown that $\alpha_1
\in Q^+(\rho)$.  But this implies that $\rho$ is not a computation
path of $N_j^{B_{j-1} \cup \{   \alpha_2\} }(0^{n_j})$ (since
$\alpha_1
\in Q^+(\rho)$ yet 
$\alpha_1 \notin B_{j-1} \cup \{   \alpha_2\}$), and therefore $\rho
\notin A_{\alpha_2}$.

Now we define for each $\alpha \in \Sigma^{n_j} - T$
\begin{multline} \label{eq:conflicting}
  \confl(\alpha) = \{ \beta \in \Sigma^{n_j} - T \, | \,
          \mbox{ there are at least } \left\lfloor val/3 \right\rfloor +1
          \mbox{ computation paths in } A_{\alpha}\\
           \mbox{ that are not 
          computation paths of }
          N_j^{B_{j-1} \cup \{ \alpha , \beta \} }(0^{n_j})   \}.
\end{multline}
We will show that for each $\alpha \in \Sigma^{n_j} - T$,
\[
  \|\confl(\alpha)\|  \le 3 p_j(n_j).
\]
Fix some string $\alpha\in \Sigma^{n_j} - T$.
To get a contradiction, suppose that $\|\confl(\alpha)\|  > 3 p_j(n_j)$.
Let $\beta$ be some string in $\confl(\alpha)$.
Since adding $\beta$ to $B_{j-1} \cup \{ \alpha \}$ causes
at least $\left\lfloor val/3 \right\rfloor + 1$ computation paths in 
$A_{\alpha}$ to disappear from  $N_j^{B_{j-1} \cup \{ \alpha , \beta \} }(0^{n_j})$, this 
means that
\[
  \| \{ \rho\in A_{\alpha} \, | \, \beta \in Q^-( \rho) \} \|
  \ge
  \left\lfloor val/3 \right\rfloor + 1.
\]  
Then
\[
  \sum_{\beta \in \confl(\alpha)}  \| \{ \rho\in A_{\alpha} \, | \, \beta \in Q^-( \rho) \} \|
  > 3 p_j(n_j)(\left\lfloor val/3 \right\rfloor + 1) \ge p_j(n_j) \cdot val,
\]
and thus
\[
   \sum_{\beta \in  \Sigma^{n_j} - T}  \| \{ \rho\in A_{\alpha} \, | \, \beta \in Q^-( \rho) \} \| > p_j(n_j) \cdot val.
\]
Hence it follows that
\begin{equation} \label{eq:lower-bound-queries}
  \sum_{\rho \in A_{\alpha}} \|  Q^-(\rho) \| >  p_j(n_j) \cdot val.
\end{equation}
On the other hand, because each computation path $\rho$ in $A_{\alpha}$ is of
length at 
most $p_j(n_j)$, for each 
$\rho \in A_{\alpha}$,
\[
  \|Q^-(\rho)\| \le p_j(n_j).
\]
Since $\|A_{\alpha}\| = val$, it follows that
\[
   \sum_{\rho\in A_{\alpha}} \| Q^-(\rho)\| \le p_j(n_j) \cdot val,
\]
which contradicts Eqn.~(\ref{eq:lower-bound-queries}).
Hence we have shown that
for each $\alpha \in \Sigma^{n_j} - T$,
\begin{equation} \label{eqn:upper-bound-conflicting}
  \|\confl(\alpha)\|  \le 3 p_j(n_j).
\end{equation}
Because we have chosen $n_j$ large enough to ensure that  $\| \Sigma^{n_j} - T\| > 6p_j(n_j)+1$, 
Eqn.~(\ref{eqn:upper-bound-conflicting}) 
implies that there exist distinct $\gamma_1, \gamma_2 \in\Sigma^{n_j} - T$ 
such that $\gamma_1 \notin \confl(\gamma_2)$ and
 $\gamma_2 \notin \confl(\gamma_1)$.
Now consider 
\[
  N_j^{B_{j-1} \cup \{ \gamma_1 , \gamma_2 \} }(0^{n_j}).
\]
Since $\gamma_2 \notin \confl(\gamma_1)$, there are at most  
$\left\lfloor val/3 \right\rfloor $
computation
paths in  $A_{\gamma_1}$ that are not computation paths of $N_j^{B_{j-1} \cup \{ \gamma_1 , \gamma_2 \} }(0^{n_j})$.  
Further, because $\|A_{\gamma_1}\| = val$, it follows that there are at least
$val - \left\lfloor val/3 \right\rfloor$ 
computation paths in $A_{\gamma_1}$ that
{\it are} computation paths of $N_j^{B_{j-1} \cup \{ \gamma_1 , \gamma_2 \}
}(0^{n_j})$.

The same reasoning applies when we swap the roles of $\gamma_1$ and
$\gamma_2$:
Since $\gamma_1 \notin \confl(\gamma_2)$, there are at most  $\left\lfloor val/3 \right\rfloor $
computation 
paths in  $A_{\gamma_2}$ that are not computation paths of $N_j^{B_{j-1} \cup \{ \gamma_1 , \gamma_2 \} }(0^{n_j})$.  
Further, because $\|A_{\gamma_2}\| = val$, it follows that there are at least
$val - \left\lfloor val/3 \right\rfloor$ computation
paths in $A_{\gamma_2}$ that
{\it are} computation paths of $N_j^{B_{j-1} \cup \{ \gamma_1 , \gamma_2 \}
}(0^{n_j})$.

For the reasons given earlier in the proof,
$A_{\gamma_1}$ and  $A_{\gamma_2}$ are disjoint sets.
Hence there are at least $2(val - \left\lfloor val/3 \right\rfloor)$
(distinct!) computation paths in  $A_{\gamma_1} \cup  A_{\gamma_2}$
that are computation paths of $N_j^{B_{j-1} \cup \{ \gamma_1 , \gamma_2 \}
}(0^{n_j})$.  Also, all computation paths in $A_{\gamma_1}$ and
$A_{\gamma_2}$ are accepting.  Note that for any two distinct
computation paths $\rho_1, \rho_2 \in A_{\gamma_1} \cup A_{\gamma_1}$
of $N_j^{B_{j-1} \cup \{ \gamma_1 , \gamma_2 \}}(0^{n_j})$, the
corresponding nondeterministic choices ${\rho_1}_{choices}$ and 
${\rho_2}_{choices}$ must be different.  It follows that 
$N_j^{B_{j-1} \cup \{ \gamma_1 , \gamma_2 \}}(0^{n_j})$ has at least
$2(val - \left\lfloor val/3 \right\rfloor) \ge 2val - \frac{2}{3}val =
\frac{4}{3}val$ accepting computation paths.
Since 
$val \in \positivenumbers$,
this contradicts Eqn.~(\ref{eq:equals-val}) 
if we set $C = \{ \gamma_1, \gamma_2\}$.
This completes the proof of Claim~\ref{claim:can-extend-lwppplus}.
\end{proof}
Since 
Claim~\ref{claim:can-extend-lwppplus} was all that remained in our proof of 
Theorem~\ref{thm:lwppplus-not-robust-special-case},
that theorem itself is now proven.
\end{proof}

\section{\boldmath On Multiple-Target $\ceqp$}\label{s:ceqp}
We show that the class 
$\ceqp$ (see Definition~\ref{d:ceqp-def}) is robust in the sense that 
even if one changes the number of target values of acceptance-path 
cardinality from 1
to a polynomial, the class remains 
unchanged.

\begin{definition}  \label{def:ceqp-indexed}
Let $r$ be any function mapping from $\naturalnumbers$ to
$\naturalnumbers$.
Then the class
$\indexedceqp{r}$ is the class of all sets $A$ such that there exists
a nondeterministic polynomial-time Turing machine $N$
  and a function $f \in \fp$ that maps
 to
  $\integers $  such that for each $x\in\sigmastar$,
\[
  x \in A  \Longleftrightarrow \text{ there exists } i \in \{ 1, 2,
  \ldots , r(|x|)\} \text{ such that } \accept_N(x) = f(\langle x, i \rangle).
\]
\end{definition} 

\begin{definition} \label{def:indexedpolyceqp}
\[
   \indexedpolyceqp = \bigcup_{c \in
     \positivenumbers} \indexedceqp{(n^c+c)}.
\]
\end{definition}

\begin{theorem} \label{thm:ceqp-robust-poly}
$\indexedpolyceqp = \ceqp$.
\end{theorem}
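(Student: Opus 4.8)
The plan is to follow the template of the proof of Theorem~\ref{thm:lwpp-robust-poly}, but in a streamlined form: since the $\gapp$\nobreakdash-characterization of $\ceqp$ in Theorem~\ref{thm:ceqp-as-gap} only tests whether a $\gapp$ value equals $0$, the ``offset'' step of the LWPP argument (subtracting $\prod_i f(\langle 0^{|x|},i\rangle)$ to move the distinguished value away from $0$) is not needed here, so the argument is actually shorter. The easy inclusion $\ceqp \subseteq \indexedpolyceqp$ holds because $\ceqp = \indexedceqp{1}$ (compare Definitions~\ref{d:ceqp-def} and~\ref{def:ceqp-indexed} with $r\equiv 1$, using an $\fp$ renaming of the target function) and $1 \le n^c+c$ for every $c \in \positivenumbers$; as usual, loosening the list of allowed targets never removes a set from the class.

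For the reverse inclusion, let $A \in \indexedpolyceqp$ be witnessed by a nondeterministic polynomial-time machine $N$, a function $f \in \fp$ with values in $\integers$, and $r(n) = n^c+c$, so that for every $x$ we have $x \in A$ if and only if $\accept_N(x) = f(\langle x,i\rangle)$ for some $i \in \{1,\dots,r(|x|)\}$. First I would observe that the map $\langle x,i\rangle \mapsto \accept_N(x) - f(\langle x,i\rangle)$ is a $\gapp$ function: $\accept_N \in \sharpp \subseteq \gapp$ and $f \in \fp \subseteq \gapp$ by Closure Property~\ref{closure1}, and $\gapp$ is closed under subtraction by Closure Property~\ref{coraddsubmult}. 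Then, exactly as in the proof of Theorem~\ref{thm:lwpp-robust-poly}, Closure Property~\ref{closure4} (after the harmless shift of the pairing argument that turns ``$1\le i\le r(|x|)$'' into the ``$0\le i\le q(|x|)$'' shape of that closure property, with $q(n)=n^c+c-1\ge 0$) shows that
\[
  h(x) = \prod_{1 \le i \le r(|x|)} \bigl(\accept_N(x) - f(\langle x,i\rangle)\bigr)
\]
is a $\gapp$ function. Finally, $h(x)=0$ precisely when one of the factors vanishes, i.e., precisely when $x \in A$; hence by Theorem~\ref{thm:ceqp-as-gap}, $A \in \ceqp$, which completes the proof.

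I do not expect any genuine obstacle in this argument; the only points that deserve a sentence of care are (i) the index-range mismatch between the product $\prod_{1\le i\le r(|x|)}$ we want and the form $\prod_{0\le i\le q(|x|)}$ of Closure Property~\ref{closure4}, and (ii) the fact that, unlike in $\lwpp$, the target function $f$ here may depend on the whole input $x$ rather than only on $|x|$, which causes no trouble because Closure Property~\ref{closure4} builds the product out of a $\gapp$ function evaluated at $\langle x,i\rangle$. (Essentially the same computation would in fact give $\indexedceqp{r} \subseteq \ceqp$ for any polynomial-time computable, polynomially bounded $r$, but only $r(n)=n^c+c$ is needed for the statement as given.)
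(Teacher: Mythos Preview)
Your proposal is correct and is essentially identical to the paper's own proof: both form the $\gapp$ function $\prod_{1\le i\le r(|x|)}\bigl(\accept_N(x)-f(\langle x,i\rangle)\bigr)$ (the paper uses the opposite sign inside each factor, which is immaterial) and then invoke Theorem~\ref{thm:ceqp-as-gap}. Your extra remarks about the index-range shift and the dependence of $f$ on $x$ are fine bits of care that the paper leaves implicit.
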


\begin{proof}
It is easy to see that
$\ceqp \subseteq \indexedpolyceqp$.

To show $\indexedpolyceqp \subseteq \ceqp$, let $A$ be a set in
$\indexedpolyceqp$ defined by nondeterministic polynomial-time Turing machine $N$, $f \in \fp$, 
and polynomial
$r(n)= n^c+c$ according to Definitions~\ref{def:ceqp-indexed} and~\ref{def:indexedpolyceqp}.

Let $h_1$ be a function such that for all $x\in \sigmastar$ and $i \in\positivenumbers$,
\[
  h_1(\langle x, i \rangle) =  f(\langle x, i \rangle) -  \accept_N(x).
\]
We have $h_1 \in \gapp$ since $f \in\fp \subseteq \gapp$, $\accept_N
\in\sharpp \subseteq \gapp$, and
$\gapp$ is closed under subtraction~\cite{fen-for-kur:j:gap}.
We define $h_2$ such that for all $x\in\sigmastar$,
\[
  h_2(x) = \prod_{1 \leq i \leq r(|x|)} h_1(\langle x, i \rangle).
\]
By Closure Property~\ref{closure4}, $h_2\in \gapp$.
Note that for all $x \in \sigmastar$,
\[
  h_2(x) = \prod_{1 \leq i \leq r(|x|)} \left( f(\langle x, i \rangle) - \accept_N(x) \right).
\]
{}From the above equality, for each $x \in \sigmastar$ we have 
that $h_2(x) = 0$
if and only if 
$(\exists i \in \{ 1, 2,
  \ldots , r(|x|)\})[\accept_N(x) = f(\langle x, i \rangle)]$.
But the 
$\indexedpolyceqp$ structures defining $A$ specify that 
for 
each $x \in \sigmastar$ we have 
$x\in A$ if and only if 
$(\exists i \in \{ 1, 2,
  \ldots , r(|x|)\})[\accept_N(x) = f(\langle x, i \rangle)]$.
So $x\in A$ if and only if the $\gapp$ function 
$h_2(x)$ equals~0, and so by Theorem~\ref{thm:ceqp-as-gap} 
it follows that $A \in \ceqp$.%
\end{proof}

\section{Conclusions and Open Questions}
In this paper, we proved that LWPP and WPP are robust
enough that they remain unchanged when their single target gap is
allowed to be expanded to polynomial-sized 
lists.  We then applied this new
robustness of LWPP to show that the PP-lowness of 
the Legitimate Deck
Problem follows from a weaker hypothesis than was previously known.
In doing so, we provided enhanced evidence that 
the Legitimate Deck
Problem is not NP-hard or even NP-Turing-hard.
We also showed: that the polynomial-target robustness of LWPP that we 
have established is optimal (i.e., cannot be extended to any 
superpolynomial number of targets) with
respect to relativizable proofs; that for the \#P-based
analogue of the ($\gapp$-based) class $\lwpp$, in some relativized 
worlds even two targets give more languages than one target;
that our robustness of LWPP holds even when one simultaneously 
expands both the acceptance target-gap set and the rejection
target-gap set to be polynomial-sized lists; that our main results 
also hold for WPP; and that $\ceqp$ is polynomial-target robust.

Regarding the Reconstruction Conjecture, we proved as a consequence 
of our results that if 
there exists a polynomial $q$ such that the 
$q$-Reconstruction Conjecture holds, then the Legitimate 
Deck Problem is both PP-low and $\ceqp$-low.  Since NP is 
widely believed not to be PP-low or $\ceqp$-low, 
this provides strengthened evidence 
that the Legitimate Deck Problem is not 
NP-hard or even NP-Turing hard (since otherwise even the 
``Poly''-Reconstruction Conjecture must fail, yet even the far stronger 
Reconstruction Conjecture is generally believed to hold).

A natural open problem is whether the Legitimate Deck Problem is
$\Sigma_k^p$-low~\cite{sch:j:low}
for some $k$, i.e., whether for some $k$ it holds
that 
$(\Sigma_k^p)^{\mbox{\scriptsize{}Legitimate Deck}} = 
\Sigma_k^p$.  
Proving that that holds---though we mention that 
this has been a known open issue 
for more than two decades (see~\cite{hem-kra:j:legitimate-deck})---%
would imply that the Legitimate Deck Problem
cannot be NP-complete (even with respect to more flexible reductions
such as Turing reductions and strong nondeterministic
reductions~\cite{lon:j:nondeterministic-reducibilities})
unless the polynomial hierarchy collapses.  
%
%
%
%
%
%

\newcommand{\etalchar}[1]{$^{#1}$}

\clearpage
\appendix

\section{Proof of Theorem~\ref{thm:polypolylwpp-in-polylwpp}}

{\bf Theorem~\ref{thm:polypolylwpp-in-polylwpp}.}  
$\indexedlwpp{(\mathit{Poly},\mathit{Poly})} = \indexedpolylwpp$.

\smallskip

\noindent
{\it Proof.} 
It is easy to see that
$ \indexedpolylwpp \subseteq  \indexedlwpp{(\mathit{Poly},\mathit{Poly})} $.

To show $\indexedlwpp{(\mathit{Poly},\mathit{Poly})}\subseteq \indexedpolylwpp$, let $B$ be a set in
$\indexedlwpp{(\mathit{Poly},\mathit{Poly})}$ defined by $g \in
\gapp$, $f_A, f_R \in \fp$, 
and polynomials
$r_A(n)= n^c+c$ and $r_R(n)= n^c+c$ according to Definitions~\ref{def:rArR-LWPP}
 and~\ref{def:polypolylwpp}.

Let $h$ be a function such that for all $x\in \sigmastar$ and $j \in\positivenumbers$,
\[
  h(\langle x, j \rangle) =  f_R(\langle 0^{|x|}, j \rangle) -  g(x).
\]
We have $h \in \gapp$ since $f_R \in\fp \subseteq \gapp$, $g \in\gapp$, and
as noted earlier $\gapp$ is closed under subtraction~\cite{fen-for-kur:j:gap}.
We define $\widehat{g}$ such that for all $x\in\sigmastar$,
\[
  \widehat{g}(x) = \prod_{1 \leq j \leq r_R(|x|)} h(\langle x, j \rangle).
\]
By Closure Property~\ref{closure4}, $\widehat{g}\in \gapp$.
Note that for all $x \in \sigmastar$,
\begin{equation} \label{eq:def:hat-g}
  \widehat{g}(x) = \prod_{1 \leq j \leq r_R(|x|)} \left( f_R(\langle 0^{|x|}, j \rangle) - g(x) \right).
\end{equation}
Let $\widehat{f}$ be a function such that for all
$\ell\in\naturalnumbers$ and  $i\in\positivenumbers$,
\begin{equation} \label{eq:def:hat-f}
  \widehat{f}(\langle 0^\ell, i\rangle) = \prod_{1 \leq j \leq r_R(\ell)} \left( f_R(\langle 0^{\ell}, j \rangle) -  f_A(\langle 0^{\ell}, i \rangle) \right).
\end{equation}
It is easy to see that $\widehat{f} \in \fp$.
It follows from Eqns.~(\ref{eq:def:hat-g}) and (\ref{eq:def:hat-f}) that for every $x\in\sigmastar$,
the following are true.
\begin{enumerate}
\item If there exists $j\in \{ 1, 2, \ldots , r_R(|x|)\}$ such that
  $g(x) = f_R(\langle 0^{|x|}, j\rangle)$ then $\widehat{g}(x) = 0$.
\item For each $i \in \{ 1, 2, \ldots , r_A(|x|) \}$, it holds that
$g(x) = f_A(\langle 0^{|x|}, i \rangle )$ implies that
$\widehat{g}(x) = \widehat{f}(\langle 0^{|x|}, i \rangle )$.
\end{enumerate} 
Hence for each $x\in\sigmastar$,
\begin{align*}
  x \notin B & \Longrightarrow \widehat{g}(x) = 0,\\
  x \in B & \Longrightarrow \text{ there exists } i \in \{ 1, 2,
  \ldots , r_A(|x|)\} \text{ such that } \widehat{g}(x) = \widehat{f}(\langle 0^{|x|}, i \rangle).
\end{align*}
Note also (see Eqn.~(\ref{eq:def:hat-f})) that for all $\ell\in\naturalnumbers$ and $i \in \{ 1, 2, \ldots
, r_A(\ell)\}$, it holds that 
$\widehat{f}(\langle 0^\ell, i\rangle) \not= 0$ because for each $\ell
\in\naturalnumbers$, the sets $A_\ell$ and $R_\ell$ in
Definition~\ref{def:rArR-LWPP} are disjoint.  
By Definition~\ref{def:indexedpolylwpp}, it thus follows that $B \in \indexedpolylwpp$.
\qed

\end{document}